\def\lipics{0}
\def\arxiv{1}
\newtheorem{theorem}{Theorem}[section]
\newtheorem{corollary}[theorem]{Corollary}
\newtheorem{lemma}[theorem]{Lemma}
\newtheorem{proposition}[theorem]{Proposition}
\newtheorem{definition}[theorem]{Definition}
\newtheorem{example}[theorem]{Example}
\theoremstyle{remark}
\newtheorem*{remark}{Remark}
\newtheorem{conjecture}[theorem]{Conjecture}
\DeclareMathOperator{\Ex}{\mathbb{E}}           % expected value 
\DeclareMathOperator*{\Exs}{\mathbb{E}}           % expected value 
\providecommand{\bin}{\mathrm{bin}}
\providecommand{\myparagraph}[1]{\subparagraph{#1}}
\providecommand{\rsz}{lem:single-unbiased}
\providecommand{\myparagraph}[1]{\paragraph{#1}}
\providecommand{\rsz}{rsz}
\title{Biasing Boolean Functions and Collective Coin-Flipping Protocols over Arbitrary Product Distributions}
\titlerunning{Biasing Boolean Functions}
\author{Yuval Filmus}{Computer Science Department, Technion, Haifa, Israel \and \url{http://www.cs.toronto.edu/~yuvalf/}}{yuvalfi@cs.technion.ac.il}{https://orcid.org/0000-0002-1739-0872}{Taub Fellow --- supported by the Taub Foundations. The research was funded by ISF grant 1337/16.}
\author{Lianna Hambardzumyan}{School of Computer Science, McGill University, Montreal, QC, Canada}{lianna.hambardzumyan@mail.mcgill.ca}{}{}
\author{Hamed Hatami}{School of Computer Science, McGill University, Montreal, QC, Canada \and \url{https://www.cs.mcgill.ca/~hatami/}}{hatami@cs.mcgill.ca}{https://orcid.org/0000-0002-4732-434X}{}
\author{Pooya Hatami}{Department of Computer Science, UT Austin, Austin, TX, USA \and \url{https://pooyahatami.org/}}{pooyahat@gmail.com}{https://orcid.org/0000-0001-6361-7604}{}
\author{David Zuckerman}{Department of Computer Science, UT Austin, Austin, TX, USA \and \url{http://www.cs.utexas.edu/~diz/}}{diz@cs.utexas.edu}{}{}
\authorrunning{Y. Filmus, L. Hambardzumyan, H. Hatami, P. Hatami and D. Zuckerman}
\keywords{Boolean function analysis, coin flipping}
\title{\LARGE{Biasing Boolean Functions and Collective Coin-Flipping Protocols over Arbitrary Product Distributions}}
\title{{\Huge \textbf{Appendix}} \\ Biasing Boolean Functions and Collective Coin-Flipping Protocols over Arbitrary Product Distributions}
\author{\Large{Yuval Filmus}\thanks{Taub Fellow --- supported by the Taub Foundations. The research was funded by ISF grant 1337/16.}\\ \small{Computer Science Department,} \\ \small{Technion} \\ \small{\texttt{yuvalfi@cs.technion.ac.il}} \and \Large{Lianna Hambardzumyan}\\ \small{School of Computer Science,}\\ \small{McGill University}\\ \small{\texttt{lianna.hambardzumyan@mail.mcgill.ca}} \and \Large{Hamed Hatami}\thanks{Supported by an NSERC grant.} \\ \small{School of Computer Science,}\\ \small{McGill University}\\ \small{\texttt{hatami@cs.mcgill.ca}} \and \Large{Pooya Hatami}\thanks{Supported by a Simons Investigator Award (\#409864, David Zuckerman)}\\ \small{Department of Computer Science,}\\ \small{UT Austin}\\ \small{ \texttt{pooyahat@gmail.com}} \and \Large{David Zuckerman}\thanks{Supported by NSF Grant CCF-1705028 and a Simons Investigator
Award (\#409864)}\\ \small{Department of Computer Science,}\\ \small{UT Austin}\\ \small{\texttt{diz@cs.utexas.edu}}}
\begin{document}

\maketitle

\begin{abstract}
The seminal result of Kahn, Kalai and Linial shows that a coalition of $O(\frac{n}{\log n})$ players can bias the outcome of \emph{any} Boolean function $\{0,1\}^n \to \{0,1\}$ with respect to the uniform measure. We extend their result to arbitrary product measures on $\{0,1\}^n$, by combining their argument with a completely different argument that handles very biased coordinates. 

We view this result as a step towards proving a conjecture of Friedgut, which states that Boolean functions on the continuous cube $[0,1]^n$ (or, equivalently, on $\{1,\dots,n\}^n$) can be biased using coalitions of $o(n)$ players. This is the first step taken in this direction since Friedgut proposed the conjecture in 2004. 

\smallskip

Russell, Saks and Zuckerman extended the result of Kahn, Kalai and Linial to multi-round protocols, showing that when the number of rounds is $o(\log^* n)$, a coalition of $o(n)$ players can bias the outcome with respect to the uniform measure. We extend this result as well to arbitrary product measures on $\{0,1\}^n$.

The argument of Russell et al.\ relies on the fact that a coalition of $o(n)$ players can boost the expectation of any Boolean function from $\epsilon$ to $1-\epsilon$ with respect to the uniform measure. This fails for general product distributions, as the example of the AND function with respect to $\mu_{1-1/n}$ shows. Instead, we use a novel boosting argument alongside a generalization of our first result to arbitrary finite ranges.
\end{abstract}

\section{Introduction}
How can distributed processors collectively flip a somewhat fair coin if some processors may try to bias the outcome? 
%The  design of protocols for reaching agreement in the presence of faulty processors is an important objective in distributed computing.  
In the \emph{Collective Coin-Flipping Problem}, a classical problem in distributed computing, $n$ processors wish to generate a single common random bit, even in the presence of faulty and possibly malicious processors. Collective coin-flipping protocols can be used to expedite \emph{Byzantine Agreement}~\cite{CD89} and are closely related to \emph{Leader Election Protocols}~\cite{Dodis06}. The problem has been considered in several scenarios, depending on the assumptions made on the type of the communication between the processors, the kind and number of faults, and the power of the adversary~\cite{CD89, BLS89, Dodis06, BL89}. 

A Boolean function $f\colon \{0,1\}^n\rightarrow \{0,1\}$, where $\{0,1\}^n$ is endowed with a product measure $\mu$, naturally corresponds to a \emph{single round collective coin-flipping protocol} in the \emph{perfect information} model introduced by Ben-Or and Linial~\cite{BL89}, where $n$ players each broadcast a bit according to a private distribution, and at the end, the output of the protocol is the value of $f$ on the broadcast string. An interesting and important concept in the design of collective coin-flipping protocols is \emph{resilience} against coalitions of a significant number of players who attempt to influence the output of the protocol towards a particular value.

A \emph{coalition} is a subset $S$ of players that have a particular desired value $b \in \{0,1\}$ in mind, and if possible, broadcast bits that set the output of the protocol to $b$.
We study the model where the coalition is allowed \emph{rushing}: the corrupt players may wait until all the other players broadcast their bits before deciding on what bit to broadcast. In other words, they \emph{succeed} on $x \sim \mu$ if it is possible to modify $x$ only on the coordinates in $S$ to obtain a string $y$ with $f(y)=b$; they \emph{fail} if the value of $f$ is already determined to be \emph{not} equal to $b$ by the bits broadcast by the players outside the coalition. The success of such a coalition can be easily quantified as the probability that the coalition succeeds on a random $x \sim \mu$. 

 Fix a parameter $\epsilon>0$. A protocol $f$ is said to be \emph{$\epsilon$-resilient against coalitions of $\ell$ players}  if no coalition of size at most $\ell$ succeeds with probability at least  $1-\epsilon$. How resilient can a function be against large coalitions? Over the uniform distribution, perhaps  the most natural candidate for a highly resilient function is the majority function, which can be easily seen to be resilient against $\Omega(\sqrt{n})$ size coalitions. However, somewhat surprisingly, it turns out that plain democracy is not the most effective way to be immune against the influence of  coalitions. Indeed, Ajtai and Linial~\cite{AL93} gave a \emph{randomized} construction of a Boolean function that is resilient against coalitions of size $\Omega(n/\log^2 n)$, significantly better than the $\Omega(\sqrt{n})$ bound of the majority function.  More recently, Chattopadhyay and Zuckerman~\cite{CZ16} gave an \emph{explicit} construction of a highly resilient function over the uniform measure.  This was a key ingredient in their breakthrough work that introduced explicit two-source extractors for polylogarithmic min-entropy. Subsequently, Meka~\cite{Meka17} gave an explicit construction of a monotone depth three Boolean function that is as resilient as the randomized construction of Ajtai and Linial. 
 
In this  article, we are mainly interested in the limitations of resilience.  The most classical theorem in this direction is  due to Kahn, Kalai, and Linial~\cite{KKL}, who proved  that, for the uniform distribution, no Boolean function is resilient against coalitions of size $\omega(n/\log n)$.  Closing the  gap between this bound and  the $\Omega(n/\log^2 n)$ construction of  Ajtai and Linial remains a longstanding open problem. 

Starting with the work of Ben-Or and Linial~\cite{BL89}, researchers have studied two natural ways to generalize the discussed protocols: First, allow players to broadcast longer messages, and second, allow many rounds. In this paper, we mostly focus on the latter generalization. 
In the multi-round setting, the voting procedure that is described above is repeated $r$ times: at every round, first the players who are not in the coalition broadcast their random messages, and then  the players in the coalition decide and broadcast their messages in an adversarial manner. When the players are sending single-bit messages, the outcome is decided by a function $f\colon (\{0,1\}^n)^r \to \{0,1\}$. 

The most efficient known protocols are due to Russell and Zuckerman~\cite{RZ01} and to Feige~\cite{Feige99}. In the case where players are allowed to send longer messages, they constructed $\log^\star n+ O(1)$ round protocols resilient against coalitions of size $\beta n$ for any $\beta<1/2$. In the case when players are allowed to broadcast single bit messages, their protocols use $(1-o(1))\log n$ rounds, and are still resilient against coalitions of size $\beta n$ for any $\beta<1/2$. For a discussion of various models and known upper and lower bounds, see a survey of Dodis~\cite{Dodis06}.

In the multi-round setting, the players in the  coalition have the disadvantage that they will not see the future-round votes of the other players before voting in the current round. Thus, it becomes significantly more difficult to prove limitations on resilience as $r$ grows, and naturally the known bounds are weaker.
  Russell, Saks and Zuckerman~\cite{RSZ02}, building upon the work of Kahn et al.~\cite{KKL}, showed that over the uniform measure, no Boolean function $f\colon (\{0,1\}^n)^r \to \{0,1\}$ is $\epsilon$-resilient against coalitions of size $\omega_\epsilon\left(\frac{r^2 n}{\log^{(2r-1)} n}\right)$, where $\log^{(2r-1)} n$ is an iterated logarithm. It follows as a simple corollary that $\Omega(\log^\star n)$ rounds are necessary in order for a protocol to be resilient against coalitions of size $\Omega(n)$. 
 
\medskip

The purpose of this paper is to generalize the above results from the uniform distribution to arbitrary product distributions on the Boolean cube. 
\medskip

A moment of reflection reveals that there are  major differences between the uniform distribution and the general case, and  
 indeed, prior to this work, it was not clear to us whether similar results were true for general product distributions. We will elaborate on this  later, but for now, we only mention that the coordinates $x_i$ that are not highly biased, i.e. $t \le \Pr[x_i=1] \le 1-t$ for some $t$ that is not too small, can be handled using the same argument as in Kahn et al.~\cite{KKL}. Similarly, the argument of  Russell et al.~\cite{RSZ02} can be used to analyze these coordinates in the multi-round setting. However, the highly biased coordinates behave very differently, and to handle those, we need to take an entirely new  approach, and employ  a new set of ideas. Indeed, our proofs for the highly biased case have almost no resemblance to those in previous works. 
 
 Our first theorem concerns  single round protocols. By combining the argument of Kahn, Kalai and Linial with an argument geared towards biased coordinates, we are able to show that these protocols can always be influenced towards a single value, with coalitions which are only slightly worse than those guaranteed by the KKL theorem.

\begin{theorem}
\label{thm:singleRoundSimplified}
Over any product distribution  $\mu$,  there is no  function $f\colon \{0,1\}^n  \to \{0,1\}$ that is $\epsilon$-resilient against coalitions of size $\omega_\epsilon(\frac{n\log \log n}{\log n})$ . 
\end{theorem}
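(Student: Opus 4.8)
The plan is to split the coordinates into a \emph{balanced} part and a \emph{very biased} part, handle the balanced part by the Kahn--Kalai--Linial machinery, handle the very biased part by a new argument, and combine the two. First, write $p_i=\Pr_\mu[x_i=1]$; relabelling the alphabet of a single coordinate replaces $f$ and $\mu$ by an equivalent instance (the same coalitions succeed towards the same targets), so we may assume $p_i\le 1/2$ for every $i$. Fix the threshold $\tau=1/\log n$ and put $A=\{i:p_i\ge\tau\}$ (the balanced coordinates) and $B=\{i:p_i<\tau\}$ (the very biased ones, each of which is $0$ with probability $>1-\tau$). For a coalition $S$ and a target $b\in\{0,1\}$ the success probability is $\Pr_{x\sim\mu}[\,\exists\,y_S:\ f(y_S,x_{\overline S})=b\,]$, and it suffices to produce $S$ with $|S|=O_\epsilon(\frac{n\log\log n}{\log n})$ and a value $b$ making this at least $1-\epsilon$; this implies the theorem.

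For the balanced coordinates I would invoke the argument of Kahn, Kalai and Linial~\cite{KKL,RSZ02}: since every coordinate of $A$ has bias in $[\tau,1-\tau]$, that argument applies to the restriction of $f$ to $A$ with only the usual $p$-biased modifications. One processes the coordinates of $A$ greedily, and as long as the current restriction $g$ of $f$ (a Boolean function of the still-free $A$-coordinates) has expectation bounded away from $0$ and $1$, the $p$-biased KKL inequality supplies a coordinate whose influence on $g$ in the direction of the chosen target is $\Omega_\epsilon(\frac{\log m}{m\log(1/\tau)})$, where $m$ is the number of free coordinates; moving that coordinate into the coalition and replacing $g$ by the OR (resp.\ AND) of its two restrictions along that coordinate raises the expectation towards the target by a comparable amount. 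Summing the reciprocals of these increments shows the process halts after $O_\epsilon(\frac{|A|\log(1/\tau)}{\log|A|})=O_\epsilon(\frac{|A|\log\log n}{\log|A|})$ steps with the expectation pushed past $1-\epsilon$ towards some value $b$. If $|A|\le\frac{n\log\log n}{\log n}$ I would instead simply place all of $A$ in the coalition; otherwise $\log|A|=\Theta(\log n)$ and the bound above is $O_\epsilon(\frac{n\log\log n}{\log n})$. Either way this yields $S_A\subseteq A$ of the required size that biases $f$ towards some value once the $B$-coordinates have been revealed.

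For the very biased coordinates there is no analogue in~\cite{KKL,RSZ02}, and this is the step I expect to be the main obstacle. When $\sum_{i\in B}p_i\le\epsilon/3$ it is easy: $\Pr_\mu[x_B=\mathbf 0]=\prod_{i\in B}(1-p_i)\ge 1-\epsilon/3$, so I reveal $B$, condition on the overwhelmingly likely all-zero outcome, and pass to $f|_{x_B=\mathbf 0}$ on $\{0,1\}^A$; if its expectation lies outside $(\epsilon/3,1-\epsilon/3)$ then already the empty coalition forces the appropriate constant with probability $\ge(1-\epsilon/3)^2\ge 1-\epsilon$, and otherwise the balanced analysis supplies $S_A$ and (as $x_A$ and $x_B$ are independent) the two success events compound to probability $\ge(1-\epsilon/3)^2\ge 1-\epsilon$. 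The substantive case is $\sum_{i\in B}p_i$ large, where revealing $B$ no longer lands on the all-zero string with good probability: here one must instead extract from $B$ a small coalition $S_B$, of size $O_\epsilon(\frac{n\log\log n}{\log n})$, together with a high-probability event on $B\setminus S_B$ after which $f$ collapses to a function of $A$ and of the controlled coordinates $S_B$ whose expectation is either extreme or in the range handled by the balanced analysis, at the cost of at most another $\log\log n$ factor. The hard part is precisely this reduction: very biased coordinates genuinely behave differently --- the $\AND$ function under $\mu_{1-1/n}$ shows one cannot in general boost an arbitrary expectation towards $1$ --- so the influence/hypercontractivity toolkit does not apply here and a genuinely different mechanism for exploiting heavily biased coordinates is needed.

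Finally, taking $S=S_A\cup S_B$ with the target $b$ produced above gives success probability $\ge 1-\epsilon$ and $|S|=O_\epsilon(\frac{n\log\log n}{\log n})$, so $f$ is not $\epsilon$-resilient against coalitions of that size. The $\log\log n$ loss over the KKL bound is exactly the price of the threshold $\tau=1/\log n$ separating the two regimes: it enters through the factor $\log(1/\tau)$ in the $p$-biased KKL inequality on $A$ and, in the large-$\sum p_i$ case, through the analysis of $B$.
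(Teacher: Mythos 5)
There is a genuine gap, and it sits exactly where you flag it: you have no argument for the highly biased block $B$ when $\sum_{i\in B}p_i$ is large, and that case is the entire novel content of the theorem (the balanced block and the trivial ``all-zeros likely'' case are routine). The paper closes this gap with a boosted-distribution argument: restrict $\mu$ to $B$, draw $k=O\bigl(\tfrac{1}{\epsilon}\log\tfrac{1}{\epsilon}\bigr)$ independent samples $y^1,\dots,y^k\sim\mu|_B$, and take the coalition $S=\supp(y^1\vee\dots\vee y^k)$, whose expected size is $O(k\alpha_0 n)=O_\epsilon(n/\log n)$ since each coordinate has bias at most $\alpha_0=1/\log n$. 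The key dichotomy (Lemma 2.3 in the paper, which works for an \emph{arbitrary}, not necessarily product, measure on $\{0,1\}^B$) is: either for $b=1$ (say) a set of $x$'s of measure $\ge\epsilon/2$ satisfies $\Pr_y[f(x\vee y)=b]\ge 1-\epsilon$, in which case with probability $\ge 1-\epsilon$ some $y^i$ realizes such an $x$ as a subset of $S$ and then $S$ can force $b$ for all but an $\epsilon$-fraction of inputs; or for the other value $b$, at least a $1-\epsilon/2$ fraction of $x$'s satisfy $\Pr_y[f(x\vee y)=b]\ge\epsilon$, in which case one of the $k$ independent $y^i$'s succeeds for all but an $\epsilon$-fraction of $x$'s, by a Markov argument. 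For Boolean range one of the two alternatives always holds, which is why the target $b$ is produced by the argument rather than chosen in advance (this is forced by the OR example under $\mu_{1/n}$). No hypercontractivity is used on $B$; the argument is purely probabilistic, and nothing like your proposed ``collapse $B$ to a small controlled set plus a high-probability event'' is available in general.

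Your combination step also differs from, and is weaker than, what the paper does, in a way that matters. You first build $S_A$ ``once the $B$-coordinates have been revealed,'' but then the target value and the success of $S_A$ depend on the revealed $B$-string, and (as your own AND/OR discussion shows) the biased block cannot be steered toward an arbitrary prescribed event. The paper inverts the order: for each fixing $y$ of the balanced coordinates it applies the biased-block proposition to $f_y$, and uses crucially that the coalition there is \emph{random from a distribution depending only on $\mu$, not on $f_y$}; averaging over $y$ then yields one fixed coalition $S$ and one fixed target $b_0$ that work for a constant fraction of $y$'s. Only then is the KKL/RSZ machinery applied on the balanced coordinates, to the Boolean indicator $h(y)=\mathbf{1}[I_S^{b_0}(f_y)\ge 1-\epsilon/2]$, producing $T\subseteq A$ of size $O_\epsilon\bigl(\tfrac{n\log\log n}{\log n}\bigr)$ with $I_T^1(h)\ge 1-\epsilon/2$; the final coalition is $S\cup T$. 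Your balanced-block greedy analysis is fine in spirit (and your accounting of where the $\log\log n$ loss enters is correct), but without the boosted-coalition mechanism for $B$ and the fix-the-coalition-by-averaging step, the proof does not go through.
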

(In contrast, the KKL theorem shows the impossibility of $\epsilon$-resilience against coalitions of size $\omega_\epsilon(\frac{n}{\log n})$.)

Next, we prove an impossibility result for resilience in the  multi-round setting over arbitrary product distributions. This was posed as an open problem by Russell et al.~\cite{RSZ02}. Here we face several new challenges.  Generalizing our argument for the biased  coordinates to the multi-round setting is far from straightforward, and combining it with the argument of Russell et al.~\cite{RSZ02} for the unbiased coordinates also requires new ideas.  

\begin{theorem}
\label{thm:multiRoundSimplified}
Let $n$ and $r<n$ be given. Over any product distribution  $\mu$ over $(\{0,1\}^n)^r$, there is no $r$-round coin-flipping protocol $f\colon (\{0,1\}^n)^r  \to \{0,1\}$ that is $\epsilon$-resilient against coalitions of size $\omega_\epsilon\left(\frac{n (\log^\star n)^2 }{\log^{(4r)}n}\right)$.
\end{theorem}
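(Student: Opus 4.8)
The plan is to run a round-by-round recursion in the spirit of Russell, Saks and Zuckerman~\cite{RSZ02}, built on top of a single-round biasing statement for functions with \emph{arbitrary finite range} (the natural generalization of Theorem~\ref{thm:singleRoundSimplified}) together with a boosting step tailored to general product measures. Write $\mu=\mu_1\times\cdots\times\mu_r$ and fix a coalition $S\subseteq[n]$ (the same set of players acting in every round) and a target bit $b$. Define value functions $h_r=f$ and, collapsing one round at a time,
\[
 h_{j-1}(w_1,\dots,w_{j-1}) \;=\; \Exs_{x_j\sim\mu_j}\Bigl[\,\max_{x_j'} h_j(w_1,\dots,w_{j-1},x_j')\Bigr],
\]
where $x_j'$ ranges over all round-$j$ strings agreeing with $x_j$ off $S$ and $w_k$ denotes the effective round-$k$ message (honest outside $S$, adversarial on $S$). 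Unwinding the definition shows that $h_0$ is exactly the probability that an adaptive coalition on the player set $S$ with goal $b$ succeeds, and that every $h_j$ takes only finitely many values. Taking all expectations over $\mu$, since the coalition always has the option of doing nothing we get $\Ex h_{j-1}\ge\Ex h_j\ge\cdots\ge\Ex h_r=\Pr[f=b]$; choosing $b$ to be the more likely value of $f$ keeps every value function's expectation $\ge 1/2$, which is precisely the regime in which biasing is possible. It remains to choose $S$ with $|S|=\omega_\epsilon\!\bigl(\tfrac{n(\log^\star n)^2}{\log^{(4r)}n}\bigr)$ so that $h_0\ge 1-\epsilon$.

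The single-round engine driving each collapse is the following generalization of Theorem~\ref{thm:singleRoundSimplified}: over any product measure, if $g\colon\{0,1\}^n\to[0,1]$ satisfies $\Ex g\ge\delta$, then a coalition of size at most $\tfrac{n\log\log n}{\log n}$ times a quantity depending only on $\delta$, $\epsilon$ and the number of values of $g$ can, by re-setting its own coordinates, raise $\Ex g$ to at least $1-\epsilon$ (and symmetrically lower it from $1-\delta$ to $\epsilon$). Following the proof of Theorem~\ref{thm:singleRoundSimplified}, I would split the coordinates into unbiased and highly biased ones: the unbiased coordinates are handled by applying the KKL influence bound~\cite{KKL} to the super-level sets $\{g\ge\theta\}$ of $g$ --- doing this for all thresholds with a single shared coalition is what forces the dependence on the number of values of $g$ and accounts for one of the extra iterated logarithms --- while the highly biased coordinates are handled by the new, non-Fourier argument already developed for the single-round theorem, now run in the real-valued setting and applied slice by slice to $h_j(w_1,\dots,w_{j-1},\cdot)$.

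Collapsing round $j$ with this engine boosts $h_j$, but in general produces a function $h_{j-1}$ with exponentially large range, so before continuing I would round $h_{j-1}$ down to a carefully chosen mesh, losing a controlled amount in expectation and re-entering the recursion with a smaller range bound $M_{j-1}$. Keeping track of the coalition size $\ell_j$ spent on round $j$, the accumulated failure probability $\epsilon_j$, and the range bound $M_j$, the single-round lemma yields a recursion whose parameters each grow by roughly one logarithm per level; after $r$ collapses this puts the iterated logarithm $\log^{(4r)}n$ in the denominator of $\sum_j\ell_j$, the iterated-logarithm depth growing about twice as fast per round as in~\cite{RSZ02} because of the extra $\log\log$ loss in the single-round lemma, and since the claimed bound is already $\Omega(n)$ once $r\gtrsim\log^\star n$ we may assume $r=O(\log^\star n)$, which replaces the $r^2$ of~\cite{RSZ02} by $(\log^\star n)^2$. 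The delicate point --- and, in my view, the main obstacle --- is the boosting step itself over general product measures: the uniform-measure fact that any Boolean function can be boosted from $\epsilon$ to $1-\epsilon$ outright is false (witness $\AND$ under $\mu_{1-1/n}$), so rather than boosting unconditionally at every level we must boost only in the favorable direction and only on the portion of the space where the conditional value has not already collapsed, and then certify that the expectation of the collapsed value function stays bounded away from $0$ so that the recursion can proceed. Carrying this out while simultaneously propagating the highly-biased-coordinate argument of Theorem~\ref{thm:singleRoundSimplified} through all $r$ rounds and re-discretizing at each step is where essentially all of the difficulty lies.
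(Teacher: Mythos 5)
There is a genuine gap, and it sits exactly where your proposal leans hardest. Your single-round ``engine'' --- that over any product measure a function $g$ with $\Ex g\ge\delta$ can be boosted to expectation $1-\epsilon$ \emph{towards a direction you choose} by a coalition of size $\tilde O(n/\log n)$ --- is false over general product distributions, and so is the claim that choosing $b$ to be the more likely value of $f$ puts you ``in the regime where biasing is possible.'' Take $f=\bigvee_i x_i$ under the $\tfrac{1}{2n}$-biased measure: $\Pr[f=0]\approx e^{-1/2}>1/2$, yet $I_S^0(f)=(1-\tfrac{1}{2n})^{n-|S|}$ stays bounded away from $1$ for every $|S|=o(n)$; the only achievable direction is the \emph{minority} value $b=1$. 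Theorem~\ref{thm:singleRoundSimplified} only guarantees biasing towards \emph{some} $b$ dictated by the function, and this is precisely what breaks your value-function recursion $h_{j-1}=\Exs_{x_j}\max_{x_j'}h_j$: after conditioning on the honest first-round messages $x$, the residual protocols $g_x$ may be biasable only towards different bits $b_x$ for different $x$, and towards neither bit uniformly, so fixing the target $b$ upfront and arguing $\Ex h_{j-1}\ge\Ex h_j\ge 1/2$ does not let the recursion continue. You acknowledge this obstacle in your last sentences (``boost only in the favorable direction \dots and certify that the expectation \dots stays bounded away from $0$''), but no mechanism is given for doing so, and this is not a technicality: it is the main new content of the multi-round argument.

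The paper closes this gap with machinery your proposal does not contain. In the inductive step (Lemma~\ref{lem:multi-biased}), one samples $M=O_\epsilon(1/\delta_{\ell-1})$ coalitions $T_1,\dots,T_M$ from a distribution $\nu_{\ell-1}$ that is independent of the protocol (this $f$-independence, guaranteed by Proposition~\ref{prop:single-biased} and carried through the induction, is itself essential and absent from your scheme, where $S$ is fixed relative to $f$), so that for almost every first-round message $x$ \emph{some} pair $(T_i,b)$ biases $g_x$. One then encodes which pair works into a function $h\colon\{0,1\}^n\to\{0,1\}^{\lceil\log M\rceil+1}\cup\{\dagger\}$ and applies the strengthened large-range single-round result, Theorem~\ref{thm:strong}, whose two special features --- the failure symbol $\dagger$ with smallness measured under all boosted measures $\mu^{(\ell)}$, and a coalition-size bound only polylogarithmic in the range size --- are exactly what make the first-round coalition $S$ able to force a single consistent pair $(i,b)$, after which $S\cup T_i$ biases the protocol. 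Your plan of ``re-discretizing'' a real-valued $h_{j-1}$ and applying KKL to super-level sets addresses neither the direction-of-bias problem nor the need for boosted-distribution control of the exceptional set, so as written the recursion cannot be completed.
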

As a result, over any product distribution $\mu$, $\Omega(\log^\star n)$ rounds are necessary in order for a protocol to be resilient against coalitions of size $\Omega(n)$.

\myparagraph{Influences} 

The notion of resilience of a Boolean function is related to the influences of variables and coalitions of variables. For a Boolean function $f\colon \{0,1\}^n\rightarrow \{0,1\}$ over a  product probability measure $\mu$,  the \emph{influence} of the $k$-th variable is defined as 
\[I_k(f):=\Pr_{x\sim \mu}\left[ f \text{ is not constant on } B_k(x)\right], \]
where 
\if\lipics1
$B_k(x):= \{y\in \{0,1\}^n \colon y_j = x_j \text{ for all } j \neq k \}$.

\else
\[ B_k(x):= \{y\in \{0,1\}^n \colon y_j = x_j \text{ for all } j \neq k \}. \]
\fi 
The influence of the $k$-th variable \emph{towards a value $b \in \{0,1\}$} is defined as 
\[I_k^b(f):=\Pr_{x\sim \mu}\left[b \in f(B_k(x))\right]. \]
Similarly, the \emph{influence of a coalition} $S\subseteq [n]$  towards a value $b \in \{0,1\}$ is defined  as 
\[I_S^b(f):=\Pr_{x\sim \mu}\left[ b \in f(B_S(x))\right], \]
where 
\if\lipics1
$B_S(x):= \{y\in \{0,1\}^n \colon y_j = x_j \text{ for all } j \notin S \}$.

\else
\[B_S(x):= \{y\in \{0,1\}^n \colon y_j = x_j \text{ for all } j \notin S \}. \] 
\fi
Equivalently, $I_S^b(f)$ is the probability that a  random $x \sim \mu$ can be modified on its $S$ variables such that  the output of $f$ becomes $b$.   

A function $f$ is \emph{not} $\epsilon$-resilient against coalitions of size $\ell$ if and only if there exists a  set $S$ of size at most $\ell$ and a value $b$ such that  $I^b_S(f) \ge 1-\epsilon$.   

\medskip

The seminal work of Kahn, Kalai and Linial introduced discrete Fourier-analytic techniques to the study of influences. Their main theorem, known as the KKL inequality, states that over the uniform measure, every unbiased Boolean function $f\colon \{0,1\}^n \rightarrow \{0,1\}$ has an influential variable. Formally, there exists $k$ such that $I_k(f)\geq \Omega(\frac{\alpha \log n}{n})$ when $\alpha \le \Ex[f(x)] \le 1-\alpha$. Let $b \in \{0,1\}$ satisfy $\Pr[f(x)=b] \ge \epsilon$. Then repeated applications of the KKL inequality imply the existence of a set $S$ with $|S|=O_\epsilon \left(\frac{n}{ \log n}\right)$ such that $I_S^b(f)\geq 1-\epsilon$. In particular, there are no   $\omega_\epsilon(n/\log n)$-resilient functions over the uniform distribution. 

The above argument shows that unless $f$ is already very biased towards $0$ or $1$, one can pick any $b \in \{0,1\}$ and find a small coalition $S$ that can bias $f$ towards $b$.  However, this is no longer true if we consider general product distributions. 

\begin{example}
\label{ex:pbiased}
Consider the $p$-biased distribution $\mu_p^n$ over $\{0,1\}^n$, i.e.\ each coordinate is $1$ with probability $p$. Set $p=1/n$ and let  $f$ be the OR function $\bigvee_{i=1}^n x_i$. Obviously, $\Ex[f] = 1-(1-p)^n \approx 1-\frac{1}{e}$, and yet for every $S$ with $|S|=o(n)$, we have $ I_S^0(f) = 1-(1-p)^{n-|S|} \approx 1-\frac{1}{e}$. In other words, despite the fact that the expected value of the function is bounded away from both $0$ and $1$, no small coalition can influence the output of the function towards $0$. However, this is not a counterexample to Theorem~\ref{thm:singleRoundSimplified} because any set $S$ with $|S|=1$ satisfies $I_S^1(f)=1$, and thus the function is not even $1$-resilient.  
\end{example}

As the above example illustrates, part of the difficulty of generalizing the coalition theorem of KKL is to figure out which $b\in \{0,1\}$ to bias towards. 

Using the notation $I^b_S(f)$,  Theorem~\ref{thm:singleRoundSimplified} can be restated as follows. 
\begin{theorem}[Theorem~\ref{thm:singleRoundSimplified} reformulated]
\label{thm:singleRoundReformulated}
Let  $f\colon \{0,1\}^n  \to \{0,1\}$ be a function over a product distribution  $\mu$. There exists a set $S$ of size $O_\epsilon(\frac{n\log \log n}{\log n})$ such that $I_S^b(f) \ge 1- \epsilon$ for some $b \in \{0,1\}$. 
\end{theorem}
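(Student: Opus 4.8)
The plan is to fix a threshold $t$ (of order $1/\log n$, up to polylogarithmic corrections), split the coordinates into the \emph{unbiased} set $U=\{i:\ t\le p_i\le 1-t\}$ and the \emph{biased} set $B=[n]\setminus U$, and attack the two groups by entirely different means, as the paper announces. Two preliminary simplifications: first, if $\Ex_\mu[f]\le\epsilon$ then already $I^0_\emptyset(f)=\Pr_\mu[f=0]\ge 1-\epsilon$, and symmetrically if $\Ex_\mu[f]\ge 1-\epsilon$, so we may assume $\epsilon<\Ex_\mu[f]<1-\epsilon$. Second, it is convenient to use the reformulation $I^b_S(f)=\mu_{\bar S}\big(\pi_{\bar S}(f^{-1}(b))\big)$, where $\bar S=[n]\setminus S$, $\mu_{\bar S}$ is the marginal of $\mu$ on $\bar S$, and $\pi_{\bar S}$ is the coordinate projection; so the task is to find a small $S$ for which one of the two projections $\pi_{\bar S}(f^{-1}(0)),\pi_{\bar S}(f^{-1}(1))$ has $\mu_{\bar S}$-measure at least $1-\epsilon$.

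\textbf{Unbiased coordinates.} On $\bigotimes_{i\in U}\mu_{p_i}$ I would run the Kahn--Kalai--Linial argument \cite{KKL} in its weighted form. The quantitative input is a $p$-biased KKL inequality: when all marginals lie in $[t,1-t]$ and $\alpha\le\Ex[g]\le 1-\alpha$, some coordinate has $I_k(g)=\Omega\!\big(\alpha\cdot\tfrac{\log n}{\log(1/t)}\cdot\tfrac1n\big)$; this comes from the usual low-degree/high-degree split together with hypercontractivity for $\mu_{p_i}$, whose constant is controlled (up to the stated $\log(1/t)$-type loss) in this range. Iterating KKL-style --- pick such a $k$, add it to the coalition, condition $x_k$ to the value that keeps the restriction far from constant, repeat --- produces after $O_\epsilon\!\big(\tfrac{n\log(1/t)}{\log n}\big)$ steps a coalition $S_U\subseteq U$ and a value $b$ so that, conditioned on the chosen values of the $S_U$-coordinates, $f$ equals $b$ with probability at least $1-\epsilon/2$. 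With $\log(1/t)=O(\log\log n)$ this gives $|S_U|=O_\epsilon\!\big(\tfrac{n\log\log n}{\log n}\big)$, exactly the target bound.

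\textbf{Biased coordinates.} Let $d\in\{0,1\}^B$ be the modal assignment ($d_i=0$ if $p_i<t$, else $d_i=1$) and $\delta=\Pr_\mu[x|_B\ne d]$. If $\delta\le\epsilon/2$, we condition on $x|_B=d$ at a cost of only $\epsilon/2$ in probability; this reduces to the unbiased cube (where $\Ex$ stays in $(\epsilon/2,1-\epsilon/2)$), and the previous paragraph finishes with \emph{no} biased coordinate in the coalition. The substantive case is $\delta>\epsilon/2$: the biased coordinates then carry non-negligible weight, yet because the $p_i$ can be small, plain KKL is unavailable and ``switch all biased coordinates to their modal value'' is too expensive (there can be many moderately small $p_i$ whose total weight is $\omega(n/\log n)$). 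Here I would use the projection identity: find a small $S\subseteq B$ so that $\pi_{\bar S}(f^{-1}(b))$ is almost everything for some $b$. I would build $S$ greedily from the biased coordinates that $f$ genuinely depends on, each addition shrinking the ``undecided'' set $\{w:\ f$ is constant on $B_S(w)\}$, and let the choice of target $b$ be dictated by which of the two projections is the one that fills up --- Example~\ref{ex:pbiased} (the OR function, where biasing towards $0$ is hopeless but towards $1$ is free) shows this asymmetry cannot be avoided. The goal is $S_B\subseteq B$ of size $O_\epsilon\!\big(\tfrac{n\log\log n}{\log n}\big)$ (or smaller) with $I^b_{S_B}(f)\ge 1-\epsilon$ for one of $b\in\{0,1\}$. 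Assembling is then immediate: in the case $\delta\le\epsilon/2$ take $S=S_U$, in the case $\delta>\epsilon/2$ take $S=S_B$; either way $|S|=O_\epsilon\!\big(\tfrac{n\log\log n}{\log n}\big)$ and $I^b_S(f)\ge 1-\epsilon$ for some $b$.

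\textbf{Main obstacle.} The crux is the biased case with $\delta>\epsilon/2$: forcing a value of $f$ with a coalition of only $o\!\big(\tfrac{n\log\log n}{\log n}\big)$ biased coordinates, with no Fourier analysis available and with the extra difficulty of guessing which value to aim for. I expect this to require a genuinely new combinatorial idea --- most plausibly a structural dichotomy on how $f$ depends on the biased coordinates: either $f$ behaves like a function with a short certificate for one value (yielding a cheap coalition for that value, as for AND or OR), or it is ``spread out'' enough that the greedy projection-filling process succeeds for the other value. Turning this dichotomy into a clean quantitative statement, and matching the $\log\log n$ loss, is where the real work lies; this is the part that, in the authors' words, has ``almost no resemblance'' to the KKL argument.
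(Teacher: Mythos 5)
There is a genuine gap, in two places. First, your case split and assembly step are incorrect. The quantity $\delta=\Pr_\mu[x|_B\ne d]$ depends only on the measure, not on $f$, so it cannot decide whether the coalition should live inside $B$ or inside $U$. Take $n/2$ coordinates with $p_i=1/\log n$ (so they land in $B$ and $\delta\approx 1>\epsilon/2$) and let $f$ be the majority of the remaining unbiased coordinates: then no coalition $S_B\subseteq B$ has any influence at all, $I^b_{S_B}(f)=\Pr[f=b]\approx 1/2$, so your ``substantive case'' cannot be resolved by a coalition of biased coordinates alone. In general the coalition must mix players from both blocks, and the real difficulty is how to combine them: as the paper explains in the introduction (and again in Section~\ref{sec:multi-round}), you cannot just concatenate a coalition for one block with a coalition for the restricted function, because after the first coalition acts the input is no longer distributed according to $\mu$. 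The paper's combination (Section~\ref{sec:combine}) works because the biased-block coalition $S$ is \emph{random}, drawn from the boosted distribution $\mu^{(k)}$ of Definition~\ref{def:boosted}, which does not depend on $f$; Proposition~\ref{prop:single-biased} then gives, for every assignment $y$ to the small-bias block, a target $b_y$ that a typical $S$ forces with probability $1-\epsilon/2$, and an averaging argument fixes one $S$ and one $b_0$ that work for a constant fraction of the $y$'s. One then defines $h(y)=1$ iff $I_S^{b_0}[f_y]\ge 1-\epsilon/2$, a function of the small-bias coordinates only with $\Ex[h]\ge 1/4$, and applies the KKL/RSZ-type Lemma~\ref{lem:single-unbiased} to find $T$ in the small-bias block with $I^1_T[h]\ge 1-\epsilon/2$; the final coalition is $S\cup T$. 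Nothing of this structure appears in your proposal.

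Second, the biased-block argument itself --- which you correctly identify as the crux and the novel part --- is left as a hoped-for ``dichotomy'' with a greedy projection-filling process, with no mechanism supplied. The paper's mechanism (Lemma~\ref{lem:single-biased}) is quite different from a greedy/certificate argument: sample $S$ as the support of $y^1\vee\dots\vee y^k$ with $y^i\sim\mu$ i.i.d.\ and $k\approx\frac{10\log(1/\epsilon)}{\epsilon}$, and observe that for the Boolean range either Condition~I holds for $b=0$ (a noticeable fraction of $x$'s, after OR-ing a fresh $y$, almost surely give value $b$; then a typical $S$ contains the support of such an $x$ and forces $b$) or Condition~II holds for $b=1$ (almost every $x$ has at least an $\epsilon$-chance of reaching $b$ after OR-ing a random $y$; then among $k$ independent $y^i$'s one succeeds for most $x$). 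This also resolves, concretely, the issue you only gesture at of which value $b$ to aim for (cf.\ Example~\ref{ex:pbiased}), and it works for arbitrary, not necessarily product, measures on the biased block --- a feature the combination step and the multi-round argument rely on. Your unbiased-block plan (iterated $p$-biased KKL with a $\log(1/t)=O(\log\log n)$ loss) is consistent in spirit with the paper's reduction of Lemma~\ref{lem:single-unbiased} to the RSZ lemma via an AND-decomposition of each $\mu_{p_i}$, but as it stands the proposal neither proves the biased case nor combines the two blocks correctly.
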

\begin{remark}
To simplify the statement, in Theorem~\ref{thm:singleRoundReformulated}, we did not explicitly state the dependence of $|S|$ on $\epsilon$. Our proof  yields the bound  $|S| = O(\frac{\log(1/\epsilon)n}{\epsilon\log n}+\frac{n \log\log n}{\epsilon \log n})$.
\end{remark}

\myparagraph{Continuous cube and a conjecture of Friedgut}
The Bernoulli distribution on $\{0,1\}$ with parameter $p$ can be embedded in the continuous interval $[0,1]$ via the measure-preserving map $\sigma\colon [0,1] \to \{0,1\}$ defined as $\sigma(x)=1$ if and only if $x \ge 1-p$.  By taking the product of these maps,  for every product probability measure $\mu$ on $\{0,1\}^n$, we obtain a measure-preserving map $\sigma_\mu\colon [0,1]^n \to \{0,1\}^n$. As a result, every function $f\colon (\{0,1\}^n,\mu) \to \{0,1\}$ naturally corresponds to  a function $\overline{f}\colon [0,1]^n \to \{0,1\}$  defined by $\overline{f}=f \circ \sigma_\mu$. Note that 
\if\lipics1
$I_S^b[\overline{f}] = I_S^b[f]$,
\else
\[ I_S^b[\overline{f}] = I_S^b[f], \]
\fi
for every $S \subseteq [n]$ and $b \in \{0,1\}$. Thus, a  more general setting for studying resilience is the set of  measurable functions $f\colon [0,1]^n \to \{0,1\}$. Indeed, Bourgain et al~\cite{BKKKL} proved a generalization of the KKL inequality, but erroneously claimed that as a corollary, if $\epsilon \le \Ex[f]$, then $I_S^1[f] \ge 1- \epsilon$ for a set $S$ of size $|S|=o_\epsilon(n)$. Interestingly, Example~\ref{ex:pbiased}, which was introduced  in the same paper to demonstrate that the proof of the KKL inequality breaks down for the continuous  cube,   is also a counterexample to this false claim.  Friedgut~\cite{F04} pointed out this error, and suggested the following tantalizing conjecture to replace the false statement\footnote{Nati Linial told the last author about this error and conjecture years earlier, but as far as we know this is the first published account.}.   
\begin{conjecture}[\cite{F04}]
\label{conj:Friedgut}
Let  $f\colon [0,1]^n \to \{0,1\}$ be a measurable function. There exists a set $S$ of size $o_\epsilon(n)$ such that $I_S^b(f) \ge 1- \epsilon$ for some $b \in \{0,1\}$. 
\end{conjecture}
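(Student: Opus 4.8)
The plan is to push the two-part strategy behind Theorem~\ref{thm:singleRoundReformulated} — the KKL inequality for the ``unbiased part'' of a coordinate and a combinatorial argument for its ``biased part'' — all the way down to the continuous cube, running it simultaneously at every dyadic scale of $[0,1]$. As in the uniform case, one may assume $f$ is not already biased, so $\Ex[f]\in(\epsilon,1-\epsilon)$; one then maintains a coalition $S$ (initially empty) together with the $\sup$-closure $f_S(x):=\sup\{\,f(y):y_j=x_j \text{ for all } j\notin S\,\}$, whose expectation serves as a potential to be driven above $1-\epsilon$ (or, working with the $\inf$-closure, below $\epsilon$) by adjoining $o_\epsilon(n)$ coordinates to $S$, each step increasing the potential by a definite amount.

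For the \emph{unbiased part} there is no real obstruction, because the continuous KKL inequality of Bourgain et al.~\cite{BKKKL} already lives on $[0,1]^n$: whenever $\Ex[f_S]$ is bounded away from $0$ and $1$ it produces a coordinate $i$ with $I_i(f_S)=\Omega\!\left(\Var(f_S)\cdot\tfrac{\log n}{n}\right)$, the crucial point being that the $\log n$ saving is in the \emph{number of coordinates}, which is $n$ and is untouched by any later discretization. Adjoining such an $i$ and passing to the new $\sup$-closure, exactly as in the proof of the KKL coalition bound, costs only $O_\epsilon(n/\log n)$ coordinates in total.

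The work is entirely in the \emph{biased part}. A coordinate $x_i\in[0,1]$ on which $f$ is unbiased at the top scale but biased on finer scales must be opened up as a rooted tree of dyadic refinements, and I would recast the biased-coordinate argument of Theorem~\ref{thm:singleRoundReformulated} so that it runs scale by scale: first the coarsest split (``$x_i$ lands in the heavy sub-interval'' versus not), then recursing inside each light sub-interval, after truncating each tree at a depth chosen so that the discarded mass is $O(\epsilon)$ — with a charging scheme under which the per-coordinate costs telescope across scales because the relevant masses shrink geometrically. These biased-scale steps must be \emph{interleaved} with the unbiased KKL steps rather than run afterwards, since both kinds of structure coexist inside a single coordinate; following the template of the proof of Theorem~\ref{thm:singleRoundReformulated}, one alternates the two kinds of step while the potential $\Ex[f_S]$ keeps increasing, all progress being taken towards a value $b\in\{0,1\}$ fixed at the outset.

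The step I expect to be the genuine obstacle — and the reason the conjecture is still open — is the choice of that target value $b$, which Example~\ref{ex:pbiased} already shows to be delicate: on $[0,1]^n$ an influential coordinate need not be influential towards any particular value, so the KKL steps and the biased-scale steps cannot be assumed to push $\Ex[f_S]$ in a common direction. One must both identify, from global information about $f$, a $b$ that a small coalition can hope to enforce, and show that at every scale of every coordinate the available progress can be taken towards that same $b$. For two-valued coordinates this is already what costs the extra $\log\log n$ factor in Theorem~\ref{thm:singleRoundReformulated}; over the unboundedly deep refinement trees of $[0,1]$ it is not clear the cost stays $o(n)$ at all, and making it do so — uniformly in how finely the trees must be cut — appears to require a new idea. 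Equivalently, the finite-range generalization of Theorem~\ref{thm:singleRoundReformulated} already used in this paper would give the conjecture if its coalition bound could be made independent of the range sizes, and it is exactly this uniformity that is missing.
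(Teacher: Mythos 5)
There is nothing in the paper for your proposal to match: the statement you were asked about is Friedgut's conjecture, which the paper explicitly leaves open (``our techniques and ideas seem to fall short of proving the full conjecture''); the paper only proves the Boolean-cube special case (Theorem~\ref{thm:singleRoundReformulated}) and its finite-range analogue (Theorem~\ref{thm:singleRound_General}) as steps towards it. Your text is likewise not a proof but a strategy sketch, and you say so yourself: the ``genuine obstacle'' you name at the end --- choosing one target value $b$ and showing that the KKL-type steps and the biased-scale steps can all make progress towards that same $b$, at every dyadic scale of every coordinate, with total cost $o(n)$ --- is exactly the open problem, and nothing in your sketch resolves it. So the proposal has a genuine gap; it establishes no new statement beyond what the paper already proves.

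Two concrete points where the sketch is not merely incomplete but misleading. First, your claim that for the ``unbiased part there is no real obstruction'' because BKKKL gives an influential coordinate and ``adjoining such an $i$ \ldots costs only $O_\epsilon(n/\log n)$'' is precisely the erroneous corollary of~\cite{BKKKL} that the paper discusses: on $[0,1]^n$ a coordinate of influence $\Omega(\log n/n)$ need not increase $\Ex[f_S]$ (the $\sup$-closure potential) by any comparable amount, because influence there does not decompose towards values --- Example~\ref{ex:pbiased}, embedded in $[0,1]^n$, has coordinates of constant influence while the inf-closure expectation is stuck near $1-1/e$ for every coalition of size $o(n)$. So even the half of your plan you call unproblematic already requires the missing idea about the direction $b$. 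Second, your closing remark that Theorem~\ref{thm:singleRound_General} ``would give the conjecture if its coalition bound could be made independent of the range sizes'' conflates two different generalizations: that theorem enlarges the \emph{output range} $\mathcal{R}$ of $f$ while the domain stays $\{0,1\}^n$, whereas the discretized conjecture ($f\colon\{1,\dots,n^2\}^n\to\{0,1\}$) enlarges the per-coordinate \emph{input alphabet}, i.e.\ the message length of each player. A range-size-free version of Theorem~\ref{thm:singleRound_General} would not by itself handle alphabet $n^2$ per coordinate, so the reduction you gesture at does not exist as stated.
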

A standard compression argument shows that it suffices to prove this conjecture for increasing functions, and indeed the original form of the conjecture is stated for increasing functions. Furthermore, by  discretization, the statement can be further reduced to functions  $f\colon \{1,\ldots,n^2\}^n \to \{0,1\}$, where the domain is endowed with the uniform measure. Note that this form of the conjecture corresponds to resilience of one-round collective coin-flipping protocols where each player is allowed to send $\log n$-bit messages.   

The above discussion show that, qualitatively,  Conjecture~\ref{conj:Friedgut} is a generalization of   Theorem~\ref{thm:singleRoundReformulated}, and thus our theorem can be considered as a step towards resolving Friedgut's conjecture. However, our techniques and ideas seem to fall short of proving the full conjecture. 

\myparagraph{Beyond the Boolean range}

As we discussed above, the coalition theorem of KKL says that if $\Ex[f(x)=b] \ge \epsilon$ then there exists a small coalition $S$ such that $I_S^b(f)\geq 1-\epsilon$. Now consider a function $h\colon\{0,1\}^n \to \mathcal{R}$ over the uniform distribution, where $\mathcal{R}$ is a constant size  set. Pick any $b \in \mathcal{R}$ with $\Pr[h(x)=b] \ge \epsilon$. We can apply the KKL theorem to the function $f\colon \{0,1\}^n \to \{0,1\}$ defined as $f(x)=1$ if and only if $h(x)=b$, and  conclude that there is a  coalition of size  $O_\epsilon \left(n/ \log n\right)$ with  $I_S^b(f)\geq 1-\epsilon$. This shows that over the uniform distribution, the general range $\mathcal{R}$ easily reduces  to the Boolean range. 

Unfortunately, the above reduction cannot be carried  for  general product distributions, for   in  Theorem~\ref{thm:singleRoundReformulated}, the final outcome $b$ is dictated to us by the function. To illustrate the problem,  consider  a function $h\colon \{0,1\}^n \to \{0,1,2\}$ and a general product distribution $\mu$. By bundling $\{1,2\}$ into a single value and applying Theorem~\ref{thm:singleRoundReformulated}, we can conclude that there exists a small coalition $S$ such that  either  it  biases the outcome of the function towards $0$, or it biases the outcome towards being in $\{1,2\}$. If  it is the former  case, then we are done, but in the latter case, it is not clear how to proceed.

 We know that except for the $x$'s that belong to  a small-measure set $\mathcal{E}$,  the coalition can modify $x$  in such a way that the outcome is in $\{1,2\}$. Now at first glance, it might seem that by applying  Theorem~\ref{thm:singleRoundReformulated}  again, we can   find another coalition $T$ that can modify $x$ further to refine the outcome to a single value $b \in \{1,2\}$, and thus conclude that for most $x$'s the alliance $S \cup T$   can influence the outcome of the function towards $b$.
Unfortunately, this is actually not the case. One reason is that $S$ and $T$ might intersect, and suggest conflicting modifications to $x$. Even if $S$ and $T$ are disjoint, the proof doesn't work: denoting by $x'$ the vector obtained from $x \sim \mu$ after modification by $S$, we no longer have $x' \sim \mu$, and so there is no guarantee that on most inputs $T$ can be applied successfully. In other words, $\Pr[x' \in \mathcal{E}]$ need not be small.

 %This unfortunately has a caveat. The set $\mathcal{E}$, even though of small measure, cannot be simply discarded. To see this,  let $x'$ denote the vector obtained from $x \sim \mu$ after the coalition $S \cup T$ modifies its coordinates. We know that $\mathcal{E}$ is of small measure according to $\mu$, but this does not necessarily mean that $\Pr[x' \in \mathcal{E}]$ is small, as $x'$ is no longer distributed according to $\mu$.  Thus we might end up in a case where for a significant fraction of $x$'s, the modification by the coalition $S \cup T$ lands them in $\mathcal{E}$, where we do not have any control over the outcome of $f$. 

The above discussion shows that one cannot deduce the general case via the simple reduction that was outlined above for the uniform measure,  but  surely, as cumbersome  as it may be,   one can go over the proof and generalize every step  from  $\{0,1\}$  to $\{0,1,2\}$ by making small notational adjustments.  This turns out not to be the case  either! The proof of  Theorem~\ref{thm:singleRoundReformulated},  rather unexpectedly, relies on the assumption that the function  takes only two values. Indeed, to generalize the result to larger ranges, we had to introduce new ideas, and in particular a strengthening of Theorem~\ref{thm:singleRoundReformulated} (see Theorem~\ref{thm:strong} below) that provides stronger control over the set  $\mathcal{E}$ described above.

\begin{theorem}[Single round, general range]
\label{thm:singleRound_General}
Let $\mathcal{R}$ be a constant size set, and  $f\colon \{0,1\}^n  \to \mathcal{R}$ be a function over a product distribution  $\mu$. There exists a set $S$ of size $O_\epsilon(\frac{n\log \log n}{\log n})$ such that $I_S^b(f) \ge 1- \epsilon$ for some $b \in \mathcal{R}$. 
\end{theorem}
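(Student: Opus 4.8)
The plan is to reduce the general-range statement to the Boolean Theorem~\ref{thm:singleRoundReformulated} by induction on $|\mathcal{R}|$, peeling off one value of $\mathcal{R}$ at a time. Fix $b_0 \in \mathcal{R}$ and set $\mathcal{R}' := \mathcal{R}\setminus\{b_0\}$. Applying Theorem~\ref{thm:singleRoundReformulated} to the Boolean function $\mathbf{1}[f = b_0]$ yields a coalition $S_1$ of size $O_\epsilon(n\log\log n/\log n)$ that biases $f$ either towards $b_0$, in which case we are done, or towards the set $\mathcal{R}'$. In the latter case there is an exceptional set $\mathcal{E}_1 \subseteq \{0,1\}^{[n]\setminus S_1}$ with $\mu(\mathcal{E}_1) < \epsilon$ such that for every $y \notin \mathcal{E}_1$ some completion $\phi(y) \in \{0,1\}^{S_1}$ satisfies $f(\phi(y),y) \in \mathcal{R}'$. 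Defining $g(y) := f(\phi(y),y)$ (with $g(y)$ arbitrary in $\mathcal{R}'$ for $y\in\mathcal{E}_1$) gives a function on the product space over $[n]\setminus S_1$ with range $\mathcal{R}'$ of size $|\mathcal{R}|-1$; the inductive hypothesis supplies a coalition $S_2 \subseteq [n]\setminus S_1$ and a value $b\in\mathcal{R}'$ with $I^b_{S_2}(g) \ge 1-\epsilon$. The composite coalition $S_1\cup S_2$ then acts on $x$ by first using $S_2$ to move $y := x_{[n]\setminus S_1}$ to some $y'$ with $g(y') = b$, and then setting the $S_1$-coordinates to $\phi(y')$.

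This last step produces $f = b$ only if in addition $y' \notin \mathcal{E}_1$, and since $y'$ is an adversarial modification of $y\sim\mu$ rather than a fresh $\mu$-sample, $\Pr[y' \in \mathcal{E}_1]$ is not controlled by $\mu(\mathcal{E}_1)$; this is exactly the difficulty flagged after Example~\ref{ex:pbiased}. The remedy is to run the induction using Theorem~\ref{thm:strong} in place of Theorem~\ref{thm:singleRoundReformulated}: this strengthening produces, together with $S_1$ and the target, an exceptional set that is negligible in a \emph{robust} sense, i.e.\ that remains negligible under the push-forwards of $\mu$ induced by the modifications performed at the later levels of the recursion. Concretely, one natural route is to first invoke a compression argument to reduce to the case where $f$ is monotone with respect to some linear order on $\mathcal{R}$, and to arrange the peeling so that consecutive levels alternately bias towards the current smallest and largest value; for monotone functions the exceptional set produced at a given level is monotone (down- or up-closed in $y$), so a modification made at the next level---which flips coordinates in a single direction---moves $y$ out of it. Theorem~\ref{thm:strong} is what guarantees the quantitative bound survives this interaction across the constantly many ($|\mathcal{R}|-1$) levels; a union bound over the levels then gives $I_S^b(f) \ge 1-\epsilon$ (after replacing $\epsilon$ by $\epsilon/|\mathcal{R}|$ throughout), with $|S| = O_{|\mathcal{R}|,\epsilon}(n\log\log n/\log n)$.

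The main obstacle is the proof of Theorem~\ref{thm:strong} itself. The two halves of the proof of Theorem~\ref{thm:singleRoundReformulated}---the KKL-style Fourier argument that handles the unbiased coordinates and the separate argument that handles the highly biased ones---must each be revisited so as to track, and bound robustly, the set of inputs on which the coalition fails; it is not clear a priori that either argument yields this extra control, and the recursive structure inside the binary proof interacts delicately with the exceptional set, so this is where essentially all the work lies. Granting Theorem~\ref{thm:strong}, the range reduction above is bookkeeping: checking disjointness of the coalitions across levels, confirming that the monotone compression preserves all the quantities $I_S^b$, and ensuring the accuracy parameter degrades by only a constant factor over the $|\mathcal{R}|-1$ steps.
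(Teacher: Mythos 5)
You correctly identify the central obstruction (the exceptional set is small only under $\mu$, not under the adversarially modified input) and you correctly point to Theorem~\ref{thm:strong} as the source of robust control; this much agrees with the paper. But your reduction has a genuine gap: Theorem~\ref{thm:strong} cannot simply be ``run in place of'' Theorem~\ref{thm:singleRoundReformulated} in a peeling argument over the whole product distribution, because its coalition is drawn from the boosted measure $\mu^{(k)}$, whose support has expected size $k\sum_i \Pr[x_i=1]$. This is $o(n)$ only when the coordinates are highly biased; over, say, the uniform measure it is essentially all of $[n]$. So granting Theorem~\ref{thm:strong} does not make the rest bookkeeping: you still must split the coordinates into a small-bias block $A$ and a high-bias block, and you must say how the small-bias block enters your recursion --- and no robust ($\dagger$-aware, boosted-measure) analogue of the KKL/RSZ lemma is available for that block. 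The paper's deduction avoids ever needing one: for each fixing $y$ of the $A$-coordinates it applies Theorem~\ref{thm:strong} (with $t=1$ and the range embedded in $\{0,1\}^m$, $m=\lceil\log_2|\mathcal{R}|\rceil$) to $f_y$ on the high-bias coordinates, exploits the fact that the random coalition's distribution does not depend on $f_y$ to average and fix a single pair $(S,b_0)$ that works for a $2^{-(m+1)}$-fraction of the $y$'s, and then applies Lemma~\ref{lem:single-unbiased} exactly once to the Boolean indicator $h(y)=\mathbf{1}\left[I_S^{b_0}(f_y)\ge 1-\epsilon/2\right]$, which depends only on the small-bias block. Because $S$ and $T$ act on disjoint blocks and $h$ lives on $A$ alone, no robustness is needed at this outer level, so the interaction problem you try to solve by monotonization never arises there.

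Relatedly, your proposed fix via compression to a monotone $f$ under a linear order on $\mathcal{R}$, with alternating peeling toward the smallest and largest values, is both unsubstantiated (it is not clear that shifting preserves the quantities $I_S^b$ for non-Boolean ranges, nor that the next level's one-directional modification always exits, rather than enters, the previous level's exceptional set) and unnecessary: the mechanism for composing modifications lies entirely inside the proof of Theorem~\ref{thm:strong} --- the $\dagger$ symbol together with the boosted influences $I_S^{b,\ell}$, with the induction run on output bits rather than on values of $\mathcal{R}$ peeled one at a time. Finally, your closing claim that both ``halves'' of the Boolean argument (the Fourier/KKL part and the biased part) must be revisited to carry robust exceptional-set control misreads the architecture: Theorem~\ref{thm:strong} contains no Fourier-analytic content and concerns only the biased regime, while the Fourier half is invoked once, non-robustly, at the very end.
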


\begin{remark}
At the heart of the proof of Theorem~\ref{thm:singleRound_General} there is an intermediate result, Theorem~\ref{thm:strong}, which states that if all coordinates are biased, say $\Pr[x_i=1] < \alpha$, then a random coalition of size $O(\log^3 |\mathcal{R}| \log\log |\mathcal{R}| \cdot \alpha n)$ biases the outcome with high probability. This intermediate result is an essential ingredient in the proof of our result on the multi-round setting,  Theorem~\ref{thm:multiRoundSimplified}.
For this application, it was crucial to obtain a bound which depends only polylogarithmically  in $|\mathcal{R}|$.
\end{remark}

%\begin{remark}
%In Theorem~\ref{thm:singleRound_General}, our proof yields $|S|=O(\frac{(\log^3 M \log \log M + M \log \log n)\cdot n}{\log n})$, where $M=|\mathcal{R}|$. Theorem~\ref{thm:singleRound_General}  is  an essential ingredient in the proof of our result regarding the multi-round setting,  Theorem~\ref{thm:multiRoundSimplified}.
%For this application, it was crucial to obtain a bound which depends only polylogarithmically  in $|\mathcal{R}|$.
%\end{remark}

Even though Theorem~\ref{thm:singleRoundReformulated} is a special case of   Theorem~\ref{thm:singleRound_General}, we prove them separately, as Theorem~\ref{thm:singleRoundReformulated}  can be proven using a shorter and simpler proof. 

\myparagraph{Paper organization} We prove Theorem~\ref{thm:singleRoundSimplified}, which shows that all single-round protocols can be biased using coalitions of size $o(n)$, in Section~\ref{sec:single-round}. We prove Theorem~\ref{thm:singleRound_General}, which generalizes the preceding result to arbitrary finite domains, in Section~\ref{sec:largerange}. We prove our main result, Theorem~\ref{thm:multiRoundSimplified}, which shows the multi-round protocols can be biased, in Section~\ref{sec:multi-round}.
\if\lipics0
Finally, Section~\ref{sec:conclusion} presents some concluding remarks.
\fi

\if\lipics1
Due to space constraints, some of the proofs are available only in the full version of the paper, which is attached as an appendix.
\fi

%%%%%%%%%%%%%%%%%%%%%%%%%%%%%%%%%%%%%%%%%%%%%
%%%   Single round case
%%%%%%%%%%%%%%%%%%%%%%%%%%%%%%%%%%%%%%%%%%%%%
\section{Single Round Case: Proof of Theorem~\ref{thm:singleRoundSimplified}} \label{sec:single-round}

In this section we prove Theorem~\ref{thm:singleRoundSimplified}, showing that, under any product distribution, there exists  a small coalition which can bias the output of the function towards one of the outputs.

Note that in order to prove  Theorem~\ref{thm:singleRoundSimplified}, without loss of generality, we can assume that  $\Pr_{x \sim \mu}[x_i=1] \le \frac{1}{2}$  for every $i \in [n]$, as otherwise we can  simply change the role of $0$ and $1$ for the $i$-th coordinate. In light of this observation, the coordinates can be divided into two sets: the small bias coordinates,  satisfying $\Pr_{x\sim \mu}[x_i=1] \in (\alpha_0,\frac{1}{2}]$, and the highly biased coordinates,  satisfying $\Pr_{x\sim \mu}[x_i=1] \le \alpha_0$, where $\alpha_0$  is a threshold that is chosen to be $\alpha_0=\frac{1}{\log n}$.  
 
 Indeed, we first consider the case where all the coordinates are of the same type:  
\begin{itemize}
\item \emph{Small bias case}: $\Pr_{x\sim \mu}[x_i=1] \in (\alpha_0,\frac{1}{2}]$ for every $i \in [n]$.
\item \emph{Large bias case}: $\Pr_{x\sim \mu}[x_i=1] \le \alpha_0$ for every $i \in [n]$.
\end{itemize}

We handle the large bias case in Section~\ref{sec:singleround_largebias}, which is the novel part of the proof. The  small bias case is  handled in  Section~\ref{sec:singleround_smallbias}  via a reduction to the previous work of Russell et al.~\cite{RSZ02}. Finally, in Section~\ref{sec:combine} we show how to combine the two cases to handle any product distribution $\mu$, thus completing the proof of Theorem~\ref{thm:singleRoundSimplified}.

\subsection{Large Bias Case}\label{sec:singleround_largebias}

 We will sometimes identify the subsets of $[n]$ with elements of $\{0,1\}^n$. For example, $S \sim \mu$ would mean that $S= \mathrm{supp}(x)$, where $x$ is sampled according to $\mu$. We construct the coalitions from a certain  boosted form of $\mu$. 

\begin{definition}[Boosted distribution]\label{def:boosted}
For a positive integer $t$, we denote by $\mu^{(t)}$ the distribution of $x^1 \vee \dots \lor x^t$, where  $x^1,\ldots,x^t$ are i.i.d.\ random variables distributed according to $\mu$.
\end{definition}

The large bias case of Theorem~\ref{thm:singleRoundSimplified} follows from the following  general proposition, that holds for distributions that are not necessarily  product distributions. 

\begin{proposition}\label{prop:single-biased}
Consider $f\colon (\{0,1\}^n,\mu) \to \{0,1\}$, where $\mu$ is an \emph{arbitrary} probability measure, and let $S \sim \mu^{(k)}$, where $k \approx \frac{10 \log \frac{1}{\epsilon}}{\epsilon} $.  For some $b \in \{0,1\}$,  we have
$\Pr_S[I^b_S[f] > 1 -\epsilon] > 1- \epsilon$. 
\end{proposition}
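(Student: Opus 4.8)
The plan is to reduce, by Markov's inequality, to a statement about a single boosted coalition: it suffices to exhibit $b\in\{0,1\}$ with
\[
\Phi_k^b\ :=\ \Ex_{S\sim\mu^{(k)}}\!\bigl[\,1-I^b_S[f]\,\bigr]\ =\ \Pr_{S\sim\mu^{(k)},\,x\sim\mu}\!\bigl[\,f\equiv\bar b\text{ on }B_S(x)\,\bigr]\ \le\ \epsilon^2,
\]
since then $\Pr_S[\,1-I^b_S[f]>\epsilon\,]<\Phi_k^b/\epsilon\le\epsilon$, which is the claim. So the whole game is to bound $\Phi_k^b$ for a cleverly chosen $b$.

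First I would open up the boosted coalition. Write $S=\supp(y^1\vee\cdots\vee y^k)$ with $y^1,\dots,y^k$ i.i.d.\ $\mu$ and $x\sim\mu$ independent. Since $\supp(y^i)\subseteq S$ we have $B_{\supp(y^i)}(x)\subseteq B_S(x)$, so the event ``$f\equiv\bar b$ on $B_S(x)$'' forces ``$f\equiv\bar b$ on $B_{\supp(y^i)}(x)$'' for every $i$; conditioned on $x$ these $k$ events are i.i.d.\ with probability $\sigma_{\bar b}(x):=\Pr_{y\sim\mu}[\,f\equiv\bar b\text{ on }B_{\supp(y)}(x)\,]$. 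Hence $\Phi_k^b\le\Ex_{x\sim\mu}[\sigma_{\bar b}(x)^k]$, and it remains to show $\min\!\bigl(\Ex_x[\sigma_0(x)^k],\ \Ex_x[\sigma_1(x)^k]\bigr)\le\epsilon^2$. Two structural facts will be in play: $\sigma_0(x)+\sigma_1(x)\le1$ for every $x$ (the two events are disjoint, as $f$ cannot be constant at both values on a nonempty cube), and $\sigma_{\bar b}(x)>0$ only when $f(x)=\bar b$. In particular I may assume $\mu(f^{-1}(0)),\mu(f^{-1}(1))\ge\epsilon^2$, since otherwise $\Phi_k^b\le\mu(f^{-1}(\bar b))<\epsilon^2$ already for the appropriate $b$. (It is also worth noting, by exchangeability of $x,y^1,\dots,y^k$, that on the event defining $\Phi_k^b$ the value $\bar b$ must coincide with $f(x\vee y^1\vee\cdots\vee y^k)$, which gives the clean but by itself insufficient bound $\Phi_k^b\le\Pr_{u\sim\mu^{(k+1)}}[f(u)=\bar b]$.)

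For the remaining inequality, the inputs $x$ at which $\sigma_{\bar b}(x)$ is bounded away from $1$ are harmless: they contribute at most $(1-\Omega(\epsilon))^k\le e^{-\Omega(\epsilon k)}\le\epsilon^3$ as soon as $k\gtrsim\frac1\epsilon\log\frac1\epsilon$. The real content is to choose $b$ so that $\sigma_{\bar b}(x)$ is rarely \emph{very} close to $1$ — i.e.\ so that almost every $\mu$-random coalition leaves $f$ pinned at $\bar b$ around $x$ only for an $O(\epsilon^2)$-fraction of $x$. Here I would argue that the ``pinned'' set cannot be non-negligible for \emph{both} values of $b$ simultaneously, exploiting that a fresh $\mu$-random coalition replaces $B_{\supp(y)}(x)$ by a strictly larger subcube, so that being pinned at one value should become unpinned after a few independent boosts; one quantifies this by passing to the $t$-fold boosted coalitions (using $\sigma^{(t)}_{\bar b}(x)\le\sigma_{\bar b}(x)^{t}$, which follows from the same inclusion of cubes) and balancing $t\asymp1/\epsilon$ against the number of outer boosts. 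This quantitative control on the measure of the pinned set is the step I expect to be the main obstacle; everything else is bookkeeping, and the precise constant $k\asymp\frac1\epsilon\log\frac1\epsilon$ comes out exactly of forcing the leftover exponential factor $e^{-\Theta(\epsilon k)}$ below $\epsilon^2$.
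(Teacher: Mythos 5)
Your reduction commits to producing some $b$ with $\Phi_k^b=\Ex_{S\sim\mu^{(k)}}\bigl[1-I_S^b[f]\bigr]\le\epsilon^2$, and correspondingly to showing that the ``pinned'' set has measure $O(\epsilon^2)$ for at least one value; both claims are false in general, so the step you flagged as the main obstacle is not merely hard but impossible. Take $f=\mathrm{OR}$ and $\mu$ the product measure with $\Pr[x_i=1]=p$, where $np=\epsilon^{3/2}$ (a rescaled variant of the paper's Example~\ref{ex:pbiased}). For every coalition $S$ of size $o(n)$ one has $1-I_S^0[f]=1-(1-p)^{n-|S|}\approx\epsilon^{3/2}$, hence $\Phi_k^0\approx\epsilon^{3/2}\gg\epsilon^2$ no matter how $k$ is chosen; and $f\equiv 0$ on $B_S(x)$ forces $S=\emptyset$ and $x=\vec{0}$, so $\Phi_k^1\ge(1-p)^{n(k+1)}=e^{-O(\sqrt{\epsilon}\log(1/\epsilon))}\approx 1$ for your $k\asymp\epsilon^{-1}\log(1/\epsilon)$. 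Thus $\min_b\Phi_k^b\gg\epsilon^2$, yet the Proposition holds trivially with $b=0$, since $I_S^0[f]\ge(1-p)^n>1-\epsilon$ deterministically. The pinned sets behave the same way: $\sigma_0(\vec{0})=(1-p)^n\approx 1$ on a set of measure $\approx 1$, while $\sigma_1(x)\ge 1-p$ on all of $f^{-1}(1)$, a set of measure $\approx\epsilon^{3/2}\gg\epsilon^2$, so both pinned sets are non-negligible at the quantitative level you need.

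The loss is exactly in the Markov step: the Proposition allows two \emph{separate} $\epsilon$-budgets --- probability $\epsilon$ over the draw of $S$, plus an $\epsilon$-fraction of inputs $x$ that even a good coalition cannot fix --- and folding them into one expectation demands the product $\epsilon^2$, which is unattainable when the failure mass is spread thinly over all coalitions (in the OR example every $S$ fails on the same $\epsilon^{3/2}$-fraction of inputs). The paper keeps the budgets apart through a dichotomy on $X^I=\{x:\Pr_y[f(x\vee y)=0]\ge 1-\epsilon\}$: if $\mu(X^I)>\epsilon/2$ (the regime of the OR example), then with probability $\ge 1-(1-\epsilon/2)^k\ge 1-\epsilon$ the boosted coalition contains the support of a single $z\in X^I$, and that one $z$ already forces the value $0$ on a $1-\epsilon$ fraction of inputs --- no smallness of any pinned set is invoked; otherwise Condition~II holds for $b=1$, and $\Pr_S[I_S^1<1-\epsilon]$ is bounded by first discarding an $\epsilon/2$-mass of hopeless $x$'s (charged to the influence allowance, not to the probability over $S$) and only then applying a Markov-type bound to the conditional quantity. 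Your preparatory observations (the bound $\Phi_k^b\le\Ex_x[\sigma_{\bar b}(x)^k]$, the disjointness $\sigma_0+\sigma_1\le 1$, the reduction when $\mu(f^{-1}(\bar b))<\epsilon^2$) are correct, but the plan collapses at the step you predicted, and no adjustment of the constant in $k$ can repair it.
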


Note that Proposition~\ref{prop:single-biased} implies (via a straightforward concentration bound) that in the large bias case, there exists a random coalition of expected size at most $k \alpha_0 n$ such that $\Pr_S[I^b_S[f] > 1 -\epsilon] > 1- \epsilon$. As it will become apparent later, for the application  to the multiround setting, it is important that in Proposition~\ref{prop:single-biased} the set $S$ is   chosen randomly from a distribution that does not depend on $f$. 

Proposition~\ref{prop:single-biased} is a direct consequence of the following lemma, as for the Boolean range $\{0,1\}$, either Condition~I holds for $b=0$ or Condition~II holds for $b=1$. This, however, is not true for larger $\mathcal{R}$.

\begin{lemma}[Key Lemma for Single Round]\label{lem:single-biased}
Consider $f\colon (\{0,1\}^n,\mu) \to \mathcal{R}$, where $\mu$ is an \emph{arbitrary} probability measure. Let $x,y \sim \mu$, $S \sim \mu^{(k)}$, where $k \approx \frac{10 \log \frac{1}{\epsilon}}{\epsilon} $. For $b \in \mathcal{R}$, either of  
\begin{itemize}
\item Condition I:   $\Pr_x[\Pr_y[f(x \vee y)=b] \ge1- \epsilon] > \epsilon/2$, or 
\item Condition II:   $\Pr_x[\Pr_y[f(x \vee y)=b] \ge \epsilon] \ge 1 - \epsilon/2$,
\end{itemize}
implies 
$\Pr_S[I^b_S[f] > 1 -\epsilon] > 1- \epsilon$. 
\end{lemma}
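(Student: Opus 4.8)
\textbf{Proof plan for Lemma~\ref{lem:single-biased}.}

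The plan is to fix $b \in \mathcal{R}$ and show that in either of the two cases, a random coalition $S \sim \mu^{(k)}$ has high probability of achieving $I^b_S[f] > 1-\epsilon$. The key observation is that $S \sim \mu^{(k)}$ can be written as $S = S_1 \vee \dots \vee S_k$ with $S_1, \dots, S_k$ i.i.d.\ copies of $\mu$, so modifying an input on the coordinates of $S$ is at least as powerful as modifying it on any single $S_j$ (and hence at least as powerful as replacing $x$ by $x \vee y$ for a fresh $y \sim \mu$). Concretely, for any fixed $z \in \{0,1\}^n$, if there is \emph{some} $j$ with $b \in f(B_{S_j}(z)) \supseteq$ (in the relevant sense) the value $f(z \vee S_j)$, then $b \in f(B_S(z))$. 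So it suffices to show that for a $(1-\epsilon)$-fraction of $z \sim \mu$, with probability $\ge 1-\epsilon$ over $S$, at least one of the $k$ blocks $S_j$ lets us reach $b$.

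\textbf{Case of Condition I.} Here a $(\ge \epsilon/2)$-fraction of $x \sim \mu$ are ``great'': $\Pr_y[f(x \vee y) = b] \ge 1-\epsilon$. First I would argue that for a random $z \sim \mu$, with high probability at least one block $S_j$ (thought of as a sample of $\mu$) is great; since being great has probability $\ge \epsilon/2$, the probability that none of $k \approx \frac{10\log(1/\epsilon)}{\epsilon}$ independent blocks is great is at most $(1-\epsilon/2)^k \le \epsilon^3$ or so — negligible. Conditioned on block $S_j$ being great, $z \vee S_j$ is a point from which a fresh $y \sim \mu$ reaches $b$ with probability $\ge 1-\epsilon$; but we want a statement about the \emph{fixed} set $S$, not a fresh $y$. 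The fix is to use one of the \emph{other} blocks, say $S_{j'}$ with $j' \ne j$, which is an independent $\mu$-sample, as the ``$y$''. So: pick two disjoint indices; with probability $\ge 1 - 2(1-\epsilon/2)^{k/2}$ there is a great block $S_j$ among the first half and, independently, $(z \vee S_j) \vee S_{j'}$ reaches $b$ for a $\ge(1-\epsilon)$-fraction of the second-half blocks $S_{j'}$. Union-bounding the ``$z$ is bad'' event (no great block, or the great block's partner fails) over the randomness of $z$ and $S$, and choosing $k$ large enough (a constant times $\frac{\log(1/\epsilon)}{\epsilon}$), gives $\Pr_S[I^b_S[f] > 1-\epsilon] > 1-\epsilon$. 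Care is needed with the ordering of quantifiers — I would phrase it as: $\Pr_{z,S}[\,b \notin f(B_S(z))\,] < \epsilon^2$, then apply Markov over $z$ to extract $\Pr_S[I^b_S[f] > 1-\epsilon] > 1-\epsilon$.

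\textbf{Case of Condition II.} Now for \emph{every} $x$ outside an $\epsilon/2$-fraction, $\Pr_y[f(x\vee y) = b] \ge \epsilon$. Consider $z \sim \mu$ and the blocks $S_1, \dots, S_k$. Build up a point greedily: with probability $\ge 1-\epsilon/2$, $z$ itself is ``good'' (lies in the $(1-\epsilon/2)$-fraction). If $z$ is good, then for a $\ge \epsilon$-fraction of choices of a $\mu$-sample $y$, $f(z \vee y) = b$; taking $y = S_1$, with probability $\ge \epsilon$ we already have $f(z \vee S_1) = b$. If not, $z \vee S_1$ might not be good, so I cannot simply iterate naively. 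The trick here is subtler: I would instead argue that the ``bad'' set $\mathcal{E} = \{x : \Pr_y[f(x \vee y) = b] < \epsilon\}$ has $\mu$-measure $< \epsilon/2$, and that for $z \sim \mu$, the point $z \vee S_1 \vee \dots \vee S_i$ only lands in $\mathcal{E}$ with probability $< \epsilon/2$ \emph{for each fixed $i$} — this requires that OR-ing with more $\mu$-samples does not increase the probability of landing in $\mathcal{E}$, which holds provided $\mathcal{E}$ is ``upward closed'' under the OR operation in the appropriate sense, or, more robustly, provided we track the distribution of the partial OR. Actually the clean way: $z \vee S_1 \vee \dots \vee S_i$ is distributed as $\mu^{(i+1)}$ when $z \sim \mu$, so I just need $\mu^{(i+1)}(\mathcal{E}) < \epsilon/2$ for all $i$. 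Since $\mathcal{E}$ need not be monotone, this is \emph{not} automatic, and this is where I expect the main obstacle to lie — but note the hypothesis of Condition II is phrased exactly to control $\mu = \mu^{(1)}$-measure, and presumably the intended argument replaces $z \sim \mu$ by conditioning or uses that $\mu^{(k)}$ dominates. Assuming $\mu^{(i)}(\mathcal{E}) < \epsilon/2$ holds for all relevant $i$ (which I would establish, perhaps by a union bound over $i \le k$ giving total failure probability $< k\epsilon/2$ and rescaling $\epsilon$, or by a cleaner monotonicity observation about the specific $\mathcal{E}$), then: at each of the $k$ steps, conditioned on the current partial OR being good, we hit $b$ with probability $\ge \epsilon$ independently; so after $k \approx \frac{10\log(1/\epsilon)}{\epsilon}$ steps the probability of never hitting $b$ is at most $(1-\epsilon)^k + k \cdot \epsilon/2 \le \epsilon^3 + \text{(small)}$, again negligible after adjusting constants. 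Markov over $z$ then yields $\Pr_S[I^b_S[f] > 1-\epsilon] > 1-\epsilon$.

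\textbf{Main obstacle.} The delicate point in both cases is converting a statement about a \emph{fresh} sample $y \sim \mu$ (``$f(x \vee y) = b$ with good probability'') into a statement about the \emph{fixed} random set $S$, since once $S$ is drawn it is no longer fresh. Splitting the $k$ blocks into groups and using some blocks to ``be $x$'' and others to ``be $y$'' handles this, but the bookkeeping — especially in Case II, where I need that OR-ing in more blocks does not push the partial OR into the bad set $\mathcal{E}$ — is the step I would spend the most care on. If $\mathcal{E}$ fails to behave well under OR, the right move is likely to absorb a $\mathrm{poly}(k)$ loss via a union bound over the $k$ intermediate distributions $\mu^{(1)}, \dots, \mu^{(k)}$, which only costs a constant factor in $k$ and is harmless since $k$ is already $\Theta(\epsilon^{-1}\log(1/\epsilon))$.
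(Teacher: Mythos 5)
Both cases of your plan contain a gap, and both trace to the same misreading of where the ``fresh sample'' comes from: you treat the fresh $y\sim\mu$ in Conditions I and II as something the coalition must supply from its own blocks, whereas the influence $I^b_S(f)=\Pr_{x\sim\mu}[b\in f(B_S(x))]$ already contains a fresh $\mu$-sample, namely the protocol's own random input $x$. In Case I the paper's argument is one line once this is seen: with probability $>1-\epsilon$ some block $z=y^j$ lands in $X^I$, and since $z\subseteq S$ we get $I^b_S(f)\ge\Pr_x[f(x\vee z)=b]\ge1-\epsilon$, with the input $x$ playing the role of the fresh $y$ in the definition of $X^I$. Your detour is not sound as written: the claim that ``conditioned on $S_j$ being great, $z\vee S_j$ is a point from which a fresh $y\sim\mu$ reaches $b$ with probability $\ge1-\epsilon$'' does not follow (there is no monotonicity assumption on $f$), and your fix of using a second block $S_{j'}$ as the fresh sample evaluates $f$ at $z\vee S_j\vee S_{j'}$, whose ``fresh part'' $z\vee S_{j'}$ is distributed as $\mu^{(2)}$, about which Condition I says nothing.

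In Case II the obstacle you flag is genuine and your proposed patches do not close it. Condition II bounds only the $\mu$-measure of $\mathcal{E}=\{x:\Pr_y[f(x\vee y)=b]<\epsilon\}$; it gives no control of $\mu^{(i)}(\mathcal{E})$ for $i\ge2$, so the ``union bound over the $k$ intermediate distributions'' has nothing to sum, and $\mathcal{E}$ has no useful monotonicity --- indeed this is precisely why the general-range version, Lemma~\ref{lem:typeI-II}, must strengthen Condition II to hold under all boostings $\mu^{(s)}$, $s\le t$. The paper's proof of the Boolean-range lemma sidesteps iterated ORs entirely: for a fixed input $x\in X^{II}$, each \emph{single} block $y^i\subseteq S$ satisfies $f(x\vee y^i)=b$ independently with probability $\ge\epsilon$, so $\Pr_{y^1,\dots,y^k}[\forall i,\ f(x\vee y^i)\ne b]\le(1-\epsilon)^k\le\epsilon^2/2$; combining this with $\mu(\mathcal{E})\le\epsilon/2$ and a Markov/averaging step over $S$ gives $\Pr_S[I^b_S(f)<1-\epsilon]\le\epsilon$. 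In other words, after a failed block you return to the original $x$ and try the next block alone, rather than carrying the partial OR forward; that single change eliminates the distribution-drift problem and is the missing idea in your plan (and, once adopted in Case I as well, your final Markov formulation $\Pr_{z,S}[b\notin f(B_S(z))]$ small does yield the lemma).
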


\begin{proof}
Let $S=\mathrm{supp}(y^1 \vee \dots \vee y^k)$, where  $y^1,\ldots,y^k \sim \mu$ are drawn independently. 
Let the sets $X^{I}$ and $X^{II}$  denote the following subsets of the input space $\{0,1\}^n$:
\begin{align*}
 X^{I} &= \{x : \Pr_y[f(x \vee y)=b] \ge1- \epsilon \}, \\
 X^{II} &= \{x : \Pr_y[f(x \vee y)=b] \ge \epsilon\}.
\end{align*}

If we are in the Type I setting, then $\Pr[X^I] > \epsilon/2$, and so  
\[
\Pr_{S}[S \text { contains some } x \in X^I] \ge 1- \Pr[ y^1,\ldots,y^k \not\in X^I] \geq 1-\left(1-\frac{\epsilon}{2}\right)^k> 1-\epsilon.
\]
Note that if there exists $z \in X^I$ which is a subset of $S$ then for every $x$, the two elements $x$ and $x \vee z$ can only differ  on a subset of $S$, and thus 
\[I^{b}_S(f) \ge  \Pr_x[f(x \vee z)=b] > 1-\epsilon.\] 

Now we turn our attention to  Condition II.  In this case, we shall prove that $ \Pr_S[I_S^b[f]<1-\epsilon] \leq \epsilon$. Indeed,
\begin{align}
  \Pr_S[I_S^b[f]<1-\epsilon] &\le \Pr_{y^1,\ldots,y^k}\left[\Pr_x[ \exists i \in [k], \; f(x \vee y^i)=b] < 1-\epsilon \right] \nonumber \\
   &= \Pr_{y^1, \ldots, y^k} \left[\Pr_x[\forall i \in [k], \; f(x \vee y^i) \neq b] \geq \epsilon\right]. \label{eq:pr_to_1}
\end{align}
To bound the last probability, for $x \in \{0,1\}^n$ let $E_x$ denote the  event that  for every $i \in [k]$, $f(x \vee y^i)\neq b$. Then
\[ \Pr_x[E_x]%=\Pr_x[E_x \; \wedge \; x \in X^{I}] + \Pr_x[E_x \; \wedge  \; x \in X^{II}]
\leq \Pr_x[x \not\in X^{II}] + \Pr_x[E_x \; \wedge \; x \in X^{II}] \leq \frac{\epsilon}{2} + \Pr_x[E_x \mid x \in X^{II}].\]
Plugging this into~\eqref{eq:pr_to_1}, we get
\begin{multline*}
 \Pr_S[I_S^b[f]<1-\epsilon]  \le \Pr_{y^1, \ldots, y^k} [\Pr_x[E_x] \geq \epsilon] \leq \Pr_{y^1,\ldots,y^k}\Big[\Pr_x [E_x \mid x \in X^{II}] \geq \frac{\epsilon}{2} \Big] \le \\\frac{1}{(\epsilon/2)} \Pr_{y^1, \ldots, y^k,x} [E_x \mid x \in X^{II}] .
 \end{multline*}
Since  $k \approx \frac{10 \log \frac{1}{\epsilon}}{\epsilon} $, 
\[ \Pr_{x,y^1, \ldots, y^k}[E_x \mid x \in X^{II}] \leq (1-\epsilon)^k \leq \frac{\epsilon^2}{2},\]
showing that 
\[  \Pr_S[I_S^b[f]<1-\epsilon] \le \frac{1}{(\epsilon/2)} \cdot \frac{\epsilon^2}{2} \le \epsilon. \qedhere \]
\end{proof}

%%%%%%%%%%%%%%%%%%%%%%%%%%%%%%%%%%%%%%%%%%%
\subsection{Small Bias Case}\label{sec:singleround_smallbias}
To handle the small bias case for the sake of proving  Theorem~\ref{thm:singleRoundSimplified}, one can simply repeat the argument of  Kahn et al.~\cite{KKL}, i.e.\ iteratively select  influential variables  and set them to the value that increases the probability of success. However, for the purposes of our results in the multi-round setting, we will need to prove a stronger result, which  states  that even if  the coalition is selected  \emph{randomly}, there is a nontrivial chance of succeeding in influencing the outcome.

\if\lipics0
 We start with the case that $\Pr[x_i=1] \in (1/4, 3/4)$ for every $i$.  The following lemma is proved in \cite{RSZ02} for the uniform distribution. By inspection, it is easy to check that the proof extends to any product distribution in which the marginal biases are bounded away from $0$ and $1$. 

\begin{lemma}[\cite{RSZ02}, Lemma 11 modified]\label{rsz}
Let $n\in \mathbb{N}$, $\gamma\in (0,1/2)$ and $m\leq n$. Assume $m > n/\gamma \log n$. Let $f\colon (\{0,1\}^n , \mu)\to \{0,1\}$, where $\mu$ is a product distribution such that  $\Ex[x_i]\in (\frac{1}{4}, \frac{3}{4})$ for each $i\in [n]$. If $\Ex[f]\geq \gamma$ then
\[
\Pr_{S\subseteq [n]\colon |S|=m} [I^1_S[f]\geq 1-\gamma]> \frac{1}{2}\left( \frac{m}{4n} \right)^{2^{\frac{80n}{m\gamma}}}.
\]	
\end{lemma}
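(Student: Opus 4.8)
The plan is to follow the proof of Lemma~11 of Russell, Saks and Zuckerman~\cite{RSZ02} essentially line by line, checking that each step survives the passage from the uniform measure to a product measure $\mu$ whose marginals all lie in $(1/4,3/4)$. Their argument rests on two interacting ingredients. The first is iterated KKL: a Boolean function $g$ on a coordinate set $V$ whose expectation is bounded away from $0$ and $1$ has a coordinate $i$ with $I_i(g)=\Omega\!\left(\Var(g)\,\tfrac{\log |V|}{|V|}\right)$, so that fixing coordinates one after another in the favorable direction — $\Theta(|V|/\log|V|)$ of them suffice to double a small expectation — drives the expectation from $\delta$ up through $\tfrac12$ and on to $1-\delta$. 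The second ingredient is that this boosting can be arranged to work even when the coalition $S$ is chosen \emph{uniformly at random}: one reveals the coordinates outside $S$ in a random order with random values, grouped into phases, and shows that each phase has a nonzero chance of revealing an influential coordinate whose revealed value is the favorable one, so that at the end, for a $(1-\gamma)$-fraction of the assignments to the complement of $S$ the coalition can still force $f=1$, i.e.\ $I^1_S(f)\ge 1-\gamma$. The recursive bookkeeping over the $\Theta(n/(m\gamma))$ levels of this argument — each level roughly squaring the surviving probability while losing a multiplicative factor proportional to $m/n$ — is what produces the doubly-exponential quantity $\tfrac12(m/4n)^{2^{80n/(m\gamma)}}$.

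Only two features of the measure enter the argument, and both are available for product measures with marginals in $(1/4,3/4)$. First, the KKL inequality in the form above holds for every such measure: the KKL proof uses nothing about the underlying measure beyond a hypercontractive inequality, and for $\mathrm{Bernoulli}(p)$ with $p\in[1/4,3/4]$ the hypercontractivity constant is bounded by an absolute constant, so the argument goes through with the same $\tfrac{\log|V|}{|V|}$ dependence and only the constants affected; alternatively one may simply invoke the product-measure KKL theorem of Bourgain, Kahn, Kalai, Katznelson and Linial~\cite{BKKKL}. Second, under such a $\mu$ every coordinate takes either of its two values with probability at least $1/4$ — this is exactly where the $1/4$ in $(m/4n)$ and the slack built into the constant $80$ come from, replacing the clean $1/2$ of the uniform case. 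Everything else in~\cite{RSZ02} — the reverse-Markov averaging steps, the choice (for coordinates inside $S$) of the value that increases the success probability, the phase recursion, and the accompanying concentration estimates — does not see the underlying product measure and transcribes with at most notational changes.

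I do not anticipate a real obstacle; the labor is in locating every appeal to uniformity in~\cite{RSZ02} and confirming that it is of one of the two harmless kinds above. The point that warrants the most care is the KKL input: one must check it is only ever applied to restrictions $g$ of $f$ whose expectation is genuinely bounded away from $0$ and $1$ (so that $\Var(g)=\Omega(\delta)$ throughout the relevant range), and that the surviving coordinate set stays large enough for the $\log|V|/|V|$ bound to remain useful — both hold because the process fixes at most $m\le n$ coordinates and halts as soon as the expectation passes $1-\gamma$. Granting the biased KKL inequality and the $1/4$ marginal bound, the stated probability bound then follows by repeating the RSZ computation.
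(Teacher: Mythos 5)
Your proposal matches the paper's treatment: the paper itself offers no new argument for this lemma, stating only that it is Lemma 11 of Russell, Saks and Zuckerman and that ``by inspection'' their proof extends to product distributions with marginals bounded away from $0$ and $1$ (the modified constants $4n$ and $80$ absorbing the bias). Your identification of the two places where the measure enters --- the KKL/hypercontractive input, valid with uniformly bounded constants for biases in $[1/4,3/4]$, and the lower bound of $1/4$ on the probability of each coordinate value --- is exactly the inspection the paper has in mind, so the approaches coincide.
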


We can extend Lemma~\ref{rsz} to somewhat higher biases by representing a $\mu_p$ distributed variable as an $\mathrm{AND}$ of $t$ variables that are $\mu_c$ distributed, where $c \approx 1/2$ and $p = c^t$.
\fi

\begin{lemma}\label{lem:single-unbiased}
Let $n\in \mathbb{N}$, $\gamma\in (0,1/2)$ and $m\leq n$. Let $f\colon (\{0,1\}^n , \mu)\to \{0,1\}$, where $\mu$ is a product distribution such that for all $i$, $1/n<\alpha \leq \Ex[x_i]\leq 1/2$. Assume $m > \frac{n\log{1/\alpha}}{2\gamma \log n}$. If $\Ex[f]\geq \gamma$ then
\[
\Pr_{S\subseteq [n]\colon |S|=m} \left[I^1_S[f]\geq 1-\gamma\right]> \frac{1}{2}\left( \frac{m}{4n\log{1/\alpha}} \right)^{2^{\frac{80n\log{1/\alpha}}{m\gamma}}}.
\]
\end{lemma}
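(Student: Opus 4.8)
The plan is to reduce Lemma~\ref{lem:single-unbiased} to the already-quoted Lemma~\ref{rsz} (the Russell--Saks--Zuckerman lemma), whose hypothesis requires the marginal biases to lie in $(1/4,3/4)$. The key device, hinted at in the text, is to represent each $\mu_p$-distributed coordinate with $p=\Ex[x_i]\in(\alpha,1/2]$ as a logical $\mathrm{AND}$ of a few independent coordinates whose biases are close to $1/2$. Concretely, if $p = c^{t}$ with $c\in[1/2,1/\sqrt2]$ roughly, then a $\mu_p$ variable has the same distribution as $\bigwedge_{j=1}^{t} z_j$ where the $z_j$ are i.i.d.\ $\mu_c$; here $t = \lceil \log_{1/c}(1/p)\rceil \le \Theta(\log(1/p)) \le \Theta(\log(1/\alpha))$, and $c$ is chosen in $[1/2, 3/4)$ so that $c^{t}=p$ exactly (solve $c = p^{1/t}$; since $t$ is chosen so that $p^{1/t}\in[1/2,3/4)$ this is consistent). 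This gives a measure-preserving map from a product $\nu$ on $\{0,1\}^{N}$ with $N\le n\log(1/\alpha)$ (each original coordinate blown up into at most $\lceil\log(1/\alpha)\rceil$ new ones, padding so all blocks have equal length if convenient) and all $\nu$-marginals in $[1/2,3/4)\subseteq(1/4,3/4)$, onto $(\{0,1\}^n,\mu)$.

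First I would set up this ``splitting'' map $\pi\colon\{0,1\}^{N}\to\{0,1\}^n$ carefully, verifying that $\pi$ pushes $\nu$ forward to $\mu$, and define $g = f\circ\pi\colon(\{0,1\}^{N},\nu)\to\{0,1\}$, so $\Ex_\nu[g] = \Ex_\mu[f]\ge\gamma$. Second, the crucial monotonicity-style observation: if a coalition $T\subseteq[N]$ succeeds in biasing $g$ towards $1$ on an input $w$, that means $T$ can be flipped to make $g(w')=1$; since $g$ only depends on $w$ through $\pi(w)$ and the blocks are connected to single original coordinates, controlling $T$ gives control of the original coordinates $S = \{i : T \text{ meets block } i\}$, and moreover $|S|\le|T|$. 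Hence $I^1_T[g]\le I^1_{S}[f]$ after translating through $\pi$ — more precisely, if $w\sim\nu$ then $\pi(w)\sim\mu$ and whenever $T$ can fix $g$ to $1$ the corresponding $S$ can fix $f$ to $1$, so $\Pr_w[I^1_{\pi(T)}[f\text{ at }\pi(w)]\ge 1-\gamma]$ is at least $\Pr_w[I^1_T[g\text{ at }w]\ge1-\gamma]$; I'd state this as a clean sublemma.

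Third, apply Lemma~\ref{rsz} to $g$ on $\{0,1\}^{N}$ with a coalition size $m' = m\log(1/\alpha)$ (so that $m'>N/(\gamma\log N)$ follows from the hypothesis $m>\frac{n\log(1/\alpha)}{2\gamma\log n}$, using $N\le n\log(1/\alpha)$ and $\log N \ge \log n$, which gives room up to the factor $2$). This yields $\Pr_{T:|T|=m'}[I^1_T[g]\ge1-\gamma] > \frac12(m'/4N)^{2^{80N/(m'\gamma)}}$. Fourth, I would push this down: a uniformly random $T$ of size $m'$ projects to some $S = \pi(T)$ of size $\le m'/1 = m'$... wait — I need $|S|\le m$. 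This is where care is needed: a random $T$ of size $m' = m\log(1/\alpha)$ projects to a random $S$ of size roughly $m'/\log(1/\alpha) = m$ in expectation but not exactly, so I would instead sample $T$ as a union of $m$ random blocks (one way: pick $m$ uniformly random original coordinates $S$, $|S|=m$, and let $T$ be all $\le m\log(1/\alpha)\le m'$ coordinates in those blocks). Then $|S| = m$ exactly, $|T|\le m'$, and success of $T$ at $g$ implies success of $S$ at $f$. A short argument comparing ``$m'$ random coordinates'' to ``$m$ random blocks'' — or, cleaner, re-deriving the RSZ bound directly with the block structure — gives the quantitative bound; plugging $N\le n\log(1/\alpha)$, $m' = m\log(1/\alpha)$, the exponent $2^{80N/(m'\gamma)}$ becomes $2^{80n\log(1/\alpha)/(m\gamma)}$ and the base $m'/4N$ becomes $m/(4n\log(1/\alpha))$, matching the claimed inequality.

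\textbf{The main obstacle} I expect is the bookkeeping in the last step: Lemma~\ref{rsz} gives a bound for a \emph{uniformly random} coalition of a fixed size on $\{0,1\}^N$, whereas I need a random coalition of a fixed size $m$ on $\{0,1\}^n$ whose blown-up version has size at most $m'$ but is \emph{not} uniformly distributed among size-$m'$ subsets of $[N]$ (it is supported on block-unions). The honest fix is to observe that the proof of Lemma~\ref{rsz} in \cite{RSZ02} — which iteratively finds influential variables and conditions — goes through verbatim when the ambient product space is $\{0,1\}^N$ organized into $n$ blocks, the target function $g$ is block-measurable, and the ``random coalition'' is a union of $m$ random blocks rather than $m$ random coordinates; the only place coordinate-uniformity is used is in a counting bound on how many good coordinates the greedy process can miss, and the analogous bound holds at the block level with $n$ in place of $N$ and $\log(1/\alpha)$-sized blocks accounting for the $\log(1/\alpha)$ factors. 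I would therefore state and prove the block-structured version of Lemma~\ref{rsz} as the technical heart, and everything else is routine translation.
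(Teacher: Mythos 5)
Your reduction — writing each $\mu_p$ coordinate as an AND of $O(\log(1/\alpha))$ coordinates with bias bounded away from $0$ and $1$, passing to $g=f\circ\pi$, and using the monotone projection of a blown-up coalition onto the blocks it meets — is exactly the paper's reduction. The divergence, and the genuine gap, is in your endgame. You apply Lemma~\ref{rsz} with coalition size $m'=m\log(1/\alpha)$, which is both problematic in itself ($m'$ can exceed $N=\sum_i t_i$, e.g.\ when most coordinates have bias near $1/2$ so most $t_i=1$ but one coordinate forces $\alpha$ small; also your final bookkeeping misplaces the $\log(1/\alpha)$ factors, since $2^{80N/(m'\gamma)}\le 2^{80n/(m\gamma)}$ and $m'/4N\ge m/4n$) and, more importantly, creates the size problem you then notice. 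Your fix — sampling $T$ as a union of $m$ random blocks and re-deriving a ``block-structured'' version of Lemma~\ref{rsz} by reopening the proof of \cite{RSZ02} — is precisely the unproven technical heart of your write-up: you assert, without checking, that coordinate-uniformity is used in only one counting step of that proof. As it stands this is a claim, not a proof, and it is also unnecessary.

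The missing idea is simpler: apply Lemma~\ref{rsz} to $g$ on $\{0,1\}^N$ with coalition size $m$ itself (the hypothesis $m>\frac{n\log(1/\alpha)}{2\gamma\log n}$ is what makes this legitimate, since $N\le n\log(1/\alpha)$ and $\log N\ge\log n$), so the base and exponent come out as $\frac{m}{4N}\ge\frac{m}{4n\log(1/\alpha)}$ and $2^{80N/(m\gamma)}\le 2^{80n\log(1/\alpha)/(m\gamma)}$ directly, and the projected coalition $S'=\{i: \text{block } i \text{ meets } S\}$ automatically has size at most $m$ because projection only shrinks. This is what the paper does; the only remaining issue is the distributional mismatch between ``projection of a uniform size-$m$ subset of $[N]$'' and ``uniform size-$m$ subset of $[n]$,'' which the paper dispatches with a coupling argument rather than a new RSZ-type lemma. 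One clean way to make that coupling precise: pad every block to exactly $\lceil\log(1/\alpha)\rceil$ coordinates with dummy bias-$1/2$ variables ignored by $g$; then by symmetry the law of $S'$ is exchangeable over $[n]$, hence conditioned on $|S'|=s\le m$ it is uniform over size-$s$ sets, which is stochastically dominated by a uniform size-$m$ set, and monotonicity of the event $I^1_S[f]\ge 1-\gamma$ transfers the probability bound. (Some such symmetrization is genuinely needed: without equal block sizes the projected set is biased toward large blocks and domination by the uniform size-$m$ distribution can fail.) So: same reduction as the paper, but your proposal leaves its quantitative core to an unverified rederivation of Lemma~\ref{rsz}, where choosing coalition size $m$ and a short domination/coupling step would have closed the argument.
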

\if\lipics0
\begin{proof}
The lemma is proved by a reduction to Lemma~\ref{rsz}. Let $\mu_i:=\Ex[x_i]$. For each variable $i\in [n]$, we pick $c_i\in (\frac{1}{4}, \frac{3}{4})$ and an integer $t_i\leq \log 1/\alpha$ such that $\mu_i=c_i^{t_i}$: first choose $t_i \leq \log_{1/4} \alpha$ so that $\left(\frac{1}{4}\right)^{t_i} < \mu_i < \left(\frac{3}{4}\right)^{t_i}$ (note the intervals are overlapping since $\left(\frac{3}{4}\right)^2 > \frac{1}{4}$), and then choose $c_i$ appropriately. For each variable $x_i$, introduce $t_i$ new variables $y_{i,1},\ldots y_{i,t_i}$. Consider $g\colon (\{0,1\}^{\sum_i t_i}, \mu')\to \{0,1\}$, where $\mu' = \prod_i \mu_{c_i}^{t_i}$ and
\[ g(y)= f\left(\bigwedge_{j=1}^{t_1} y_{1,j}, \ldots, \bigwedge_{j=1}^{t_n} y_{n,j}\right). \]
 We designed $g$ so that the input to $f$ is distributed according to $\mu$. Applying Lemma~\ref{rsz}, we deduce that typical $S$ of size $m$ satisfy $I_S^1[g] \geq 1-\gamma$. Let $S' = \{ x_i : y_{i,j} \in S \text{ for some } j\}$. A moment's thought shows that $I_{S'}^1[f] \geq 1-\gamma$. A simple coupling argument now completes the proof. %\yfcomment{I hope the last bit is convincing...}
\end{proof}
\else
\begin{proof}
The result is proved in~\cite{RSZ02}	 for constant $\alpha$. The general case follows by representing a $\mu_p$ distributed variable as an AND of $t$ variables that are distributed according $\mu_c$, where $c \approx 1/2$ and $p = c^t$. The complete details appear in the full version of the paper.
\end{proof}
\fi

%%%%%%%%%%%%%%%%%%%%%%%%%%%%%%%%%%%%%%%%%%%%%%%%%%%
\subsection{Finishing the Proof: Combining the Two Cases} \label{sec:combine} \label{sec:combined-thmsingleroundsimple}
We are ready to finish the proof of Theorem~\ref{thm:singleRoundSimplified}. Let $A:=\{i: \Pr_{x\sim \mu}[x_i=1] \in (\alpha_0, \frac{1}{2}]\}$, and recall that $\alpha_0=\frac{1}{\log n}$.  For every $y\in \{0,1\}^A$, define $f_y\colon \{0,1\}^{[n]\backslash A}\to \{0,1\}$ as $f_y(z):= f(y,z)$. By Proposition~\ref{prop:single-biased}, for every $y\in \{0,1\}^A$, there exists $b:=b_y\in \{0,1\}$ such that
\[
\Pr_{S\sim\mu_{[n] \backslash A}^{(k)}}\left[I_S^b[f_y]>1-\frac{\epsilon}{2}\right]>1-\frac{\epsilon}{2},
\]
where $k=O\left(\frac{\log(1/\epsilon)}{\epsilon}\right)$. Moreover, since every variable $i$  in $[n]\backslash A$ satisfies $\Ex[x_i]  \le \alpha_0 = 1/\log n$, Chernoff's bound gives,
\[
\Pr_{S\sim\mu_{[n] \backslash A}^{(k)}}\left[|S|\geq \frac{C\log (1/\epsilon) n}{\epsilon \log n}\right] \leq \exp \left(-\Omega\left(\frac{\log(1/\epsilon)n}{\epsilon \log n}\right)\right) \leq \frac{\epsilon}{2},
\]
for some constant $C>0$. Therefore,
\[
\Pr_{S\sim\mu_{[n] \backslash A}^{(k)}}\left[I_S^b[f_y]>1-\frac{\epsilon}{2} \text{ and } |S| \leq \frac{C\log (1/\epsilon) n}{\epsilon \log n} \right]>1-\epsilon.
\]
It follows that
\[
 \Exs_{S\sim\mu_{[n] \backslash A}^{(k)}} \left[\Pr_{y,b}\left[ I_S^b[f_y] \geq 1 - \frac{\epsilon}{2} \right]\right] > \frac{1-\epsilon}{2} \geq \frac{1}{4},
\]
assuming without loss of generality that $\epsilon \leq 1/2$. Hence, there exists a fixed $b_0\in \{0,1\}$ and a set $S$, satisfying $|S|\leq \frac{C\log (1/\epsilon) n}{\epsilon \log n}$ and  
\[
\Pr_{y}\left[ I_S^{b_0}[f_y]\geq 1- \frac{\epsilon}{2} \right] \geq \frac{1}{4}.
\]

Now, define $h\colon \{0,1\}^n\to \{0,1\}$ as $h(y)=1$ if and only if $I_S^{b_0}[f_{y\vert_A}]\geq 1-\epsilon/2$. Note that, $h$ depends only on $A$ variables. The above inequality asserts that $\Ex[h]\geq \frac{1}{4}$. Since, $A$ contains only small bias variables, we may apply Lemma~\ref{lem:single-unbiased}. Namely, there is $m=O(\frac{n\log\log n}{\epsilon \log n})$ such that
\[
\Pr_{T\subseteq [n]\colon |T|= m}\left[I_T^1[h]\geq 1-\frac{\epsilon}{2}\right] > 0.
\]
Thus, there exists a coalition $T\subseteq A$ of size $O(\frac{n \log\log n}{\epsilon \log n})$ of players that can bias $h$ towards $1$. In other words, $T$ can bias $y$ towards cases where $S$ is able to bias $f_y$ towards $b_0$. As a result, 
\[ I_{S\cup T}^{b_0}[f] \geq \left(1 - \frac{\epsilon}{2} \right)\left(1 - \frac{\epsilon}{2} \right) > 1-\epsilon. \]
Moreover, $|S\cup T| =  O\left(\frac{n \log \log n}{\epsilon \log n}+ \frac{\log(1/\epsilon) n}{\epsilon \log n}\right)$, as desired.

%%%%%%%%%%%%%%%%%%%%%%%%%%%%%%%%%%%%%%%%%%%%%
%%%   Proof of the Larger Range
%%%%%%%%%%%%%%%%%%%%%%%%%%%%%%%%%%%%%%%%%%%%%
\section{The Larger Range: Proof of Theorem~\ref{thm:singleRound_General}}\label{sec:largerange}

As outlined in the introduction, there are certain obstacles to generalizing  Theorem~\ref{thm:singleRoundSimplified}  to larger ranges.  In particular, the fact that the set $\mathcal{E}$ of all the points on which the coalition fails in Theorem~\ref{thm:singleRoundSimplified}  is of small measure does not seem to be a sufficiently strong condition for an induction to go through. We will need to prove a strengthening of  Theorem~\ref{thm:singleRoundSimplified} which shows that not only is $\mathcal{E}$ of small measure, but it is also small if it is measured via the boosted distributions introduced in Definition~\ref{def:boosted}.  This leads to a more general definition of influence.

\begin{definition}[Boosted influence towards value]
Let $\mathcal{R}$ be an arbitrary set. For a function $f\colon \{0,1\}^n \to \mathcal{R}$ and  $b \in \mathcal{R}$, define
\if\lipics1
$I_{S}^{b,t}(f) = \Pr_{x \sim \mu^{(t)}}[ b \in f(B_S(x))]$.

\else
\[ I_{S}^{b,t}(f) = \Pr_{x \sim \mu^{(t)}}[ b \in f(B_S(x))]. \]
\fi
Note that  $I_{S}^b(f)= I_{S}^{b,1}(f)$, as $\mu^{(1)}=\mu$. 
\end{definition}

The following lemma generalizes Lemma~\ref{lem:single-biased}, as we spell out in its corollary.

\begin{lemma}\label{lem:typeI-II}
Consider $f\colon \{0,1\}^n \to  \mathcal{R}$, let $t \in \mathbb{N}$, and let $S \sim \mu^{(k)}$, where $k =\frac{10 t}{\delta}  \log \frac{ t}{\epsilon}$. Let $b \in \mathcal{R}$. We have
\if\lipics1
$\Pr_S[\forall \ell \le t, \ I_{S}^{b,\ell}(f) \ge 1-  \epsilon] \ge  1 - \epsilon$,
\else
\[ \Pr_S[\forall \ell \le t, \ I_{S}^{b,\ell}(f) \ge 1-  \epsilon] \ge  1 - \epsilon, \]
\fi
if any of the following two cases hold: 
\begin{itemize}
\item Case I: For some $s \le t$, 
\if\lipics1
$\Pr_{u \sim \mu^{(s)}}[\Pr_{v \sim \mu^{(t)}}[f(u \vee v)=b] \ge 1- \epsilon/2]  \ge \delta$.
\else
\[ \Pr_{u \sim \mu^{(s)}}[\Pr_{v \sim \mu^{(t)}}[f(u \vee v)=b] \ge 1- \epsilon/2]  \ge \delta. \]
\fi

\item Case II:  For every $s \le t$, 
\if\lipics1
$\Pr_{u \sim \mu^{(s)}} [\Pr_{v \sim \mu^{(t)}}[f(u \vee v)=b] \ge  \delta] \ge 1-\frac{\epsilon}{2}$.
\else
\[ \Pr_{u \sim \mu^{(s)}} [\Pr_{v \sim \mu^{(t)}}[f(u \vee v)=b] \ge  \delta] \ge 1-\frac{\epsilon}{2}. \]
\fi
\end{itemize}
\end{lemma}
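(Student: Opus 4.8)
The plan is to mimic the proof of Lemma~\ref{lem:single-biased}, but carrying along the extra parameter $\ell \le t$ and the fact that samples from $\mu^{(\ell)}$ can always be viewed as sub-sampled from $\mu^{(t)}$ (since $x^1 \vee \cdots \vee x^\ell$ is dominated by $x^1 \vee \cdots \vee x^t$). The key observation is that if $z \sqsubseteq S$ with $z$ drawn from some $\mu^{(s)}$, then for \emph{any} $x$ (in particular for $x \sim \mu^{(\ell)}$ for every $\ell$) the strings $x$ and $x \vee z$ differ only inside $S$, so $b \in f(B_S(x))$ whenever $f(x \vee z) = b$. Write $S = \supp(y^1 \vee \cdots \vee y^k)$ with $y^1, \dots, y^k \sim \mu$ independent.

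\textbf{Case I.} Group the $k$ coordinates-generating samples into $\lfloor k/s \rfloor$ disjoint blocks of size $s$; each block gives an independent draw $u_j \sim \mu^{(s)}$, and each $u_j \sqsubseteq S$. By hypothesis each $u_j$ lands in the ``good'' set $U^I := \{u : \Pr_{v \sim \mu^{(t)}}[f(u \vee v) = b] \ge 1 - \epsilon/2\}$ with probability at least $\delta$ independently, so the probability that \emph{some} block is good is at least $1 - (1-\delta)^{\lfloor k/s \rfloor} \ge 1 - (1-\delta)^{k/t} \ge 1 - \epsilon$ by the choice $k = \tfrac{10t}{\delta}\log\tfrac{t}{\epsilon}$. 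Condition on a good block $u_j \sqsubseteq S$. Then for every $\ell \le t$, a sample $x \sim \mu^{(\ell)}$ can be coupled with a sample $v \sim \mu^{(t)}$ so that $x \sqsubseteq v$, hence $x \vee u_j \sqsubseteq v \vee u_j$; since $f(u_j \vee v) = b$ with probability $\ge 1-\epsilon/2$ over $v$... actually the cleaner statement: $u_j \sqsubseteq S$ gives $I_S^{b,\ell}(f) \ge \Pr_{x \sim \mu^{(\ell)}}[f(x \vee u_j) = b]$, and since $\mu^{(\ell)}$ is stochastically dominated by $\mu^{(t)}$ (couple $x \sim \mu^{(\ell)}$ and extend to $v \sim \mu^{(t)}$ with $x \sqsubseteq v$), and $f(u_j \vee v) = b$ forces... hmm, monotonicity of $f$ is not assumed, so I cannot directly push from $v$ down to $x$. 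Instead, the right move is to note that the conditioning in $U^I$ is on $\Pr_{v \sim \mu^{(t)}}$, so I should directly argue $I_S^{b,t}(f) \ge 1-\epsilon/2$ from $u_j \sqsubseteq S$, and then handle $\ell < t$ separately — which is precisely why the hypothesis in Case~I is phrased with the inner $\mu^{(t)}$ and an \emph{arbitrary} outer $s \le t$. So in fact one wants, for each $\ell \le t$, a good block of size $\ell$... The clean resolution: prove the statement ``$\forall \ell \le t$'' by applying the Case-I hypothesis with $s$ chosen to match each $\ell$; since we need it for all $\ell$ simultaneously we take a union bound over $\ell \le t$ (absorbing the factor $t$ into the $\log(t/\epsilon)$ in $k$), for each $\ell$ finding a disjoint good block of size $\ell$ with probability $\ge 1 - \epsilon/t$.

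\textbf{Case II.} Here we show $\Pr_S[\exists \ell \le t : I_S^{b,\ell}(f) < 1-\epsilon] \le \epsilon$ by a union bound over $\ell$, so fix $\ell$. As in Lemma~\ref{lem:single-biased}, bound $\Pr_S[I_S^{b,\ell}(f) < 1-\epsilon] \le \Pr_{y^1,\dots,y^k}[\Pr_{x \sim \mu^{(\ell)}}[\forall i,\, f(x \vee y^i) \ne b] \ge \epsilon]$; using that each $y^i \sim \mu = \mu^{(1)}$ and $1 \le t$, the Case-II hypothesis with $s = 1$ (or with $s$ chosen to group the $y^i$'s into blocks $u_j \sim \mu^{(s)}$, whichever makes the domination work) says $x \in U^{II}_s := \{x : \Pr_{u \sim \mu^{(s)}}[\ldots]\}$ — wait, the roles of $x$ and $u$ need care: in Case~II the outer variable is $u \sim \mu^{(s)}$ and inner is $v \sim \mu^{(t)}$, and we want to conclude something about $x \sim \mu^{(\ell)}$ being hit by a $\mu^{(t)}$-ish coalition. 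The argument parallels the original: split the $k$ witnesses into blocks of size $t$ giving independent $v_j \sim \mu^{(t)}$ all $\sqsubseteq S$; let $E_x$ be the event $\forall j,\ f(x \vee v_j) \ne b$; then $\Pr_x[E_x] \le \Pr_x[x \notin X^{II}] + \Pr_x[E_x \mid x \in X^{II}]$ where $X^{II} = \{x : \Pr_{v \sim \mu^{(t)}}[f(x \vee v) = b] \ge \delta\}$ — and here the Case-II hypothesis, applied with the value of $s$ equal to $\ell$, gives exactly $\Pr_{x \sim \mu^{(\ell)}}[x \notin X^{II}] \le \epsilon/2$. Then $\Pr_{x, v_1, \dots}[E_x \mid x \in X^{II}] \le (1-\delta)^{k/t} \le \epsilon^2/(2t)$ by the choice of $k$, and Markov plus the union bound over $\ell$ finishes it.

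\textbf{Main obstacle.} The genuinely delicate point — the reason this is not a rote rewrite of Lemma~\ref{lem:single-biased} — is the bookkeeping between the three different ``boost levels'': the coalition $S$ comes from $\mu^{(k)}$, we want control at every level $\ell \le t$ on the influence side, and the hypotheses mix an outer level $s \le t$ with a fixed inner level $t$. One must consistently exploit the single structural fact that a $\mu^{(a)}$-sample is stochastically below a $\mu^{(a+b)}$-sample under the coordinatewise order (realizable by an explicit coupling that just OR's in more independent $\mu$-samples), so that ``$z \sqsubseteq S$'' transfers across levels, while being careful \emph{not} to assume $f$ is monotone. The union bound over $\ell \le t$ costs only a $\log t$ factor in $k$, which is already present, so the quantitative bound goes through; getting the logic of which level matches which hypothesis right is the crux, and after that each of Case~I and Case~II is a line-by-line adaptation of the corresponding half of the proof of Lemma~\ref{lem:single-biased}.
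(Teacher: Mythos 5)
Your Case II is, after the mid-stream self-correction, essentially the paper's argument and is correct: split $S$ into blocks of size $t$ giving independent $v_j\sim\mu^{(t)}$ contained in $S$, use the Case II hypothesis with $s=\ell$ to bound $\Pr_{x\sim\mu^{(\ell)}}[x\notin X^{II}]\le\epsilon/2$, bound the conditional failure probability by $(1-\delta)^{k/t}$, and finish with Markov and a union bound over $\ell$. The genuine gap is in Case I. You correctly identified the obstacle — a block $u\subseteq S$ that is good in the sense $\Pr_{v\sim\mu^{(t)}}[f(u\vee v)=b]\ge 1-\epsilon/2$ only lower-bounds $I_S^{b,t}(f)$, because in $I_S^{b,\ell}(f)$ the averaged variable lives at level $\ell$, and without monotonicity you cannot pass from $v\sim\mu^{(t)}$ down to $x\sim\mu^{(\ell)}$ — but your proposed fix does not resolve it. First, the Case I hypothesis is assumed only for \emph{some} $s\le t$, so you are not entitled to ``apply it with $s$ chosen to match each $\ell$.'' Second, even if it did hold for $s=\ell$, that changes the \emph{seed} level only; the inner variable in the hypothesis is still $v\sim\mu^{(t)}$, so a good block of size $\ell$ would again control only $I_S^{b,t}$, not $I_S^{b,\ell}$. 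As written, your Case I proves only the $\ell=t$ instance of the conclusion.

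The missing idea (the paper's) is to transfer the excess boost $t-\ell$ from the inner variable into the seed via Markov's inequality: since $\Pr_{v\sim\mu^{(t)}}[f(u\vee v)\ne b]=\Ex_{w\sim\mu^{(t-\ell)}}\bigl[\Pr_{v\sim\mu^{(\ell)}}[f(u\vee v\vee w)\ne b]\bigr]$, every good $u$ satisfies $\Pr_{w\sim\mu^{(t-\ell)}}\bigl[\Pr_{v\sim\mu^{(\ell)}}[f(u\vee v\vee w)=b]>1-\epsilon\bigr]\ge 1/2$. Hence $u'=u\vee w\sim\mu^{(s+t-\ell)}$ is an ``$\ell$-good seed,'' i.e.\ $\Pr_{v\sim\mu^{(\ell)}}[f(u'\vee v)=b]\ge 1-\epsilon$, with probability at least $\delta/2$. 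Now split $S\sim\mu^{(k)}$ into $\lfloor k/(s+t-\ell)\rfloor$ disjoint blocks of size $s+t-\ell$ (not $s$, and not $\ell$); some block is an $\ell$-good seed with probability at least $1-(1-\delta/2)^{\lfloor k/(s+t-\ell)\rfloor}\ge 1-\epsilon/t$, and any such seed contained in $S$ immediately gives $I_S^{b,\ell}(f)\ge\Pr_{x\sim\mu^{(\ell)}}[f(x\vee u')=b]\ge 1-\epsilon$, since $x$ and $x\vee u'$ differ only inside $S$. A union bound over $\ell\le t$ then completes Case I, and the choice $k=\frac{10t}{\delta}\log\frac{t}{\epsilon}$ comfortably absorbs both the block length (at most $2t$) and the factor $t$ from the union bound — so the quantitative claim in your sketch is right, but only once this Markov step is supplied.
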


\if\lipics1
\begin{proof}
The complete proof can be found in the full version of the paper.	
\end{proof}
\else 
\begin{proof}

\textbf{Case I:}
Suppose the condition in Case I is satisfied. Fix a $u$, and  consider an $\ell \le t$. Since  
\[ \Pr_{v \sim \mu^{(t)}}[f(u \vee v)\neq b ] = \Exs_{w \sim \mu^{(t-\ell)}} \left[\Pr_{v \sim \mu^{(\ell)}}[f(u \vee v \vee w)\neq b ] \right], \]
 by Markov's inequality $\Pr_{v \sim \mu^{(t)}}[f(u \vee v)\neq b ] \le \epsilon/2$ would imply that 
\[ \Pr_{w \sim \mu^{(t-\ell)}} \left[\Pr_{v \sim \mu^{(\ell)}}[f(u \vee v \vee w)\neq b ] \ge \epsilon \right] \le 1/2, \]
or equivalently 
\[ \Pr_{w \sim \mu^{(t-\ell)}} \left[\Pr_{v \sim \mu^{(\ell)}}[f(u \vee v \vee w)\neq b] < \epsilon \right] \ge 1/2. \]
In other words, 
\[ \Pr_{v \sim \mu^{(t)}}[f(u \vee v) = b ] \ge 1- \epsilon/2 \Longrightarrow \Pr_{w \sim \mu^{(t-\ell)}} \left[\Pr_{v \sim \mu^{(\ell)}}[f(u \vee v \vee w)=b  ] > 1- \epsilon \right] \ge 1/2. \]
 Averaging over $u$, we conclude that  by the assumption of Case I, we have  
\[ \delta/2 \le   \Pr_{\substack{u \sim \mu^{(s)} \\ w \sim \mu^{(t-\ell)}}} \left[\Pr_{v \sim \mu^{(\ell)}}[f(u \vee v \vee w)= b ] >1- \epsilon \right]= \Pr_{u \sim \mu^{(s+t-\ell)}}[\Pr_{v \sim \mu^{(\ell)}}[f(u \vee v) = b] \ge 1-  \epsilon]. \]

Hence, recalling that $S \sim \mu^{(k)}$,  the probability that there exists $u \subseteq S$ such that $\Pr_{v \sim \mu^{(\ell)}} [f(u \vee v)=b] \ge 1- \epsilon$   is at least 
\[ 1 - (1-\delta/2)^{\lfloor k/(s+t-\ell) \rfloor} \ge 1 - \epsilon/t.\]
But if this event happens then $I_{S}^{b,\ell}(f) \ge 1-  \epsilon$. Hence  by the union bound,
\[ \Pr_S[\forall \ell \le t,\ I_{S}^{b,\ell}(f) \ge 1-   \epsilon] \ge1 - \epsilon. \]

\smallskip\textbf{Case II:}
Next assume that the condition in Case II is satisfied. Consider an $\ell \le t$. Define
\[ \mathcal{B}=\{u : \Pr_{v \sim \mu^{(t)}}[f(u \vee v)=b]  <  \delta\}, \]
and note that by our assumption
\[ \mu^{(\ell)}(\mathcal{B})  \le  \frac{\epsilon}{2}. \]

Since $S \sim \mu^{(k)}$, we can set $S=y^1 \vee \dots \vee y^{k/t}$, where $y^i \sim \mu^{(t)}$ are i.i.d.\ random variables.
This shows that
\[
 I_S^{b,\ell}(f) \geq \Pr_{u \sim \mu^{(\ell)}}[f(u \lor y^i) = b \text{ for some } i].
\]
Hence
\[
 \Pr_S[I_S^{b,\ell} < 1-\epsilon] \leq
 \Pr_{y^1,\ldots,y^{k/t} \sim \mu^{(t)}}[\Pr_{u \sim \mu^{(\ell)}}[f(u \lor y^i) \neq b \text{ for all } i] > \epsilon].
\]

Define now
\[
 p_{y^1, \ldots, y^{k/t}} = \Pr_{u \sim \mu^{(\ell)}}[f(u \lor y^i) \neq b \text{ for all } i \text{ and } u \notin \mathcal{B}],
\]
and notice that
\[
 \Pr_{u \sim \mu^{(\ell)}}[f(u \lor y^i) \neq b \text{ for all } i] \leq \mu^{(\ell)}(\mathcal{B}) + p_{y^1, \ldots, y^{k/t}} \leq
 \frac{\epsilon}{2} + p_{y_1, \ldots, y^{k/t}}.
\]
Therefore,
\begin{multline*}
 \Pr_S[I_S^{b,\ell} < 1-\epsilon] \leq \Pr_{y^1,\ldots,y^{k/t} \sim \mu^{(t)}} \left[p_{y^1,\ldots,y^{k/\ell}} > \frac{\epsilon}{2}\right] \leq \frac{2}{\epsilon} \Exs_{y^1,\ldots,y^{k/t} \sim \mu^{(t)}}[p_{y^1,\ldots,y^{k/\ell}}] = \\
 \frac{2}{\epsilon} \Pr_{\substack{u \sim \mu^{(\ell)} \\ y^1,\ldots,y^{k/t} \sim \mu^{(t)}}}[f(u \lor y^i) \neq b \text{ for all } i \text{ and } u \notin \mathcal{B}].
\end{multline*}
When $u \notin \mathcal{B}$, the probability that $f(u \lor y^i) \neq b$ is at most $1-\delta$, and so
\[
 \Pr_{\substack{u \sim \mu^{(\ell)} \\ y^1,\ldots,y^{k/t} \sim \mu^{(t)}}}[f(u \lor y^i) \neq b \text{ for all } i \text{ and } u \notin \mathcal{B}] =
 \Exs_{u \sim \mu^{(\ell)}}[1_{\overline{\mathcal{B}}} \cdot \Pr_{y \sim \mu^{(t)}}[f(u \lor y) \neq b]^{k/t}] \leq (1-\delta)^{k/t} \leq \frac{\epsilon^2}{2t}.
\]
This shows that $\Pr_S[I_S^{b,\ell} < 1-\epsilon] \leq \epsilon/t$. We complete the proof by an application of the union bound.
\end{proof}
\fi

\if\lipics0
\begin{corollary}
 Consider $f\colon \{0,1\}^n \to \{0,1\}$, let $t \in \mathbb{N}$, and $S \sim \mu^{(k)}$, where $k =\frac{20 t}{\epsilon} \log \frac{ t}{\epsilon}$. At least for one of the values $b \in \{0,1\}$, we have
\[ \Pr_S[\forall \ell \le t, \ I_{S}^{b,\ell}(f) \ge 1-\epsilon] \ge  1 - \epsilon. \]
 \end{corollary}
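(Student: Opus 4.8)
The plan is to derive this corollary directly from Lemma~\ref{lem:typeI-II} by instantiating it with $\delta = \epsilon/2$, so that the quantity $\frac{10t}{\delta}\log\frac{t}{\epsilon}$ appearing there becomes exactly $\frac{20t}{\epsilon}\log\frac{t}{\epsilon} = k$. It then suffices to show that, for the Boolean range $\mathcal{R}=\{0,1\}$, at least one of the two values $b \in \{0,1\}$ satisfies Case I or Case II of the lemma with this $\delta$; the conclusion of the lemma for that particular $b$ is then precisely the statement of the corollary.

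To establish the dichotomy, I would start from the value $b=1$ and suppose Case II \emph{fails} for it. Then there is some $s \le t$ with $\Pr_{u\sim\mu^{(s)}}\bigl[\Pr_{v\sim\mu^{(t)}}[f(u\vee v)=1] \ge \epsilon/2\bigr] < 1-\epsilon/2$, equivalently $\Pr_{u\sim\mu^{(s)}}\bigl[\Pr_{v\sim\mu^{(t)}}[f(u\vee v)=1] < \epsilon/2\bigr] > \epsilon/2$. This is the only point where the Boolean hypothesis is used: since $f$ is $\{0,1\}$-valued, for each fixed $u$ we have $\Pr_{v}[f(u\vee v)=1] < \epsilon/2$ if and only if $\Pr_{v}[f(u\vee v)=0] > 1-\epsilon/2$. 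Hence $\Pr_{u\sim\mu^{(s)}}\bigl[\Pr_{v\sim\mu^{(t)}}[f(u\vee v)=0] \ge 1-\epsilon/2\bigr] > \epsilon/2 = \delta$, which is exactly Case I for the value $b=0$, witnessed by this same $s$.

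Combining the two halves: if Case II holds for $b=1$, apply Lemma~\ref{lem:typeI-II} with $b=1$; otherwise Case I holds for $b=0$, and we apply the lemma with $b=0$. In either situation we conclude $\Pr_S[\forall \ell \le t,\ I_{S}^{b,\ell}(f) \ge 1-\epsilon] \ge 1-\epsilon$ for the corresponding $b \in \{0,1\}$, as desired. There is no real obstacle beyond spotting the correct choice of $\delta$ and observing that the passage from "$f \ne b$ is unlikely" to "$f = b'$ is likely for a single $b'$" is available only because $|\mathcal{R}|=2$ — which is exactly the failure point, flagged in the introduction, that forces the new ideas needed for larger ranges.
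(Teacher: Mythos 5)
Your proof is correct and follows exactly the paper's route: the paper's own one-line proof is ``Setting $\delta=\epsilon/2$, either Case I holds for $b=0$ or Case II holds for $b=1$,'' and your argument simply spells out that dichotomy (together with the check that $\delta=\epsilon/2$ turns the lemma's $k$ into $\frac{20t}{\epsilon}\log\frac{t}{\epsilon}$). Nothing is missing.
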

 \begin{proof}
 Setting $\delta=\epsilon/2$, either Case I holds for $b=0$ or Case II holds for $b=1$. 
 \end{proof}
 
 Another corollary allows the function $f$ to attain a third value $\dagger$, as long as its probability is small enough (with respect to various boostings of $\mu$).
 
 \begin{corollary}
 \label{cor:baseCase}
 Consider $f\colon \{0,1\}^n \to \{0,1,\dagger\}$, let $t \in \mathbb{N}$, and let $S \sim \mu^{(k)}$, where $k=k(t,\epsilon) =\frac{40 t}{\epsilon} \log \frac{ t}{\epsilon}$.  If $\Pr_{x \sim \mu^{(\ell)}}[f(x)=\dagger] <  (\epsilon/4)^2$ for all $\ell \le 2t$, then at least for one of the values $b \in \{0,1\}$, we have
\[ \Pr_S[\forall \ell \le t, \ I_{S}^{b,\ell}(f) \ge 1-  \epsilon] \ge  1 - \epsilon. \]
 \end{corollary}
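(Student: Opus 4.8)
The plan is to reduce Corollary~\ref{cor:baseCase} to Lemma~\ref{lem:typeI-II} by recoloring the rare value $\dagger$ to one of $0,1$, and showing that the hypothesis controlling the $\mu^{(\ell)}$-mass of $\dagger$ for all $\ell\le 2t$ is exactly what is needed to preserve the dichotomy used in the Boolean case. Concretely, I would introduce two Boolean functions $g_0,g_1\colon\{0,1\}^n\to\{0,1\}$: let $g_1(x)=1$ iff $f(x)=1$ (so $g_1$ sends both $0$ and $\dagger$ to $0$) and $g_0(x)=0$ iff $f(x)=0$ (so $g_0$ sends both $1$ and $\dagger$ to $1$). The point of these choices is that no input whose $f$-value equals $b$ is recolored, so $b\in f(B_S(x))$ iff $b\in g_b(B_S(x))$, and hence $I_S^{1,\ell}(g_1)=I_S^{1,\ell}(f)$ and $I_S^{0,\ell}(g_0)=I_S^{0,\ell}(f)$ for every $S$ and every $\ell$. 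It therefore suffices to establish the conclusion of the corollary for $g_1$ with $b=1$ or for $g_0$ with $b=0$.

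Next I would set $\delta=\epsilon/4$, chosen so that $k(t,\epsilon)=\frac{40t}{\epsilon}\log\frac{t}{\epsilon}$ equals the quantity $\frac{10t}{\delta}\log\frac{t}{\epsilon}$ required by Lemma~\ref{lem:typeI-II}, and prove the dichotomy: either Case~II of Lemma~\ref{lem:typeI-II} holds for $g_1$ with $b=1$, or Case~I holds for $g_0$ with $b=0$. Suppose Case~II fails for $g_1$; then there is $s\le t$ with $\Pr_{u\sim\mu^{(s)}}[\Pr_{v\sim\mu^{(t)}}[f(u\vee v)=1]<\delta]>\epsilon/2$. This is where the $\dagger$-hypothesis enters: since $s+t\le 2t$, we have $\Pr_{u\sim\mu^{(s)},\,v\sim\mu^{(t)}}[f(u\vee v)=\dagger]=\Pr_{x\sim\mu^{(s+t)}}[f(x)=\dagger]<(\epsilon/4)^2$, so by Markov's inequality the set $D$ of those $u$ with $\Pr_{v\sim\mu^{(t)}}[f(u\vee v)=\dagger]\ge\epsilon/4$ satisfies $\mu^{(s)}(D)\le\epsilon/4$. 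Consequently there is a set of $u$'s of $\mu^{(s)}$-measure larger than $\epsilon/2-\epsilon/4=\epsilon/4=\delta$ on which simultaneously $\Pr_v[f(u\vee v)=1]<\delta$ and $\Pr_v[f(u\vee v)=\dagger]<\epsilon/4$; for such $u$, $\Pr_v[f(u\vee v)=0]>1-\delta-\epsilon/4=1-\epsilon/2$. Since $f(u\vee v)=0$ iff $g_0(u\vee v)=0$, this is precisely Case~I of Lemma~\ref{lem:typeI-II} for $g_0$ with $b=0$ (for this value of $s$).

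In the first alternative I would apply Lemma~\ref{lem:typeI-II} to $g_1$ with $b=1$ and $\delta=\epsilon/4$, and in the second to $g_0$ with $b=0$ and $\delta=\epsilon/4$; in either case this yields $\Pr_S[\forall\ell\le t,\ I_S^{b,\ell}(g_b)\ge 1-\epsilon]\ge 1-\epsilon$, and translating back through $I_S^{b,\ell}(g_b)=I_S^{b,\ell}(f)$ completes the proof. I expect the only delicate point to be the accounting of the error budgets so that everything fits: the negation of Case~II produces a set of $\mu^{(s)}$-measure at least $\epsilon/2$, from which we must be able to discard the $\le\epsilon/4$-fraction of $u$'s where $\dagger$ is non-negligible while still retaining a $\ge\delta$-fraction on which the Case~I inequality holds with the exact threshold $1-\epsilon/2$. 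This is what forces the choices $\delta=\epsilon/4$ and $\Pr[f=\dagger]<(\epsilon/4)^2$, and one must check that the resulting $\frac{10t}{\delta}\log\frac{t}{\epsilon}$ matches the $k(t,\epsilon)$ in the statement, as it does.
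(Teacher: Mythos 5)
Your proposal is correct and follows essentially the same route as the paper: with $\delta=\epsilon/4$ you establish the dichotomy that either Case~II of Lemma~\ref{lem:typeI-II} holds for $b=1$ or Case~I holds for $b=0$, using the $\dagger$-mass bound at level $s+t\le 2t$ (your Markov step is the same computation as the paper's contradiction via the set $\mathcal{B}$), and the value of $k$ matches. The only cosmetic difference is the recoloring into $g_0,g_1$, which is harmless but unnecessary since Lemma~\ref{lem:typeI-II} is already stated for an arbitrary range $\mathcal{R}$ and can be applied to $f$ directly.
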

\begin{proof}
 Set $\delta=\epsilon/4$, and fix an $s$. Suppose that  neither Case I  holds for $b=0$,  nor Case II holds for $b=1$. That is  
 \begin{equation}
 \label{eq:caseI_fail}
 \Pr_{u \sim \mu^{(s)}}[\Pr_{v \sim \mu^{(t)}}[f(u \vee v)=0] \ge 1- \epsilon/2] <\epsilon/4,
 \end{equation}
and
\begin{equation}
  \label{eq:caseII_fail}
 \Pr_{u \sim \mu^{(s)}}[\Pr_{v \sim \mu^{(t)}}[f(u \vee v)=1] \ge \epsilon/4] < 1 - \epsilon/2.
 \end{equation}
 Let 
 \[ \mathcal{B}=\left\{u:  \Pr_{v \sim \mu^{(t)}}[f(u \vee v)=0] \ge 1- \epsilon/2 \ \mbox{or} \Pr_{v \sim \mu^{(t)}}[f(u \vee v)=1] \ge \epsilon/4\right\}, \]
 and note that for every $u \not\in \mathcal{B}$ we have  
 \[ \Pr_{v \sim \mu^{(t)}}[f(u \vee v)=\dagger] \ge \epsilon/4. \]
  On the other hand,  by (\ref{eq:caseI_fail}) and (\ref{eq:caseII_fail}) we have
 \[ \Pr_{u \sim \mu^{(s)}}[u \in \mathcal{B}] < 1 - \frac{\epsilon}{4}. \]
 This is a contradiction, as it implies that   
     \[ \Pr_{x \sim \mu^{(s+t)}}[f(x)=\dagger] = \Pr_{\substack{u \sim \mu^{(s)} \\ v \sim \mu^{(t)}}}[f(u \vee v)=\dagger] \ge \Pr_{u \sim \mu^{(s)}}[u \not \in \mathcal{B}] \Pr_{\substack{u \sim \mu^{(s)} \\ v \sim \mu^{(t)}}}[f(u \vee v)=\dagger| u \not\in \mathcal{B}] \ge  \left(\frac{\epsilon}{4}\right)^2 . \]
Hence for every $s$, either Case I holds for $b=0$, or Case II holds for $b=1$, and thus Lemma~\ref{lem:typeI-II} implies the corollary.  
 \end{proof}
 \fi
 
 \if\lipics1
 We can now state the main result of this section.
 \else
 We can now state and prove the main result of this section, which generalizes Corollary~\ref{cor:baseCase} to allow more output bits.
 \fi The failure output $\dagger$ allows the inductive proof of Theorem~\ref{thm:strong}, as well as our multi-round result, Theorem~\ref{thm:multiRoundDetailed}, to go through, as we explain in Section~\ref{sec:multi-round}.

\begin{theorem}\label{thm:strong}
Let $f\colon \{0,1\}^n \to \{0,1\}^m \cup \{\dagger\}$,  and suppose that $\{0,1\}^n$ is endowed with a  probability measure $\mu$. Let $t$ be a positive integer, and let $S \sim \mu^{(k)}$, where $k=k(m,t,\epsilon) = O(tm^3\epsilon^{-2} \log \frac{tm}{\epsilon})$. If $\Pr_{\mu^{(\ell)}}[\dagger] < \frac{\epsilon^4}{2^{16}}$ for every $\ell \le 2t$, then  
there exists a value $b \in \{0,1\}^m$ such that
\if\lipics1
$\Pr_S \left[\forall \ell \le t, \ I_{S}^{b,\ell}(f) \ge 1 -\epsilon \right] \ge 1 - \epsilon$.
\else
\[ \Pr_S \left[\forall \ell \le t, \ I_{S}^{b,\ell}(f) \ge 1 -\epsilon \right] \ge 1 - \epsilon. \]
\fi
\end{theorem}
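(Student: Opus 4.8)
The plan is to prove Theorem~\ref{thm:strong} by induction on $m$ (equivalently, on $\log|\mathcal{R}|$). The base case $m=1$ is Corollary~\ref{cor:baseCase}: its hypothesis $\mu^{(\ell)}(\dagger)<(\epsilon/4)^2$ is implied by the stronger bound $\mu^{(\ell)}(\dagger)<\epsilon^4/2^{16}$, and its coalition size $\tfrac{40t}{\epsilon}\log\tfrac{t}{\epsilon}$ fits within the allowed $O(t\epsilon^{-2}\log\tfrac{t}{\epsilon})$. For the inductive step I would split the random coalition as $S=S_1\vee S_2$ with $S_1\sim\mu^{(k_1)}$ and $S_2\sim\mu^{(k_2)}$ drawn independently, so that $k=k_1+k_2$. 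Let $g\colon\{0,1\}^n\to\{0,1,\dagger\}$ record the first output coordinate of $f$ (sending $\dagger$ to $\dagger$); since $\mu^{(\ell)}(g=\dagger)=\mu^{(\ell)}(f=\dagger)$ is tiny, Corollary~\ref{cor:baseCase} applied to $g$ with a small error parameter $\epsilon_1$ produces a value $b_1\in\{0,1\}$ such that a random $S_1$ satisfies $I_{S_1}^{b_1,\ell}(g)\ge 1-\epsilon_1$ (for every $\ell$ in the range we need) except with probability $\epsilon_1$; crucially $S_1$ is sampled from a distribution that does not depend on $f$. Conditioning on such a ``good'' $S_1$, define the residual function $h_{S_1}\colon\{0,1\}^n\to\{0,1\}^{m-1}\cup\{\dagger\}$ by letting $h_{S_1}(x)$ be the last $m-1$ coordinates of $f(y)$, where $y$ is the lexicographically least element of $B_{S_1}(x)$ with $f(y)\ne\dagger$ and first coordinate $b_1$, and $h_{S_1}(x)=\dagger$ when no such $y$ exists. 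Because $S_1$ is good, $\mu^{(\ell)}(h_{S_1}=\dagger)=1-I_{S_1}^{b_1,\ell}(g)\le\epsilon_1$ in the required range, so if $\epsilon_1$ is small enough relative to the next level's error parameter, $h_{S_1}$ meets the $\dagger$-hypothesis of Theorem~\ref{thm:strong} at level $m-1$. Using the nesting $B_{S_1}\!\big(B_{S_2}(x)\big)\subseteq B_{S_1\vee S_2}(x)$ one gets, for every $b'\in\{0,1\}^{m-1}$, $I_{S_1\vee S_2}^{(b_1,b'),\ell}(f)\ge I_{S_2}^{b',\ell}(h_{S_1})$, so applying the inductive hypothesis to $h_{S_1}$ and combining the two coalitions would finish --- if only the value $b'$ it returns were independent of $S_1$.

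That caveat is the main obstacle. The residual $h_{S_1}$, and hence the $b'$ returned by the inductive hypothesis, genuinely depends on $S_1$, and paying a union bound over the $2^{m-1}$ candidate values of $b'$ would inflate the coalition size from $\mathrm{poly}(m)$ to $2^{\Omega(m)}=\mathrm{poly}(|\mathcal{R}|)$ --- exactly the blow-up the theorem must avoid (this is why the remark after Theorem~\ref{thm:singleRound_General} stresses the polylogarithmic dependence on $|\mathcal{R}|$). The fix is to make the target $b=(b_1,\dots,b_m)$ depend only on $f$ and $\mu$, never on the coalition. For $b_1$ this is already the case: in Corollary~\ref{cor:baseCase} the bit $b_1$ is dictated by the Case~I/Case~II dichotomy of Lemma~\ref{lem:typeI-II}, whose conditions mention only boostings $\mu^{(s)},\mu^{(t)}$ and the function, not any $S$. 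The crux is then to show that the Case~I/Case~II statistics of $h_{S_1}$ that select the next bit agree, uniformly over good $S_1$ and up to an error controlled by $\epsilon_1$, with the same statistics for the \emph{idealized} residual obtained by literally conditioning $f$ on having first coordinate $b_1$ --- a function one cannot feed to the theorem directly, since its $\dagger$-set is typically large. This is precisely where the passage to \emph{boosted} influences $I_S^{b,\ell}$ and the hypothesis that $\mu^{(\ell)}(\dagger)$ is small for \emph{all} $\ell$ up to $2t$ pull their weight: trading a layer of $B_{S_1}(\cdot)$ for a shift of the boosting parameter $\ell$ lets one pass between $h_{S_1}$ and the idealized residual while staying inside the controlled range. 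Granting this comparison, the same $b_2$ --- and, iterating, the same $b$ --- works for all but an $o(\epsilon)$ fraction of good $S_1$, so the choice of $b$ can legitimately be pulled out in front of the probability over $S$.

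It remains to track how the error, the parameter $\delta$ in the invocations of Lemma~\ref{lem:typeI-II}, and the relevant boosting range degrade over the $m$ levels. Allotting each bit an $O(\epsilon/m)$ slice of the error budget keeps the accumulated error at $\epsilon$ and the accumulated quality at $1-\epsilon$, and a Chernoff bound controls $|S|$; the boosting levels encountered stay within the range the $\dagger$-hypothesis covers. The comparison between $h_{S_1}$ and the idealized residual forces $\delta$ to be taken polynomially small in $1/m$ at each level so that its contribution to the error is $o(\epsilon/m)$, which is the reason each single-step coalition size grows by a $\mathrm{poly}(m/\epsilon)$ factor and, after summing the $m$ levels, the exponent of $m$ comes out to be $3$; assembling these estimates yields $k=O\!\big(tm^3\epsilon^{-2}\log\tfrac{tm}{\epsilon}\big)$, as claimed.
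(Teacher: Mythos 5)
Your construction is, up to the point you call the crux, exactly the paper's: the same induction on $m$ with Corollary~\ref{cor:baseCase} as the base case, the same split of $f$ into its first bit and the remaining $m-1$ bits, the same selection map inside $B_{S_1}(x)$ producing a residual function with value $\dagger$ on the failure set, the same use of the range $\ell\le 2t$ so that the residual inherits smallness of $\dagger$ under all the boosted measures, and the same nesting inequality $I_{S_1\cup S_2}^{(b_1,b'),\ell}(f)\ge I_{S_2}^{b',\ell}(h_{S_1})$. Where you diverge is the step you explicitly flag as the crux: making the level-$(m-1)$ value independent of $S_1$ by showing that the Case~I/Case~II statistics (in the sense of Lemma~\ref{lem:typeI-II}) of $h_{S_1}$ agree, uniformly over good $S_1$, with those of an ``idealized residual'' of $f$. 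This is only asserted (``granting this comparison''), and it is precisely where a proof is needed. As stated it is not clearly available: the law of $\sigma_{S_1}(x)$ for $x\sim\mu^{(\ell)}$ is \emph{not} a boosting of $\mu$ --- the point chosen inside $B_{S_1}(x)$ is arbitrary and correlated with $S_1$ --- so the proposed trade of ``a layer of $B_{S_1}(\cdot)$ for a shift of the boosting parameter $\ell$'' has no justification, and no quantitative form of the comparison is formulated. Without it your argument only gives $\Pr_S[\exists b\colon \forall \ell\le t,\ I_S^{b,\ell}(f)\ge 1-\epsilon]\ge 1-\epsilon$, and extracting a fixed $b$ by pigeonhole over $\{0,1\}^{m-1}$ degrades the probability by a factor $2^{m-1}$, which destroys the statement.

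For comparison, the paper does not attempt any such comparison: it applies the induction hypothesis directly to the residual $g_{S_1}$ (conditioned on $S_1$ being good), obtains $b_2$, sets $b=(b_1,b_2)$, and multiplies the two success probabilities; the dependence of $b_2$ on $S_1$ that worries you is not discussed there at all. So you have identified a genuine subtlety of this induction, but your proposal does not resolve it --- it replaces the paper's silent step with an unproven claim that would itself be the hard part. In addition, your parameter accounting is asserted rather than derived, and it attributes the $m^3$ to a ``$\delta$ polynomially small in $1/m$'' forced by the comparison; in the paper the exponent comes from the recurrence $k(m,t,\epsilon)\le k(1,2t,\epsilon_1)+k(m-1,t,\epsilon_2)$ with $\epsilon_1=(\epsilon/8)^2$ and $\epsilon_2=\epsilon-\epsilon_1$, where the error parameter handed down the recursion decays like $\gamma\mapsto\gamma(1-\gamma/64)$ and hence is still $\Omega(\epsilon/m)$ after $m$ levels, giving $k=O\bigl(tm^3\epsilon^{-2}\log\frac{tm}{\epsilon}\bigr)$. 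To complete a proof along your lines you would need to state and prove the comparison lemma with explicit dependence on $\epsilon_1$, $\delta$ and the boosting range, and then redo this recursion; as it stands the proposal is incomplete at its central step.
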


\if\lipics1
\begin{proof}
The complete proof can be found in the full version of the paper.	
\end{proof}
\else
\begin{proof}
We prove this by induction on $m$. The base case $m=1$ is established in Corollary~\ref{cor:baseCase}.

 We divide $f(x)$ into two parts $f_1(x)$ and $f_2(x)$ corresponding to the first bit and the following $m-1$ bits, respectively. That is,   $f_1(x)=f_2(x)=\dagger$ if $f(x)=\dagger$, and otherwise $f_1(x)$ equals  the first  bit, and $f_2(x)$ equals  the last $m-1$ bits of $f(x)$. Let  $\epsilon_1=\left(\frac{\epsilon}{8}\right)^2$, $\epsilon_2=\epsilon - \epsilon_1$,  $k_1=k(1,2t,\epsilon_1)$, and $k_2=k(m-1,t,\epsilon_2)$.

Since $\Pr_{x \sim \mu^{(\ell)}}[f(x)=\dagger] < \left(\frac{\epsilon_1}{4}\right)^2$,  applying the base  case to $f_1$ with parameters $m=1$, $2t$ and $\epsilon_1$, we find a value $b_1 \in \{0,1\}$ such that 
\[ \Pr_{S_1 \sim \mu^{(k_1)}} [\forall \ell \le 2t, \ I_{S_1}^{b_1,\ell}(f_1) \ge 1 - \epsilon_{1}] \ge 1 - \epsilon_{1}. \]

Let us call an $S_1$ \emph{good} if for all  $\ell \le 2t$  we have $I_{S_1}^{b_1,\ell}(f_1) \ge 1 - \epsilon_{1}$. For a fixed \emph{good} $S_1$, for every $x$ satisfying $b_1 \in f_1(B_{S_1}(x))$,  let $\sigma_{S_1}(x)=y$, where $y \in B_{S_1}(x)$ is an element satisfying $f_{1}(y)=b_1$. We call such values of $x$  \emph{good} with respect to $S_1$. We call the other values of $x$ \emph{bad} with respect to $S_1$. Note that if $S_1$ is good, then not only a random $x \sim \mu$ is unlikely to be bad, but the same is true   if  $x \sim \mu^{(\ell)}$ for larger values of  $\ell$ as long as $\ell \le 2t$.  More precisely, for every $\ell \le 2t$, 
\[ \Pr_{x \sim \mu^{(\ell)}}[\mbox{$x$ is bad w.r.t.\ $S_1 \ |  \  S_1$ is good}] < \epsilon_1. \]

This stronger statement is the key property that will allow us to proceed with our strong induction. 

Now we need to force  the last $m-1$ bit of $f$. Let $g_{S_1}\colon \{0,1\}^n \to \{0,1\}^{m-1} \cup \{ \dagger\}$ be defined as $g_{S_1}(x) = f_2(\sigma_{S_1}(x))$ for good values of $x$, and $g_{S_1}(x)=\dagger$ for bad values of $x$. Note that
\[ \Pr_{x \sim \mu^{(\ell)}}[\mbox{$g_{S_1}(x)=\dagger \ |  \  S_1$ is good}] < \epsilon_1 \le \left(\frac{\epsilon_2}{4} \right)^2, \]
where the last inequality can be verified easily. 

Provided that $S_1$ is good, applying the induction hypothesis to $g_{S_1}$  with $\epsilon_2$ and $k_2$, we conclude that there exists a value $b_2 \in \{0,1\}^{m-1}$ such that 
\[ \Pr_{S_2 \sim \mu^{(k_2)}} [\forall \ell \le  t, \ I_{S_2}^{b_2,\ell}(g_{S_1}) \ge 1 - \epsilon_2] \ge 1 - \epsilon_2. \]
Let us call $S_2$ \emph{good} with respect to $S_1$ if the condition in the above probability holds. 

Now suppose that $S_1$ is good, and that $S_2$ is  good with respect to $S_1$. Then for a random $x \sim \mu^\ell$, with probability at least $1-\epsilon_2$, there is a $y \in B_{S_2}(x)$ with $g_{S_1}(y)=f_2(\sigma_{S_1}(y))=b_2$. On the other hand, $f_1(\sigma_{S_1}(y))=b_1$ and $\sigma_{S_1}(y) \in B_{S_1}(y)$. This shows that $z=\sigma_{S_1}(y)$ satisfies $z \in B_{S_1 \cup S_2}(x)$ and $f(z)=(f_1(x),f_2(x))=(b_1,b_2)=:b$. Hence conditioned on $S_1$ and $S_2$ being good, we have for $S=S_1 \cup S_2$ and every $\ell \le t$, 
\[ I_{S}^{b,\ell}(f) \ge 1 - \epsilon_2. \]

We conclude that for $S=S_1 \cup S_2$, 
\begin{align*}
\Pr_S [\forall \ell \le t, \ I_{S}^{b,\ell}(f) \ge 1-  \epsilon_2] &\ge \Pr[\mbox{$(S_1$ is good) and ($S_2$ is good w.r.t.\ $S_1$)}] \\
 &\ge \Pr[\mbox{$S_1$ is good)}]  \times \Pr[\mbox{$S_2$ is good w.r.t.\ $S_1 \ | \ S_1 $ is good}] \\
% &\ge (1-\epsilon_1)(1-\epsilon_2) \ge 1 - \epsilon_1 -\epsilon_2 = 1 -\epsilon.
& \geq (1-\epsilon_1)(1 - \epsilon_2) \geq 1 - \epsilon_1 - \epsilon_2 = 1 - \epsilon.
 \end{align*}
 
%Finally it is not difficult to verify that  
%%
%\[ k_1+k_2 =  k(1,2t,\epsilon_1)+k(m-1,t,\epsilon_2)  \le k(m,t,\epsilon). \qedhere \]

Finally, denoting by $k(m,t,\epsilon)$ %the resulting expected size of $S$,
the value $k$ such that $S \sim \mu^{(k)}$,
we get the recurrence
\[
 k(m,t,\epsilon) \leq k(1,2t,\epsilon_1) + k(m-1,t,\epsilon_2),
\]
with base case $k(1,2t,\epsilon) = O(t/\epsilon_1 \log(t/\epsilon_1))$. Accordingly, define $K(\gamma) = (t/\gamma^2) \log (t/\gamma)$.

Let us define a sequence $\gamma_1 = \epsilon$, $\gamma_{r+1} = \gamma_r(1-\gamma_r/64)$; note that $\gamma_{r+1}$ is the $\epsilon_2$ corresponding to $\epsilon = \gamma_r$. Then
\[
 k(m,t,\epsilon) = O(K(\gamma_1) + \cdots + K(\gamma_m)).
\]

Since the $\gamma$ sequence is monotone, we see that $\gamma_r \geq \gamma_1 (1-\epsilon/64)^r$. Choose a constant $C \geq 2$ so that $s_1 = C/\epsilon$ satisfies $\gamma_{s_1} \geq \epsilon/2$. The same calculation shows that $s_2 = 2C/\epsilon$ satisfies $\gamma_{s_1 + s_2} \geq \epsilon/4$, that $s_3 = 4C/\epsilon$ satisfies $\gamma_{s_1 + s_2 + s_3} \geq \epsilon/8$, and so on. In general, $\gamma_{s_1 + \dots + s_a} \geq \epsilon/2^a$. On the other hand, $s_1 + \dots + s_a = (2^a-1)C/\epsilon$. This shows that
\[
 \gamma_{s_1 + \dots + s_a} \geq \frac{\epsilon}{s_1 + \dots + s_a}.
\]
Now choose the minimal $a$ so that $s_1 + \dots + s_a \geq m$. Then either $a = 1$ or $m \geq (s_1 + \dots + s_a)/2$. In both cases, $\gamma_m \geq \epsilon/(2m)$. Therefore,
\[
 k(m,t,\epsilon) = O(mK(\gamma_m)) = O\left(tm^3\epsilon^{-2} \log \frac{tm}{\epsilon} \right). \qedhere
\]
\end{proof}
\fi

\if\lipics1
Theorem~\ref{thm:singleRound_General} follows from Theorem~\ref{thm:strong} using an argument very similar to that in Section~\ref{sec:combined-thmsingleroundsimple}, as we show in the full version of the paper.

\else

\subsection{Proof of Theorem~\ref{thm:singleRound_General}}
Finally, we show how Theorem~\ref{thm:singleRound_General} follows from Theorem~\ref{thm:strong}. Similar to the proof of Theorem~\ref{thm:singleRoundSimplified} in  Section~\ref{sec:combined-thmsingleroundsimple}, we need to combine Theorem~\ref{thm:strong} that handles the  highly biased coordinates with the KKL argument that handles the small bias coordinates.

Let $m = \lceil \log_2 |\mathcal{R}| \rceil$, and embed $\mathcal{R}$ inside $\{0,1\}^m$.
As in Section~\ref{sec:combine}, let $A = \{ i : \Pr[x_i = 1] > \alpha_0 \}$, where $\alpha_0 = \frac{1}{\log n}$, and for every $y \in \{0,1\}^A$, define $f_y\colon \{0,1\}^{[n] \setminus A} \to \{0,1\}^m \cup \{ \dagger \}$ as $f_y(z) := f(y,z)$ (note $f_y(z) \neq \dagger$ for all $y,z$). Theorem~\ref{thm:strong} (applied with $t=1$ and $\epsilon/2$) shows that for every $y \in \{0,1\}^A$ there exists $b_y \in \{0,1\}^m$ such that
\[
 \Pr_{S \sim \mu^{(k)}_{[n]\setminus A}} \left[ I_S^{b_y}(f_y) \geq 1-\frac{\epsilon}{2} \right] \geq 1-\frac{\epsilon}{2},
\]
where $k = O(m^3 \epsilon^{-2} \log \frac{m}{\epsilon})$.
An averaging argument similar to the one in Section~\ref{sec:combine} shows that there exists $b_0 \in \{0,1\}^m$ and a set $S$ of size $O(\frac{kn}{\log n})$ such that
\[
 \Pr_y\left[ I_S^{b_0}[f_y] \geq 1-\frac{\epsilon}{2} \right] \geq \frac{1}{2^{m+1}}.
\]
We now define the function $h\colon \{0,1\}^A \to \{0,1\}$ just as in Section~\ref{sec:combine}: it equals $1$ when $I_S^{b_0}[f_y] \geq 1-\epsilon/2$. Applying Lemma~\ref{lem:single-unbiased} with $\gamma = \min(\epsilon/2, 1/2^{m+1})$, we deduce the existence of a coalition $T$ of $O(\frac{n\log\log n}{\gamma \log n})$ players such that $I_T^1[h] \geq 1 - \epsilon/2$. As in Section~\ref{sec:combine}, we conclude that
\[
 I^{b_0}_{S \cup T}[f] \geq (1-\epsilon/2)(1-\epsilon/2) \geq 1-\epsilon.
\]
Finally, the size of the coalition $S \cup T$ is
\[
 |S \cup T| = O\left(\frac{m^3 \log \frac{m}{\epsilon} \cdot n}{\epsilon^2 \log n} + \frac{(1/\epsilon + 2^m)n \log\log n}{\log n} \right),
\]
as claimed.
\fi

%%%%%%%%%%%%%%%%%%%%%%%%%%%%%%%%%%%%%%%%%%%%%
%%%   Multi-round case
%%%%%%%%%%%%%%%%%%%%%%%%%%%%%%%%%%%%%%%%%%%%%

%\section{Biasing Multi-round Collective Coin-flipping Protocols} 
\section{Multi-Round Protocols: Proof of Theorem~\ref{thm:multiRoundSimplified}}
\label{sec:multi-round}
%Collective coin-flipping is the problem of producing a common random bit in a distributed computing environment with adversarial faults. We study the perfect information model introduced by Ben-Or and Linial~\cite{BL89}, in which a protocol consists of a sequence of rounds. In each round, each player privately generates a uniformly random bit and broadcasts the bit. Each broadcast, along with the identity of the broadcaster, is received by all players. After the completion of all rounds, the outcome of the protocol is  a single bit computed separately by each player as a pre-specified function of all the values broadcast during the protocol. Faults are modeled by the presence of an unknown set of $k$ corrupt players who collude in order to bias the outcome. Players are assumed to be computationally unbounded. In addition, the system is not able to enforce perfect synchrony within a round; thus in each round, the corrupt players may wait to see the broadcasts of the other players before selecting their bits. 
%
%9

In this section we will prove Theorem~\ref{thm:multiRoundSimplified}, showing that even in the multi-round setting, there are no protocols that are resilient against all coalitions of size $o(n)$.  As described in the introduction, here at every round, first the players who are not in the coalition broadcast their random messages, and then the players in the coalition decide and broadcast their messages in an adversarial manner. The outcome is decided by a function $f\colon (\{0,1\}^n)^r \to \{0,1\}$.

 To be more formal, let $\mu=\mu_1 \times \dots \times \mu_r$ be a product distribution over $\{0,1\}^{rn} \equiv (\{0,1\}^n)^r$, where each $\mu_i$ is a  product distribution over $\{0,1\}^n$. An \emph{$(n,r)$ coin-flipping protocol} is simply a map $f\colon (\{0,1\}^n )^r \to \{0,1\}$.  Such a protocol is executed in $r$~rounds. In  the presence of a coalition $B\subseteq [n]$ of bad players, the protocol operates as follows. In round $i$, the players in $[n]\backslash B$ select $\alpha^{i}\in \{0,1\}^{[n]\backslash B}$ according to $\mu_i\vert_{[n]\backslash B}$.  Then, the bad players $B$ choose their values depending on $\alpha^1,\ldots, \alpha^i$. Formally, an $(n,r)$-strategy for a set $B\subseteq [n]$ is a sequence $\pi=(\pi_1,\ldots,\pi_r)$ of functions where
\if\lipics1
$\pi_i\colon (\{0,1\}^{[n]\backslash B})^i \to \{0,1\}^B$.
\else
\[ \pi_i\colon (\{0,1\}^{[n]\backslash B})^i \to \{0,1\}^B.\]
\fi
The function $\pi_i$ describes the choice of bits the bad players make in the $i$-th round based on the broadcasted bits of the good players in the first $i$ rounds. 

\begin{definition}
Let $f\colon (\{0,1\}^n )^r \to \{0,1\}$ be an $(n,r)$ coin-flipping protocol, and let $\mu$ be a product distribution on $(\{0,1\}^n)^r$. Given a Boolean value $b\in \{0,1\}$, a set $B\subseteq [n]$, and an $(n,r)$-strategy $\pi$ for the bad players $B$,
\begin{itemize}
\item $I_{\pi,B}^b(f)$ is the probability that $f$ outputs $b$ given that the bad players $B$ follow $\pi$. 
\item $I_{B}^b(f):= \sup_{\pi}\{I_{\pi,B}^b(f)\}$ is the influence of $B$ on $f$ towards $b$. 
\end{itemize}
\end{definition}

Our goal is to show that there exists a coalition $B$ of size $o(n)$ such that $I_B^b(f) \ge 1-\epsilon$ for some $b \in \{0,1\}$. For the moment, let us assume that we have only two rounds, and let $f(x,y)$ denote the protocol, where $x, y\in\{0,1\}^n$ correspond to the inputs in the first and the second round respectively. Let us also denote $f_x(y) \colon = f(x,y)$.

\myparagraph{Russell et al.~\cite{RSZ02} proof of the uniform case:} %First consider the uniform distribution:
Pick $b \in \{0,1\}$ such that $\Pr[f(x,y)=b] \ge \frac{1}{2}$. Let $\mathcal{A}$ be the set of all $x\in \{0,1\}^n$ that satisfy  $\Pr_y[f_x(y)=b] \ge \frac{1}{4}$, and note that $\Pr_x[x \in \mathcal{A}] \ge \frac{1}{4}$.   By Lemma~\ref{\rsz} of Russell et al.~\cite{RSZ02},  for every  $x \in \mathcal{A}$,  a random coalition $S$ can bias $f_x$ towards $b$, with a probability $\delta$ that is not too small. Since $S$ is chosen randomly and independently of $x$, it follows that there exists a fixed coalition $S_0$ that can bias $f_x$ for at least a $\delta$ fraction of  $x \in \mathcal{A}$, and thus for at least a $\frac{\delta}{4}$ fraction of $\{0,1\}^n$. Let $\mathcal{A}' \subseteq \mathcal{A} \subseteq \{0,1\}^n$ denote the set of such $x$.  If $x \in \mathcal{A}'$, the coalition $S_0$ is able to bias the protocol by only interfering in the second round. The set $\mathcal{A}'$ is of measure at least $\frac{\delta}{4}$, which is not too small. Thus, applying   Lemma~\ref{\rsz} again, we can  find another coalition $T_0=o(n)$ which can modify most $x$'s to fall in $\mathcal{A}'$. Now we can form the desired coalition $B= T_0 \cup S_0$: In the first round, the players in $T_0$ try to modify $x$ into an element in $\mathcal{A}'$, and if they succeed, in the second round, the players in $S_0$ interfere to change the outcome of the protocol into $b$.  This argument easily generalizes to more rounds. 

We point out that it was crucial for the above argument, that the distribution of $S$ in Proposition~\ref{prop:single-biased} is independent of $f$.

\myparagraph{What fails for the general product distributions:}
%Let us see why the above argument does not work for general product distributions.
Consider $f(x,y)$ over $\mu=\mu_1 \times \mu_2$, where $\mu_1$ is \emph{highly biased}, and $\mu_2$ is the \emph{uniform distribution}. Similar to the previous paragraph, we can find a set $\mathcal{A}' \subseteq \{0,1\}^n$,  a value $b \in\{0,1\}$, and a small coalition $S_0$ such that $\Pr_{x \sim \mu_1}[x \in \mathcal{A}'] \ge \frac{\delta}{4}$, and moreover for every $x \in \mathcal{A}'$, the coalition $S_0$ is able to influence $f$ towards $b$ by  interfering only in the second round. Now, if we are to follow the argument of Russell et al., we would like to find a set $T_0$ of players to add to the coalition such that, with high probability, $T_0$ is able to modify a random $x \sim \mu_1$ into an element in $\mathcal{A}'$. We could then conclude that $B=S_0 \cup T_0$ can bias $f$ towards $b$.   

Unfortunately, Proposition~\ref{prop:single-biased}, the highly-biased counterpart of Lemma~\ref{\rsz}, only  guarantees the existence of a small coalition $T_0$ which \emph{either} modifies a random $x \sim \mu$ into being in $\mathcal{A}'$ \emph{or} modifies a random $x \sim \mu$ into  \emph{not} being in $\mathcal{A}'$; in the latter case, the coalition $T_0$ is useless. As Example~\ref{ex:pbiased} shows, this is not just a caveat of the proof of the proposition. To be more concrete, suppose $\mu_1$ is the $\frac{1}{n}$-biased distribution, and $\mathcal{A}'$ consists only of the single element $x=\vec{0}$.  Even though $\Pr[ x\sim \mathcal{A}] \ge \frac{1}{4}$, there is no coalition of size $o(n)$ which can, with high probability, modify a  random $x \sim \mu_1$ into an element in $\mathcal{A}'$. On the other hand, even a single player can modify every $x$ into an element outside $\mathcal{A}'$, but this is not helpful for our purposes, as the elements outside $\mathcal{A}'$ are the elements that $S_0$ cannot handle. 

\myparagraph{How to overcome the problem:} %The new idea that enables us to handle the above problem is the following.
Consider the same setting as in the previous paragraph. We know that for every $x$, a random coalition $S$ of size $o(n)$ succeeds in influencing $f_x$ towards one of the outputs, with probability at least $\delta$, where $\delta$  is not too small. Instead of picking one $S_0$, we select a collection of coalitions that cover almost all $x$'s. More precisely, we find $S_1,\ldots,S_M$ and $b_1\ldots,b_M$, where $M=O_\delta(1)$, such that apart from a small  set of exceptions $\mathcal{E} \subseteq \{0,1\}^n$, every $f_x$ can be biased towards some $b_i$ using the coalition $S_i$. 

Let $h\colon \{0,1\}^n \to \{1,\ldots,M\} \cup \{\dagger\}$ be defined as follows: If $x \in \mathcal{E}$, then $h(x)=\dagger$, and otherwise $h(x)$ is equal to  some $i$ such that $S_i$ can bias $f_x$ towards $b_i$.   This brings us  to the non-Boolean range case, which was analyzed in Section~\ref{sec:largerange}. We can apply Theorem~\ref{thm:strong} to find a coalition $T$ that can influence $h$ towards one of the values in $j \in \{1,\ldots,M\}$. Now $B=T \cup S_j$ will be our desired coalition. With high probability, in the first round the players in $T$  can successfully modify a random element $x$ into an element $x'$ with $h(x')=j$, and then in the second round, the players in $S_j$ can modify $x'$ to bias the outcome towards $b_j$.  This is the main new idea used below to resolve the multi-round setting over arbitrary distributions. 

\medskip
 
 Theorem~\ref{thm:multiRoundSimplified} is a consequence of  the following more elaborate theorem which states that  for sufficiently large $n$, and $r\leq \log^\star n/5$, no $(n,r)$  protocol over an arbitrary product distribution is resilient against coalitions of $m=o(n)$ bad players. 

\begin{theorem}
\label{thm:multiRoundDetailed}
For every $\epsilon>0$, and integers  $n>0$, and  $r<\log^\star n/5$, there exists $\delta=\Omega(\frac{1}{\log(1/\epsilon)^{r} n})$, and $m=o(n)$ such that the following holds. For every  $f\colon (\{0,1\}^n)^r\rightarrow \{0,1\}$ over a product distribution $\mu$, there exists $b \in \{0,1\}$,  such that the corresponding $r$-round protocol satisfies
\if\lipics1 
$\Pr_{S\sim \nu}[ I^b_S(f)\geq 1-\epsilon]\geq \delta_r$,
\else
\[ \Pr_{S\sim \nu}[ I^b_S(f)\geq 1-\epsilon]\geq \delta_r,\]
\fi
where  $\nu$ is a distribution on $\binom{[n]}{m}$ that depends only  $\mu$ but not on $f$. To be more precise, one can take $m=O_\epsilon\left(\frac{n \cdot r \cdot 4^r}{\log^{(4r)}n}\right)=O_\epsilon\left(\frac{n (\log^\star n)^2}{\log^{(4r)}n}\right)$.
\end{theorem}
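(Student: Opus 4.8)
\textbf{Overall strategy.} The plan is to prove the theorem by induction on the number of rounds $r$. For $r=1$ this is exactly the content of Theorem~\ref{thm:strong} combined with the small-bias argument from Section~\ref{sec:combined-thmsingleroundsimple} (as in the proof of Theorem~\ref{thm:singleRound_General}): we get a coalition drawn from a fixed $f$-independent distribution $\nu$ on $\binom{[n]}{m_1}$ with $m_1 = O_\epsilon(n\log\log n/\log n)$ that biases towards some $b$ with probability $\ge\delta_1 = \Omega(1/\log(1/\epsilon))$ (up to the $2^m$-type loss coming from guessing $b_0$, which I will keep track of). For the inductive step, suppose the statement holds for $(r-1)$-round protocols. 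Write $f(x^1,\dots,x^r)$ over $\mu = \mu_1\times\cdots\times\mu_r$, and for each first-round input $x=x^1$ let $f_x$ denote the induced $(r-1)$-round protocol on the last $r-1$ rounds, over $\mu_2\times\cdots\times\mu_r$. The induction hypothesis gives, for each $x$, a distribution $\nu'$ (depending only on $\mu_2\times\cdots\times\mu_r$, \emph{not} on $x$) on small coalitions such that a $\nu'$-random $S$ satisfies $I_S^{b}(f_x)\ge 1-\epsilon'$ with probability $\ge\delta_{r-1}$ for some $b=b(x)\in\{0,1\}$. Here $\epsilon'$ is a slightly smaller error budget; I will allocate $\epsilon$ among the rounds so the accumulated loss is a constant factor times $\epsilon$.

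\textbf{Covering step (the new idea).} Because $\nu'$ does not depend on $x$, I sample a list $S_1,\dots,S_M$ i.i.d.\ from $\nu'$, with $M = O(\delta_{r-1}^{-1}\log(1/\eta))$ for a small parameter $\eta$. A union/Markov argument over $x\sim\mu_1^{(\ell)}$ for all $\ell\le 2t$ (where $t$ is the boosting parameter I need to feed into Theorem~\ref{thm:strong}) shows that, with positive probability over the choice of the list, the exceptional set $\mathcal{E} = \{x : \text{no }S_i\text{ biases }f_x\text{ towards the corresponding }b_i\}$ has $\mu_1^{(\ell)}(\mathcal{E}) < \epsilon^4/2^{16}$ for \emph{every} $\ell\le 2t$ simultaneously --- this is exactly the hypothesis of Theorem~\ref{thm:strong}, and this is precisely why that theorem was proved with the boosted-influence refinement rather than just for $\mu$. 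Fix such a list. Define $h\colon\{0,1\}^n\to\{1,\dots,M\}\cup\{\dagger\}$ by $h(x)=\dagger$ if $x\in\mathcal{E}$, and otherwise $h(x)=i$ for some $i$ such that $S_i$ biases $f_x$ towards $b_i$ (break ties arbitrarily). Now apply Theorem~\ref{thm:strong} to $h$ (viewing $\{1,\dots,M\}\subseteq\{0,1\}^{\lceil\log_2 M\rceil}$) with $t=1$ and error $\epsilon/3$, combined with the small-bias/KKL reduction of Section~\ref{sec:combine} for the first-round coordinates of $\mu_1$ that are not highly biased: this yields a fixed-distribution-random coalition $T$ in the first round, of size $O(m^3_h \cdot \text{polylog} \cdot n/\log n)$ with $m_h = O(\log M)$, and a value $j\in\{1,\dots,M\}$, such that $I_T^j(h) \ge 1-\epsilon/3$ with probability bounded below by some $\delta$ independent of $f$.

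\textbf{Assembling the coalition.} Take $B = T\cup S_j$. In round $1$, the $T$-players act so as to drive $x^1$ into $h^{-1}(j)$ (possible for a $(1-\epsilon/3)$-fraction of first rounds, once $T$ and $j$ are good); in rounds $2,\dots,r$, the $S_j$-players run the strategy witnessing $I_{S_j}^{b_j}(f_{x^1})\ge 1-\epsilon'$. Composing, $I_B^{b_j}(f) \ge (1-\epsilon/3)(1-\epsilon') \ge 1-\epsilon$ after choosing the budget split appropriately. The success probability is $\delta_r \ge \delta_{r-1}\cdot\delta \cdot(\text{list-success constant})$; unwinding, each round multiplies $\delta$ by a factor $\Omega(1/\log(1/\epsilon))$ and the $T$-step contributes the $\binom{[n]}{m}$-type polynomial loss, which altogether gives $\delta_r = \Omega(1/(\log(1/\epsilon)^r n))$ as claimed. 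For the size bound: $|S_j| = m_{r-1}$ and $|T| = O(\frac{(\log M)^3 \cdot\text{polylog}\cdot n}{\log n})$; since $M$ depends only on $\delta_{r-1} = \Omega(1/(\log^{r-1}(1/\epsilon)\, n^{?}))$... here I must be careful --- actually $M$ depends polynomially on $1/\delta_{r-1}$, hence on $n$, so $\log M = O(\log n)$, making the $T$-term $O(\text{polylog}(n)\cdot n/\log n)$ per round, and the recursion $m_r \le m_{r-1} + O_\epsilon(n/\log^{(\text{something})} n)$ telescopes. The iterated-log blowup $\log^{(4r)} n$ in the denominator comes from the fact that in the small-bias KKL step at round $i$ the relevant ``$\alpha_0$'' threshold must be pushed down through one more level of $\log$ to accommodate the degraded parameters (exactly as in Russell--Saks--Zuckerman~\cite{RSZ02}), costing two iterated logs per round from $h$'s error and two from the boosting, for $4r$ total; I would track the constant $4^r$ through the same recursion.

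\textbf{Main obstacle.} The delicate point is the interplay between the boosting parameter $t$ demanded by Theorem~\ref{thm:strong} in the covering step and the quantitative degradation of $\delta_{r-1}$: I need $\mu_1^{(\ell)}(\mathcal{E})$ to be tiny for \emph{all} $\ell$ up to some $t$ that is itself large enough for the next application of Theorem~\ref{thm:strong} to produce a sufficiently strong boosted guarantee for round $r-1$'s covering step --- so the induction must carry, as part of its hypothesis, not just $I_S^b(f_x)\ge 1-\epsilon$ but $I_S^{b,\ell}(f_x)\ge 1-\epsilon$ for all $\ell\le t$ with respect to the appropriate boosted distribution. Strengthening the inductive statement to this boosted form, and checking that $t$ grows slowly enough (polynomially in the round-$(r-1)$ parameters) that the final $m$ and $\delta$ bounds survive, is where essentially all the work lies; the composition of strategies and the covering argument itself are, by comparison, routine.
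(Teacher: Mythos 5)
Your overall architecture is exactly the paper's: induct on the number of rounds, sample a list $T_1,\dots,T_M$ from the $f$-independent distribution $\nu_{\ell-1}$ supplied by the induction hypothesis, define a selector $h\colon\{0,1\}^n\to\{1,\dots,M\}\cup\{\dagger\}$ recording which $T_i$ (and which target bit) works for $f_x$, apply Theorem~\ref{thm:strong} to $h$, and compose the first-round coalition with the selected $T_i$ (this is Lemma~\ref{lem:multi-biased}). However, your ``main obstacle'' paragraph misdiagnoses where the work lies, and the plan you defer it to is both unnecessary and problematic. The multi-round induction hypothesis does \emph{not} need to be strengthened to boosted influences $I_S^{b,\ell}(f_x)\ge 1-\epsilon$ for all $\ell\le t$ (it is not even clear what boosting the input measure would mean for an interactive $(r-1)$-round protocol $f_x$), and Theorem~\ref{thm:strong} is only ever invoked with $t=1$ in the covering step, so one needs $\Pr_{\mu_1^{(\ell)}}[h=\dagger]$ small only for $\ell\le 2$. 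These bounds come for free from the \emph{plain} induction hypothesis, because that hypothesis holds pointwise for every fixed $x$ (the randomness is over $T\sim\nu_{\ell-1}$, not over $x$): hence $\Ex_{T_1,\dots,T_M}\bigl[\Pr_{x\sim\mu_1^{(\ell)}}[h(x)=\dagger]\bigr]\le(1-\delta_{\ell-1})^M$ for \emph{any} distribution of $x$, in particular for $\mu_1$ and $\mu_1^{(2)}$, and Markov finishes. The general-$t$ boosting in Theorem~\ref{thm:strong} is needed only internally to its own induction on the number of output bits; nothing boosted propagates from round to round. Since you declare that ``essentially all the work'' lies in carrying the boosted statement through the rounds, the proposal as written lacks a workable plan for the inductive step, even though that step is in fact simpler than you expect.

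Two further points where the sketch would need repair. First, $f$-independence of the final distribution $\nu_r$: your coalition is $T\cup S_j$ where the index $j$ depends on $f$ and on the realized list; the paper restores $f$-independence by two-stage sampling (draw the list, then include a uniformly random member $T_i$ in the coalition), paying a factor $1/M$ in the success probability, which is why the per-round loss in $\delta$ is $\delta_{\ell-1}/O(\log(1/\epsilon)+r)$ --- your accounting omits this. Relatedly, in the paper's parameter scheme $1/\delta_{\ell-1}$ is only an iterated logarithm of $n$ at the steps where the covering lemma is used, so $\log M$ is an iterated log, not $O(\log n)$; with your estimate the first-round coalition would have size $\mathrm{polylog}(n)\cdot n/\log n$, which is not $o(n)$, so the size bound only survives with the iterated-log recursion for $\delta_\ell,\eta_\ell,k_\ell$. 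Second, the small-bias coordinates: re-running the combination of Section~\ref{sec:combine} inside each round does not directly work for the induction, since its averaging step fixes an $f$-dependent coalition and value; the paper instead splits each round into a purely highly-biased and a purely small-biased round (padding with dummy coordinates) and alternates Lemma~\ref{lem:multi-biased} with a random-coalition small-bias lemma (Lemma~\ref{lem:multi-unbiased}, built on Lemma~\ref{lem:single-unbiased}); it is this small-bias step, not the boosting parameter, that produces the iterated-log degradation leading to the $\log^{(4r)}n$ in the coalition size.
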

\if\lipics1
\begin{proof}
The complete proof can be found in the full version of the paper.	
\end{proof}
\else
\begin{proof}
Let $\mu=\mu_1 \times \dots \times \mu_r$.  We make two simplifying assumptions:
\begin{itemize}
\item[1.] Without loss of generality, we may assume for every $i\in [r]$ and $j\in [n]$ that $\Ex_{x\sim \mu_i}[x_j]\leq 
\frac{1}{2}$, as otherwise, we can exchange the role of $0$ and $1$ for the $j$-th variable.  

\item[2.] Let $\eta_1,\dots,\eta_r\in (0,1)$ be set later. By potentially doubling the number of rounds, and modifying the product distributions, we will assume that for every $i\in [r]$, the distribution $\mu_i$ is either highly biased or not very biased. Namely, one of the following two cases holds
\begin{itemize}
\item Highly biased with parameter $\eta_i$: For all $j\in [n]$, $\Ex_{x\sim \mu_i}[x_j]\leq \eta_i$;
\item Small biased with parameter $\eta_i$: For all $j\in [n]$, $\Ex_{x\sim \mu_i}[x_j]> \eta_i$;
\end{itemize}

In more detail, let $\hat\mu_1,\ldots,\hat\mu_{r/2}$ be the original distributions. If $\eta_{2i-1} \leq \eta_{2i}$, then for each $j \in [n]$, either $\Ex_{x \sim \hat\mu_i}[x_j] \leq \eta_{2i}$ or $\Ex_{x \sim \hat\mu_i}[x_j] > \eta_{2i-1}$. We let $\mu_{2i-1}$ be the highly biased distribution with parameter $\eta_{2i-1}$ obtained from $\hat\mu_i$ by replacing the $\eta_{2i-1}$-unbiased coordinates with dummy coordinates, and similarly $\mu_{2i}$ is the small biased distribution with parameter $\eta_{2i}$ obtained from $\hat\mu_i$ in an analogous fashion. If $\eta_{2i-1} \geq \eta_{2i}$, then instead $\mu_{2i-1}$ will be $\eta_{2i-1}$-unbiased and $\mu_{2i}$ will be $\eta_{2i}$-biased.
\end{itemize}

Let $\delta_0:=1$, $m_0:=0$, and $k_0:=0$. For every $\ell \in [r]$ we set the following parameters, some of them recursively:
\begin{itemize}
\item $\delta_\ell:= \Theta(\frac{1}{\log^{\ell}(1/\epsilon)\log^{(4r-4\ell)} n})$
\item $\eta_\ell:= \delta_{\ell-1}$
\item $k_\ell:=O(\frac{4^{r-\ell} n}{ \log^{(4r)}n\cdot \epsilon^3} )$
\item $m_\ell:= k_\ell + m_{\ell-1}$
\end{itemize}

We will show by induction on $\ell$ that the following modified statement of the theorem holds.

\begin{quotation}
Let $g\colon (\{0,1\}^n)^\ell \to \{0,1\}$ be an $(n,\ell, \mu_{r-\ell+1} \times \ldots \times \mu_r)$-protocol. There is a choice of $b\in \{0,1\}$ and a probability measure $\nu_\ell$ over subsets of $[n]$ of size at most $m_\ell$, such that 
\[ \Pr_{S\sim \nu_\ell}\left[ I^b_S(g)\geq 1-\frac{\epsilon}{2^{r-\ell}} \right]\geq \delta_\ell> 0. \]
Moreover, $\nu_\ell$ does not depend on $g$.
\end{quotation}

 Note that the $\ell=r$ case is then the statement of the theorem. The base case of $\ell=0$ is about biasing a zero-round protocol (namely, a protocol that outputs a constant value in $\{0,1\}$ with no players involved). The base case of $\ell=0$ is trivially true, as no bad players are needed to fully bias a constant valued protocol with probability $1$. 

%Note that for the base case, $\ell=1$, it follows from Lemma~\ref{lem:single-biased} that for $S=\supp(y^1\vee \cdots \vee y^k)$ where $k\sim \frac{10\log \frac{1}{\epsilon}}{\epsilon}$, and $y^1,\ldots,y^k\sim \mu_1$ are independent, then there exists $b\in \{0,1\}$ such that 
%$$
%\Pr_S[I^b_S[f]\geq 1-\epsilon]\geq 1-\epsilon. 
%$$
%Since in this case the only round is an odd round, we have $j\in [n]$, $\Ex_{x\sim \mu_1}[x_j]\leq \eta_1$ and hence 
%$$
%\Ex[|S|] \leq k\eta_1 n.
%$$
%Moreover since $\mu_1$ is a product distribution and $y_1,\ldots, y_k$ are independent samples, by Chernoff bound we have
%$$
%\Pr_{S}[|S|\geq  2k\eta_1 n] \leq e^{\frac{-(k\eta_1n)^2}{n\eta_1(1-\eta_1)}} \leq \epsilon. 
%$$
%
%Thus, the distribution $\nu_1$ will sample $S=y^1\vee \cdots \vee y^k$ conditioned on $|S|\leq 2k\eta_1 n$. And we have 
%$$
%\Pr_{S\sim \nu_1}[I^b_S[f]\geq 1-\epsilon]\geq 1-2\epsilon. 
%$$
%We let $\delta_1=1-2\epsilon$ and $m_1= 2k\eta_1 n$. 

For the induction step, in the case when the first round of $g$ is highly biased, we apply the following lemma.

\begin{lemma}[Large bias]\label{lem:multi-biased}
 Let $g\colon (\{0,1\}^n)^\ell \to \{0,1\}$ be an $\ell$-round coin-flipping protocol, and suppose that for each $i\in [\ell]$, the $i$-th round is endowed with the distribution $\mu_{r-\ell+i}$. Suppose that $\mu_{r-\ell+1}$ is highly biased with parameter $\eta_\ell$.
 There is $b \in \{0,1\}$ such that for \[ k=O\left(\epsilon^{-3} 4^{r-\ell}\log r \log(\frac{1}{\delta_{\ell-1}})\log\log(\frac{1}{\delta_{\ell-1}})\right) \] it holds that
\[
\Pr_{S\sim \mu_{r-\ell+1}^{(k)}, T\sim \nu_{\ell-1}}\left[ I_{S\cup T}^b(g)\geq 1-\frac{\epsilon}{2^{r-\ell}}\right] \geq \frac{\delta_{\ell-1}}{4\log(1/\epsilon)+4r}. 
\]
(Here $\mu_{r-\ell+1}^{(k)}$ is the union of the supports of $k$ independent samples from $\mu_{r-\ell+1}$.)
\end{lemma}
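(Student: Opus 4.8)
The plan is to reduce biasing the $\ell$-round protocol $g$ to biasing a single-round non-Boolean function, using the inductive hypothesis to summarize what coalitions can do in rounds $2,\dots,\ell$. First I would fix the first-round input $x \in \{0,1\}^n$ and consider the $(\ell-1)$-round protocol $g_x \colon (\{0,1\}^n)^{\ell-1} \to \{0,1\}$ obtained by hardwiring the first round to $x$. By the inductive statement, for each $x$ there is a value $b_x \in \{0,1\}$ and a set $S$ (drawn from the fixed measure $\nu_{\ell-1}$, independent of $g_x$) such that $\Pr_{S \sim \nu_{\ell-1}}[I_S^{b_x}(g_x) \ge 1-\epsilon/2^{r-\ell+1}] \ge \delta_{\ell-1}$. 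Since $\nu_{\ell-1}$ is a single fixed distribution, a union-bound / covering argument lets me choose a bounded number $M = O(\log(1/\delta_{\ell-1})/\delta_{\ell-1})$ (times a $\log$ factor to push the exceptional probability below the threshold needed later) of sets $S_1,\dots,S_M$ from the support of $\nu_{\ell-1}$ together with values $b_1,\dots,b_M$, so that all but a small-measure set $\mathcal{E}$ of first-round inputs $x$ admit some $i$ with $I_{S_i}^{b_i}(g_x) \ge 1-\epsilon/2^{r-\ell+1}$. Crucially I must control $\mathcal{E}$ not just under $\mu_{r-\ell+1}$ but under all its boostings $\mu_{r-\ell+1}^{(\ell')}$ for $\ell' \le 2t$, as required by Theorem~\ref{thm:strong}; this is arranged by choosing $M$ large enough and invoking the inductive guarantee (which I would strengthen, or already have in the "$\forall \ell \le t$" boosted form) against each boosted measure.

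Next I would define the auxiliary function $h \colon \{0,1\}^n \to \{0,1\}^{\lceil \log_2 M \rceil} \cup \{\dagger\}$ by $h(x) = \dagger$ if $x \in \mathcal{E}$, and otherwise $h(x)$ encodes (the binary representation of) an index $i$ with $I_{S_i}^{b_i}(g_x) \ge 1-\epsilon/2^{r-\ell+1}$. Because $\mu_{r-\ell+1}$ is highly biased with parameter $\eta_\ell = \delta_{\ell-1}$, and because $\Pr_{\mu_{r-\ell+1}^{(\ell')}}[\dagger]$ is small for all relevant $\ell'$, I can apply Theorem~\ref{thm:strong} to $h$ with range size $M$, with $t$ chosen appropriately (so that the boosted influences $I^{\cdot,\ell'}$ are all large), and with error parameter $\epsilon/2^{r-\ell+2}$. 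This yields a value $j \in \{1,\dots,M\}$ and a set $S \sim \mu_{r-\ell+1}^{(k')}$, $k' = O(t \cdot M^3 \epsilon^{-2} \log(tM/\epsilon))$, with $\Pr_S[I_S^j(h) \ge 1 - \epsilon/2^{r-\ell+2}] \ge 1 - \epsilon/2^{r-\ell+2}$; here the polylogarithmic-in-$M$ dependence of Theorem~\ref{thm:strong} is exactly what keeps $k'$ within the claimed bound after substituting $M = \mathrm{poly}(1/\delta_{\ell-1})$ and $\log(1/\delta_{\ell-1}) = O(\log^{(4r-4\ell+4)} n + \ell\log(1/\epsilon))$.

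Finally I would combine the two coalitions: take $B = S \cup S_j \cup (\text{the players used inside } \nu_{\ell-1}$ for $S_j)$. The strategy is: in round $1$, the players in $S$ bias $x$ so that $h(x) = j$ (i.e.\ push $x$ out of $\mathcal{E}$ into the set where $S_j$ works for target $b_j$); in rounds $2,\dots,\ell$, the players of $S_j$ run the strategy from the inductive hypothesis to force $g_x$ to output $b_j$. Multiplying the two success probabilities $(1-\epsilon/2^{r-\ell+2})(1-\epsilon/2^{r-\ell+1}) \ge 1-\epsilon/2^{r-\ell}$ gives the influence bound, the probability over $S \sim \mu_{r-\ell+1}^{(k)}$ and $T \sim \nu_{\ell-1}$ of landing in the good event is at least (prob.\ $S$ works) $\times$ (prob.\ $T = S_j$, which is $\ge$ some fixed positive constant depending on $\nu_{\ell-1}$ giving the $\delta_{\ell-1}/(4\log(1/\epsilon)+4r)$ factor), and the size bound $k = O(\epsilon^{-3}4^{r-\ell}\log r \log(1/\delta_{\ell-1})\log\log(1/\delta_{\ell-1}))$ follows by collecting the $M^3 \log M$ factors. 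The main obstacle I anticipate is the bookkeeping needed to guarantee that $\mathcal{E}$ is simultaneously small under all the boosted measures $\mu^{(\ell')}_{r-\ell+1}$ with $\ell' \le 2t$ — this is precisely why the inductive statement has to be phrased with boosted influences and why Theorem~\ref{thm:strong} (rather than the simpler Theorem~\ref{thm:singleRoundReformulated}) is invoked — together with tracking the parameter recursion so that the final $k_\ell$ matches the stated $4^{r-\ell}$ / iterated-logarithm form.
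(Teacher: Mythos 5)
Your overall architecture is the same as the paper's: reduce to an auxiliary non-Boolean function $h$ on the first round that records which coalition/value pair succeeds for the residual protocol $g_x$, use $\dagger$ for failures, control the $\dagger$-probability under the boosted measures, apply Theorem~\ref{thm:strong} with $t=1$, and compose the first-round coalition with the recorded coalition for the remaining rounds. However, there is a genuine gap in how you account for the probability over $T\sim\nu_{\ell-1}$. You fix a \emph{deterministic} covering $S_1,\dots,S_M$ from the support of $\nu_{\ell-1}$ (chosen using $g$), and at the end you claim that $\Pr_{T\sim\nu_{\ell-1}}[T=S_j]$ is at least some fixed positive constant, which supplies the factor $\delta_{\ell-1}/(4\log(1/\epsilon)+4r)$. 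This is unjustified and false in general: $\nu_{\ell-1}$ is a distribution over subsets of $[n]$ whose individual atoms can have exponentially small mass (e.g.\ when $\nu_{\ell-1}$ involves uniformly random $k$-subsets or boosted product measures), so no single set $S_j$ in its support need carry non-negligible probability. Moreover, since your $S_j$ depends on $g$, the statement you would prove is not the one claimed, where the randomness is over the fixed, $g$-independent pair $(S,T)\sim\mu_{r-\ell+1}^{(k)}\times\nu_{\ell-1}$ (this independence from $g$ is exactly what the induction at the next level needs). The paper's fix is a two-stage sampling trick: draw $T_1,\dots,T_M$ i.i.d.\ from $\nu_{\ell-1}$ with $M=O\bigl(\frac{\log(1/\epsilon)+r}{\delta_{\ell-1}}\bigr)$, define $h$ from these \emph{random} sets, and then let $T$ be a uniformly random $T_i$; the marginal of $T$ is still exactly $\nu_{\ell-1}$, and the success probability factors as (prob.\ the $T_i$'s and Theorem~\ref{thm:strong} cooperate) times $\Pr[T=T_i]=1/M$, which is where the $\delta_{\ell-1}/(4\log(1/\epsilon)+4r)$ actually comes from.

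A smaller remark: your worry about making $\mathcal{E}$ simultaneously small under all boostings $\mu_{r-\ell+1}^{(\ell')}$, and your suggestion to strengthen the induction hypothesis to a boosted form, are unnecessary. Since the inductive guarantee holds for \emph{every} fixed $x$ (it is a statement about the protocol $g_x$, with probability only over the coalition), the bound $\Ex_{T_1,\dots,T_M}\bigl[\Pr_{x\sim\mu_{r-\ell+1}^{(\ell')}}[h(x)=\dagger]\bigr]\le(1-\delta_{\ell-1})^M$ holds for any input distribution, boosted or not; with $t=1$ only $\ell'\in\{1,2\}$ are needed, and Markov plus a union bound over these two cases gives the event $E_T$ of the paper. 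Also, your $h$ encodes only the index $i$ with the bit $b_i$ attached to the covering; the paper encodes the pair $(\mathrm{bin}(i),b)$ in the range, which is equivalent, but with the random $T_i$'s the target bit must be read off per $x$ and per $i$, so encoding it in the range is the cleaner bookkeeping.
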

\begin{proof}
Throughout the proof it helps to think of sampling $T\sim \nu_{\ell-1}$ in two stages: We first sample $M=O\bigl(\frac{\log(1/\epsilon)+r}{\delta_{\ell-1}}\bigr)$ sets $T_1,\dots,T_M$ independently from $\nu_{\ell-1}$. We later sample $i\in [M]$ uniformly at random and let $T=T_i$. Note that, even though we sampled $T$ in two stages, $T$ is still distributed according to $\nu_{\ell-1}$. 

For every $x\in \{0,1\}^n$ define an $(\ell-1)$-round protocol $g_x\colon (\{0,1\}^n)^{\ell-1} \to \{0,1\}$ as $g_x(y):=g(x,y)$. By the induction hypothesis, for every $x$, 
\begin{equation}\label{eq:IH1}
\Pr_{T_1,\ldots,T_M}\left[\exists i\in [M], \  \left(I_{T_i}^1(g_x)\geq 1-\frac{\epsilon}{2^{r-\ell+1}} \; \text{ or } \; I_{T_i}^0(g_x)\geq 1-\frac{\epsilon}{2^{r-\ell+1}}\right)\right] \geq 1- (1-\delta_{\ell-1})^M. 
\end{equation}

For $i\in [M]$ let $\bin(i)\in \{0,1\}^{\lceil \log M \rceil}$ denote the binary representation of $i-1$ (we chose $i-1$ so that the all zeros vector does not go unused). We define a function $h\colon \{0,1\}^n \to \{0,1\}^{\lceil \log M\rceil + 1} \cup \{\dagger\}$ as follows. We set $h(x)=\dagger$ if for every $i\in [M]$, both $I_{T_i}^1(g_x)< 1-\frac{\epsilon}{2^{r-\ell+1}}$ and $I_{T_i}^0(g_x)< 1-\frac{\epsilon}{2^{r-\ell+1}}$. Otherwise, we let $h(x)=(\bin(i), b)$, where $(i,b)\in [M] \times \{0,1\}$ is the lexicographically first tuple such that $I_{T_i}^b(g_x)\geq 1-\frac{\epsilon}{2^{r-\ell+1}}$.

We will apply Theorem~\ref{thm:strong} to $h$ with $t=1$, but before doing so, we will show that the conditions on the $\dagger$ probability will hold with high probability over the choice of $T_1,\dots,T_M$. We need to verify that $\Pr_{\mu_{r-\ell+1}}[\dagger]\leq \frac{(\epsilon/2^{r-\ell+1})^4}{2^{16}}$ and $\Pr_{\mu_{r-\ell+1}^{(2)}}[\dagger]\leq \frac{(\epsilon/2^{r-\ell+1})^4}{2^{16}}$. We first observe that, 
\[
\Pr_{\mu_{r-\ell+1}}[\dagger]= \Pr_{x\sim \mu_{r-\ell+1}}\left[\forall i\in [M], b\in \{0,1\}, \ I_{T_i}^b[g_x]<1-\frac{\epsilon}{2^{r-\ell+1}}  \right]. 
\]
Thus (\ref{eq:IH1}) gives us,
\[
\Exs_{T_1,\ldots, T_M} \left[\Pr_{\mu_{r-\ell+1}}[\dagger]\right] \leq (1-\delta_{\ell-1})^M \leq \frac{(\epsilon/2^{r-\ell+1})^5}{2^{17}}.
\]
Applying Markov's inequality, we get
\[
\Pr_{T_1,\ldots, T_M}\left[ \Pr_{\mu_{r-\ell+1}}[\dagger]\geq  \frac{(\epsilon/2^{r-\ell+1})^4}{2^{16}} \right] \leq  \frac{(\epsilon/2^{r-\ell+1})}{2}.
\]
An identical argument gives
\[
\Pr_{T_1,\ldots, T_M}\left[ \Pr_{\mu_{r-\ell+1}^{(2)}}[\dagger]\geq  \frac{(\epsilon/2^{r-\ell+1})^4}{2^{16}} \right] \leq  \frac{(\epsilon/2^{r-\ell+1})}{2}.
\]
Thus, $h$ satisfies the conditions of Theorem~\ref{thm:strong} for $t=1$ with probability at least $1-\frac{\epsilon}{2^{r-\ell+1}}$. Define $E_T$ to be this event. Conditioned on $E_T$, there exist $i\in [M]$ and $b\in \{0,1\}$ for which
\[
\Pr_{S\sim \mu_{r-l+1}^{(k)}} \left[ I_S^{(\bin(i),b)}[h]\geq 1-\epsilon/2^{r-\ell+1} \right]\geq 1-\epsilon/2^{r-\ell+1}, 
\]
where $k=O\left(\log^{(3)} M (\frac{\epsilon}{2^{r-\ell}})^{-2} \log(\frac{\log M}{\epsilon})\right)= O\left(4^{r-\ell}\log r \cdot \epsilon^{-3} \log(\frac{1}{\delta_{\ell-1}})\log\log(\frac{1}{\delta_{\ell-1}})\right)$. %\yfcomment{This expression is missing a factor of $4^{r-\ell}$.} \phcomment{Fixed it}
%\yfcomment{Replaced $I_S^{(\mathrm{bin}(i),b),1}$ with $I_S^{(\mathrm{bin}(i),b)}$ several times.}

Let $S$ be such that $I_S^{(\bin(i),b)}[h]\geq 1-(\epsilon/2^{r-\ell+1})$. This means that a set of bad players $S$ can use the first round to bias $h$ towards inputs $x$, for most of which, $T_i$ can be used to bias $g_x$ towards $b$. As a result, for such $S$, $I_{S\cup T_i}^b(g)\geq (1-\epsilon/2^{r-\ell+1})(1-\epsilon/2^{r-\ell+1})\geq 1-\epsilon/2^{r-\ell}$. The probability that the final set $T$ is equal to this specific $T_i$ is $1/M$. To sum up, we have proved
\begin{align*}
\Pr_{S\sim \mu_{r-\ell+1}^{(k)},T\sim \nu_{\ell-1}}\left[I_{S\cup T}^b[g] \geq 1-\frac{\epsilon}{2^{r-\ell}} \right]&\geq 
\Pr_{S\sim \mu_{r-\ell+1}^{(k)}} \left[ I_S^{(\mathrm{bin}(i),b)}[h]\geq 1-\frac{\epsilon}{2^{r-\ell+1}} \,\middle\vert\, E_T\right]\cdot \Pr[E_T]\cdot \Pr_{T}[T=T_i]\\ &\geq 
\frac{1}{M}\cdot \left(1-\frac{\epsilon}{2^{r-\ell+1}}\right)\cdot \Pr_{S\sim \mu_{r-\ell+1}^{(k)}} \left[ I_S^{(\mathrm{bin}(i),b)}[h]\geq 1-\frac{\epsilon}{2^{r-\ell+1}}  \,\middle\vert\, E_T\right]\\ &\geq 
\left(1-\frac{\epsilon}{2^{r-\ell+1}}\right)^2\cdot M^{-1}\\ &\geq \frac{\delta_{\ell-1}}{4\log(1/\epsilon)+4r}. \qedhere
\end{align*}
\end{proof}

Note that for $S\sim \mu_{r-\ell+1}^{(k)}$ we have $\Ex_{S}[|S|]\leq k\eta_{\ell}n$. Applying the Chernoff bound, choosing $\eta_\ell\leq \frac{1}{n}$ we have
\[
Pr_{S\sim \mu_{r-\ell+1}^{(k)}}[|S|\geq  2k\eta_\ell n] \leq e^{\frac{-(k\eta_\ell n)^2}{kn\eta_\ell(1-\eta_\ell)}} \leq \frac{\delta_{\ell-1}}{8\log(1/\epsilon)+8r},
\]
which follows from our choice of $k$.

We let $\nu_\ell$ be the distribution that samples $S\sim \mu_{r-\ell+1}^{(k)}$ conditioned on $|S|\leq 2k\eta_\ell n$ and $T\sim \nu_{\ell-1}$ and takes their union. It follows from the above Chernoff bound and Lemma~\ref{lem:multi-biased} that there exists $b\in \{0,1\}$ such that
\[
\Pr_{S\sim \nu_{\ell}}\bigl[I^b_S[g]\geq 1-\frac{\epsilon}{2^{r-\ell}}\bigr]\geq \frac{\delta_{\ell-1}}{8\log(1/\epsilon)+8r}.
\]
In particular, if  
\begin{equation}\label{cond1}
\delta_\ell\leq \frac{\delta_{\ell-1}}{8\log(1/\epsilon)+8r}	
\end{equation}
and 
\begin{equation}\label{cond2}
m_\ell\geq m_{\ell-1}+ \epsilon^{-3}\log(1/\delta_{\ell-1})\log\log(\delta_{\ell-1})4^{r-\ell}\log r\cdot \eta_{\ell}n,	
\end{equation}
the distribution $\nu_\ell$ satisfies the induction step. 

%\yfcomment{Why do we need the factor of $320$ in the definition of $k_\ell$? Perhaps it should be $(320 \cdot 4)^{r-\ell}$ instead of $320 \cdot 4^{r-\ell}$?}
We now verify these two conditions by recalling our choices of $\delta_\ell:= \Theta(\frac{1}{\log(1/\epsilon)^\ell\log^{(4r-4\ell)} n})$, $\eta_\ell:= \delta_{\ell-1}$,  $k_\ell:=O(\frac{ 4^{r-\ell} n}{ \log^{(4r)}n\cdot \epsilon^3} )$, and $m_\ell:= \ell \cdot k_\ell$. Now, (\ref{cond1}) follows from 
\[
\delta_{\ell+1} \leq \frac{\delta_{\ell}}{8\log(1/\epsilon)+ 8\log r},
\]
and (\ref{cond2}) holds because
\[
k_{\ell+1} \geq \epsilon^{-3}\cdot \log(\frac{1}{\delta_{\ell-1}}) \log\log(\delta_{\ell-1}) 4^{r-\ell+1}\log r\cdot \eta_{\ell} n. 
\]

In the case when the first round of $g$ has small bias, we apply the following lemma for the induction step. 

\begin{lemma}[Small bias]\label{lem:multi-unbiased}
Let $g\colon (\{0,1\}^n)^\ell \to \{0,1\}$ be an $\ell$-round coin-flipping protocol, and suppose that for each $i\in [\ell]$, the $i$-th round is endowed with the distribution $\mu_{r-\ell+i}$. Suppose that $\mu_{r-\ell+1}$ is small biased with parameter $\eta_\ell$. There is $b\in \{0,1\}$ such that for $k>\frac{n \log(1/\eta_{\ell})}{\delta_{\ell-1} \log n}$ it holds that
\[
\Pr_{\substack{S\subseteq [n]; |S|=k \\ T\sim \nu_{\ell-1}}}\left[I_{S\cup T}^b(g)\geq 1-\frac{\epsilon}{2^{r-\ell}} \right] \geq \frac{\delta_{\ell-1}}{2} \cdot \left(\frac{k}{2n \log(1/\eta_\ell)}\right)^{2^{\frac{320\cdot 2^{r-\ell}\cdot n\cdot \log(1/\eta_\ell)}{k\cdot \epsilon\cdot \delta_{\ell-1}}}}. 
\]
\end{lemma}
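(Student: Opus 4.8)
The plan is to mimic the two‑round reduction of Russell, Saks and Zuckerman sketched above, but to replace their single‑round KKL lemma by our biased version, Lemma~\ref{lem:single-unbiased}, and to be careful about \emph{which} target value $b$ we aim for. Write $g(x,y)$ with $x\in\{0,1\}^n$ the first‑round vector (distributed according to $\mu_{r-\ell+1}$, which is small biased with parameter $\eta_\ell$, so all of its marginals lie in $(\eta_\ell,\tfrac12]$ after the preprocessing in the two simplifying assumptions), and $y\in(\{0,1\}^n)^{\ell-1}$ the remaining rounds; set $g_x(y):=g(x,y)$, an $(\ell-1)$‑round protocol over $\mu_{r-\ell+2}\times\cdots\times\mu_r$.

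First I would pin down $b$. For each $x$, the induction hypothesis applied to $g_x$ yields a value $b_x\in\{0,1\}$ with $\Pr_{T\sim\nu_{\ell-1}}[I^{b_x}_T(g_x)\ge 1-\epsilon/2^{r-\ell+1}]\ge\delta_{\ell-1}$, and crucially $\nu_{\ell-1}$ does not depend on $x$. Writing $q_b(x)=\Pr_{T\sim\nu_{\ell-1}}[I^b_T(g_x)\ge 1-\epsilon/2^{r-\ell+1}]$ we have $q_0(x)+q_1(x)\ge q_{b_x}(x)\ge\delta_{\ell-1}$ for every $x$, so averaging over $x\sim\mu_{r-\ell+1}$ gives a fixed $b$ with $\Pr_{x,T}[I^b_T(g_x)\ge 1-\epsilon/2^{r-\ell+1}]\ge\delta_{\ell-1}/2$. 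A reverse‑Markov argument (if $X\in[0,1]$ and $\Ex X\ge a$ then $\Pr[X\ge a/2]\ge a/2$) applied to $X=\Ex_{x}[h_T(x)]$, where $h_T\colon\{0,1\}^n\to\{0,1\}$ is defined by $h_T(x)=1\iff I^b_T(g_x)\ge 1-\epsilon/2^{r-\ell+1}$, then shows that with probability at least $\delta_{\ell-1}/4$ over $T\sim\nu_{\ell-1}$ we have $\Ex_{x\sim\mu_{r-\ell+1}}[h_T]\ge\delta_{\ell-1}/4$; call such a $T$ \emph{good}.

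For a good $T$ I would invoke Lemma~\ref{lem:single-unbiased} on $h_T$ with $\alpha=\eta_\ell$ (legitimate, since $1/n<\eta_\ell=\delta_{\ell-1}\le\Ex[x_i]\le\tfrac12$) and $\gamma=\min(\delta_{\ell-1}/4,\ \epsilon/2^{r-\ell+1})$: because $k>\frac{n\log(1/\eta_\ell)}{\delta_{\ell-1}\log n}$ is large enough to satisfy that lemma's hypothesis $k>\frac{n\log(1/\eta_\ell)}{2\gamma\log n}$, it outputs $\Pr_{|S|=k}[I^1_S(h_T)\ge 1-\epsilon/2^{r-\ell+1}]\ge\frac12\bigl(\frac{k}{4n\log(1/\eta_\ell)}\bigr)^{2^{80n\log(1/\eta_\ell)/(k\gamma)}}$. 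The remaining point is to argue that, for a good $T$, $I^1_S(h_T)\ge 1-\epsilon/2^{r-\ell+1}$ implies $I^b_{S\cup T}(g)\ge 1-\epsilon/2^{r-\ell}$: the coalition $B=S\cup T$ uses its $S$‑coordinates in round $1$ to steer $x$ into $h_T^{-1}(1)$, which succeeds for a $(1-\epsilon/2^{r-\ell+1})$‑fraction of first‑round samples; once the realized first‑round vector $x'$ is revealed (the bad players see all first‑round broadcasts), it uses the $\nu_{\ell-1}$‑strategy of $T$ on the subprotocol $g_{x'}$, which by $h_T(x')=1$ succeeds with probability $\ge 1-\epsilon/2^{r-\ell+1}$; the two probabilities multiply to $\ge 1-\epsilon/2^{r-\ell}$. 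Finally, $\Pr_{S,T}[I^b_{S\cup T}(g)\ge 1-\epsilon/2^{r-\ell}]\ge\Pr_T[T\text{ good}]\cdot(\text{Lemma~\ref{lem:single-unbiased} bound})$, and using $\gamma\ge\epsilon\delta_{\ell-1}/2^{r-\ell+O(1)}$ to simplify constants gives the stated estimate.

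I expect the main obstacle to be the bookkeeping of the strategy composition across rounds: arguing cleanly that the round‑$1$ coalition $S$ and the rounds‑$2$‑through‑$\ell$ coalition $T$ can be run as a single coalition $S\cup T$ (they may overlap, and $T$'s strategy must be instantiated on the \emph{random} subprotocol $g_{x'}$ only after $x'$ is revealed), and that the two success probabilities genuinely multiply rather than merely combining by a union bound. Everything else is a reverse‑Markov/averaging argument together with a direct application of Lemma~\ref{lem:single-unbiased}, the only delicate quantitative check being that the allotted $k$ simultaneously meets that lemma's size requirement for the small parameter $\gamma=\min(\delta_{\ell-1}/4,\ \epsilon/2^{r-\ell+1})$ and yields a nonzero probability — both of which follow from $k>\frac{n\log(1/\eta_\ell)}{\delta_{\ell-1}\log n}$ together with the global choice $k_\ell=O\!\bigl(4^{r-\ell}n/(\epsilon^3\log^{(4r)}n)\bigr)$.
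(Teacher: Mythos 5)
Your proposal is correct and follows essentially the same route as the paper: apply the induction hypothesis to the subprotocols $g_x$, average/pigeonhole to fix a single target value $b$ and an $\Omega(\delta_{\ell-1})$-fraction of good $T\sim\nu_{\ell-1}$, define the indicator $h_T$ of first-round inputs that $T$ can handle, bias $h_T$ with a random size-$k$ coalition via Lemma~\ref{lem:single-unbiased}, and compose the two strategies so that the success probabilities multiply to $1-\epsilon/2^{r-\ell}$. The only deviations are constant-factor bookkeeping (you fix $b$ before the reverse-Markov step, and feed $\gamma=\min(\delta_{\ell-1}/4,\epsilon/2^{r-\ell+1})$ into Lemma~\ref{lem:single-unbiased}), which matches the paper's own level of precision and is harmless for the intended application in Theorem~\ref{thm:multiRoundDetailed}.
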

\begin{proof}
Similar to the proof of 	Lemma~\ref{lem:multi-biased}, we consider the functions $g_x(y):=g(x,y)$. 
Let $E_x$ be the event that at least one of $I_T^1(g_x)\geq 1-\epsilon/2^{r-\ell+1}$ or $I_T^0(g_x)\geq 1-\epsilon/2^{r-\ell+1}$ holds. By the induction hypothesis, for every $x\in \{0,1\}^n$,
\[
\Pr_{T\sim \nu_{\ell-1}}[E_x] \geq \delta_{\ell-1}. 
\]
Thus 
\[
\Exs_{T\sim \nu_{\ell-1}}\left[\Exs_{x\sim \mu_{r-\ell+1}}[1_{E_x}]\right] \geq \delta_{\ell-1}. 
\]
On the other hand,
\[
\Exs_{T\sim \nu_{\ell-1}}\left[\Exs_{x\sim \mu_{r-\ell+1}}[1_{E_x}]\right] \leq \frac{\delta_{\ell-1}}{2}+ \Pr_{T\sim \nu_{\ell-1}}\left[\Exs_{x\sim \mu_{r-\ell+1}}[1_{E_x}]\geq \frac{\delta_{\ell-1}}{2}\right],
\]
and thus 
\[
\Pr_{T\sim \nu_{\ell-1}}\left[\Pr_{x\sim \mu_{r-\ell+1}}[E_x]\geq \frac{\delta_{\ell-1}}{2}\right] \geq \frac{\delta_{\ell-1}}{2}.
\]
Recalling the definition of $E_x$, there exists $b\in \{0,1\}$ such that 
\[
\Pr_{T\sim \nu_{\ell-1}}\left[\Pr_{x\sim \mu_{r-\ell+1}}\left[I_T^b(g_x)\geq 1-\frac{\epsilon}{2^{r-\ell+1}}\right]\geq \frac{\delta_{\ell-1}}{4}\right] \geq \frac{\delta_{\ell-1}}{2}.
\]
Let $E_T$ denote the event that $\Pr_{x\sim \mu_{r-\ell+1}}\bigl[I_T^b(g_x)\geq 1-\frac{\epsilon}{2^{r-\ell+1}}\bigr]\geq \frac{\delta_{\ell-1}}{4}$. For a fixed $T$, define $h\colon \{0,1\}^n\to \{0,1\}$ as $h(x)=1$ if and only if $I_T^b(g_x)\geq 1-\frac{\epsilon}{2^{r-\ell+1}}$. 
Note that $\Ex_{x}[h(x)\vert E_T]\geq \delta_{\ell-1}/2$. Hence, by Lemma~\ref{lem:single-unbiased}, for $k>\frac{n \log(1/\eta_\ell)}{\delta_{\ell-1} \log n}$, assuming $E_T$ we have
\[
\Pr_{S\subseteq [n]; |S|=k}\left[I_S^{1}[h]\geq 1-\frac{\epsilon}{2^{r-\ell+1}}\right]\geq \frac{1}{2}\left(\frac{k}{4n\log(1/\eta_\ell)}\right)^{2^{\frac{320\cdot 2^{r-\ell}\cdot n\cdot \log(1/\eta_\ell)}{k\cdot \epsilon\cdot \delta_{\ell-1}}}}
\]
Now, note that whenever $h(x)=1$, the set $T$ is able to use the $y$ bits to bias $g(x,\cdot)$ towards $b$. In particular, we have shown
\begin{align*}
\Pr_{\substack{S\subseteq [n]; |S|=k \\ T\sim \nu_{\ell-1}}}\left[I_{S\cup T}^b(g)\geq 1-\frac{\epsilon}{2^{r-\ell}}\right] &\geq \Pr_{\substack{S\subseteq [n]; |S|=k \\ T\sim \nu_{\ell-1}}}\left[ E_T \wedge I_S^1[h]\geq 1-\frac{\epsilon}{2^{r-\ell+1}}\right]
\\ &= \Pr_{T\sim \nu_{\ell-1}}[E_T]\cdot \Pr_{S\subseteq [n]; |S|=k}\left[I_S^1(h)\geq 1-\frac{\epsilon}{2^{r-\ell+1}} \,\middle\vert\, E_T \right] \\ &\geq
\frac{\delta_{\ell-1}}{2} \cdot \left(\frac{k}{2n \log(1/\eta_\ell)}\right)^{2^{\frac{320\cdot 2^{r-\ell}\cdot n\cdot \log(1/\eta_\ell)}{k\cdot \epsilon\cdot \delta_{\ell-1}}}}. \qedhere
\end{align*}
\end{proof}

Hence, if $r-\ell+1$ is even, applying the above lemma provides the induction step as long as 
\begin{equation}\label{cond3}
\delta_\ell \leq \frac{\delta_{\ell-1}}{2} \cdot \left(\frac{k_\ell}{2n \log(1/\eta_\ell)}\right)^{2^{\frac{320\cdot 2^{r-\ell}\cdot n\cdot \log(1/\eta_\ell)}{k\cdot \epsilon\cdot \delta_{\ell-1}}}},
\end{equation}
and
\begin{equation}\label{cond4}
m_\ell \geq k_\ell+m_{\ell-1}.
\end{equation}

%We will set 
%\begin{itemize}
%\item $r\leq \log^\star n$
%\item $\eta_\ell:= \delta_{\ell-1}$
%\item $k_\ell:=O(\frac{320 \cdot 4^{r-\ell} n}{ \log^{(4r)}n\cdot \epsilon^3} )$
%\item $m_\ell:= \ell \cdot k_\ell$.
%\item $\delta_\ell:= \Theta(\frac{1}{\log(1/\epsilon)^\ell\log^{(4r-4\ell)} n})$
%\end{itemize}

Recall, again, our choices of $\delta_\ell:= \Theta(\frac{1}{\log(1/\epsilon)^\ell\log^{(4r-4\ell)} n})$, $\eta_\ell:= \delta_{\ell-1}$,  $k_\ell:=O(\frac{4^{r-\ell} n}{ \log^{(4r)}n\cdot \epsilon^3} )$, and $m_\ell:= m_{\ell-1} + k_\ell$. We see that for sufficiently large $n$, and every $\ell<r-1$,
\begin{align*}
\frac{\delta_{\ell}}{2} \cdot \left(\frac{k_\ell}{2n \log(1/\eta_\ell)}\right)^{2^{\frac{320\cdot 2^{r-\ell}\cdot n\cdot \log(1/\eta_\ell)}{k_\ell\cdot \epsilon\cdot \delta_{\ell-1}}}} &\geq 
\frac{\delta_\ell}{2} \left( \frac{160 \cdot 2^{r-\ell}}{\log^{(4r-4\ell+1)}n \cdot \log^{(4r)}n} \right)^{2^{\log^{(4r-4\ell)}n \cdot \log^{(4r-4\ell+1)}n\cdot \log^{(4r)}n}}\\  &\geq 
\frac{\delta_\ell}{2} \left( \frac{160 \cdot 2^{r-\ell}}{\log^{(4r-4\ell)}n } \right)^{2^{\log^{(4r-4\ell-1)}n}} \\ &\geq
\frac{\delta_\ell}{2} \cdot 2^{\log^{(4r -4\ell-2)}n} \\ &\geq \frac{\delta_\ell}{2} \cdot \frac{1}{\log^{(4r-4\ell -3)}n} 
\\ &\geq \delta_{\ell+1},
\end{align*}
implying (\ref{cond3}). Finally (\ref{cond4}), namely $m_{\ell+1}\geq m_\ell + k_{\ell+1}$ is immediate from our definition of $m_{\ell+1}$.
\end{proof}
\fi

\section{Concluding Remarks and Open Problems} \label{sec:conclusion}

\begin{itemize}
\item  
Perhaps the most interesting next step is proving limitations for resilience of protocols where players may send longer messages. As was discussed below Conjecture~\ref{conj:Friedgut}, it is conjectured that even when the players are allowed to broadcast arbitrarily long messages, only resilience against coalitions of size $o(n)$ is possible. This question has also been studied in the multi-round setting~\cite{RSZ02, RZ01, Feige99}. In this case, if the players are allowed $\log n$-bit messages, we know of $\left(\log^\star n+O(1)\right)$-round protocols resilient against coalitions of size $(1/2-\epsilon) n$~\cite{RZ01, Feige99}. On the other hand, Russell et al.~\cite{RSZ02}  showed that $\Omega(\log^\star n)$ rounds are necessary if we have the added restriction that in the $i$-th round the players are allowed messages of length $(\log^{(2i-1)} n)^{1-o(1)}$. Strengthening this impossibility result to messages of length $\Omega(\log n)$ is another interesting problem that remains open. 

\item 
The key qualitative point of Theorems~\ref{thm:singleRoundSimplified} and~\ref{thm:multiRoundSimplified} is that there always exists a coalition of size $o(n)$ that can bias the outcome of the protocol towards a particular value. Interestingly, we are not aware of a simpler proof of this weaker qualitative statement even in the case of the uniform measure.  The proof techniques introduced in this paper for the highly  biased coordinates are more combinatorial and probabilistic in nature; however, the less biased coordinates are ultimately handled by the Fourier-analytic proof of~\cite{KKL}. These Fourier analytic arguments are  \emph{hard} in nature, in the sense that their purpose is   to give effective bounds. It would be interesting to find more intuitive combinatorial proofs for these statements, potentially at the cost of obtaining less effective bounds, or by appealing to \emph{soft analytic tools} such as compactness, at the cost of obtaining no quantitative  bounds. We refer the reader to Terence Tao's blog post~\cite{Tao07} for a discussion about hard and soft analysis. 

\item 
Over the uniform distribution, Kahn et al.~\cite{KKL} proved that there exists no Boolean function that is $\epsilon$-resilient against coalitions of size $\omega_\epsilon\left(\frac{n}{\log n}\right)$. In this work we show that a similar bound of $\omega_\epsilon\left( \frac{n \log \log n}{\log n}\right)$ on resilience holds over arbitrary product distributions. A natural question is whether the $\log\log n$ in our bound necessary. However, even in the uniform setting there is work left to be done. Here, the best known constructions guarantee resilience against coalitions of size $O(\frac{n}{\log^2 n})$~\cite{Meka17, AL93}, which is a factor of $\log n$ off from the impossibility result of Kahn, Kalai, and Linial. 
\end{itemize}

%%%%%%%%%%%%%%%%%%%%%%%%%%%%%%%%%%%%%%%%%%%%%%%%%%%%%%%%%%%%%%

\bibliography{mybib}
\end{document}